\documentclass[journal]{IEEEtran}
            
\ifCLASSOPTIONcompsoc 
  \usepackage[caption=false,font=normalsize,labelfont=sf,textfont=sf]{subfig}
\else
  \usepackage[caption=false,font=footnotesize]{subfig}
\fi
   
\usepackage{cite}
\usepackage{amssymb}
\usepackage{mathrsfs}
\usepackage{amsmath}
\usepackage{graphicx}
\usepackage{array}
\usepackage{multirow}
\usepackage{color} 
\usepackage{xcolor}
\usepackage{epstopdf}
\usepackage{cite}
\usepackage{amsfonts}

\usepackage{subeqnarray}
\usepackage{cases}

\usepackage[twocolumn,letterpaper]{geometry}

\newtheorem{theorem}{\bf{Theorem}}

\newtheorem{condition}{\bf{Assumption}}

\newtheorem{lemma}{\bf{Lemma}}

\newtheorem{problem}{\bf{Problem}}

\newtheorem{remark}{\bf{Remark}}

\geometry{left=0.5in,right=0.5in,top=0.75in,bottom=0.75in}

\hyphenation{op-tical net-works semi-conduc-tor}

\begin{document} 

\title{\Large Fault-Tolerant Formation Tracking of Heterogeneous Multi-Agent Systems with Time-Varying Actuator Faults and Its Application to Task-Space Cooperative Tracking of  Manipulators} 
 
\author
{
Zhi~Feng and~Guoqiang~Hu 
\thanks{ This work was supported by Singapore Ministry of Education Academic Research Fund Tier 1 RG180/17 (2017-T1-002-158). 
%This work was supported by Singapore Economic Development Board under EIRP grant S14-1172-NRF EIRP-IHL. 
Z. Feng and G. Hu are with the School of Electrical and Electronic Engineering, Nanyang Technological University, Singapore 639798 (E-mail: zhifeng, gqhu@ntu.edu.sg).
Part of results was presented in the IEEE Conference on Decision and Control, Korea, 2020. 
%The material in this paper was not presented at any conference.
}
}

%\markboth{IEEE TRANSACTIONS ON XXXX, VOL. XX, NO. XX, PP. XX, 2016}
%{Shell \MakeLowercase{\textit{et al.}}: Bare Demo of IEEEtran.cls for Journals}

\maketitle

\begin{abstract}
%In existing consensus problems of multi-agent systems, both the amplitude and sign of control coefficients are either assumed to be known or unknown but with identical control direction for each agent. In this paper, the formation tracking problem is addressed for nonlinear multi-agent systems subject to unknown state-dependent coefficients with nonidentical signs and time-varying actuator faults. 
This paper addresses a formation tracking problem for nonlinear multi-agent systems with time-varying actuator faults, in which only a subset of agents has access to the leader's information over the directed leader-follower network with a spanning tree. Both the amplitudes and signs of control coefficients induced by actuator faults are unknown and time-varying. The aforementioned setting improves the practical relevance of the problem to be investigated, and meanwhile, it poses technical challenges to distributed controller design and asymptotic stability analysis. By introducing a distributed estimation and control framework, a novel distributed control law based on a Nussbaum gain technique is developed to achieve robust fault-tolerant formation tracking for heterogeneous nonlinear multi-agent systems with time-varying actuator faults. It can be proved that the asymptotic convergence is guaranteed. In addition, the proposed approach is applied to task-space cooperative tracking of networked manipulators irrespective of the uncertain kinematics, dynamics, and actuator faults. Numerical simulation results are presented to verify the effectiveness of the proposed designs.     
\end{abstract}

\vspace*{-3pt} 
\begin{IEEEkeywords}
Heterogeneous multi-agent system, Fault-tolerance, Formation tracking, Directed graph, Task-space manipulation.   
\end{IEEEkeywords}

\IEEEpeerreviewmaketitle

\vspace*{-10pt}
\section{Introduction}
\vspace*{-3pt}
In coordination of multi-agent systems, formation tracking has attracted considerable attention during the past decades due to its broad potential applications such as cooperative localization \cite{AI06TRO}, surveillance \cite{Nigam12TCST}, target enclosing \cite{Sun15AT}, source seeking and mapping, etc. Although several classic formation control strategies (e.g., the leader-follower, virtual structure, and behavior-based ones in \cite{Ahn15AT}) are proposed to drive the states of all the agents to form a desired configuration, consensus-based formation control frameworks are presented via the local neighboring relative interaction for agents with single-/double-integrator \cite{Lin16TAC}, high-order linear \cite{Dong17TAC, Dong16AT}, and nonholonomic \cite{Liu18TRO} dynamics. 
%Note that those aforementioned works may not deal with formation control problems of practical physical systems that are inherently nonlinear and unknown. 
Since many uncertainties and nonlinearities are unavoidable in the practical physical system, the approaches presented in aforementioned works cannot be directly applied to solve formation tracking issues for unknown nonlinear multi-agent systems. On the other hand, the increasingly equipped actuators, sensors, and other components of  multi-agent systems are inevitably subject to various faults that induce interruptions and lead to performance degradation or even instability. Recently, there have been some works reported on fault-tolerant consensus. Float actuator faults were considered in \cite{Wen14AT,Jiang13FS,Yang16IAC,Zhao20Tcyber}, while loss of effectiveness faults were studied in \cite{Wen16AT,Gang16Tcyber,Lewis15IE}. However, uniformly ultimately bounded (UUB) results are obtained for various faults. 
%To the best of our knowledge, how to design distributed algorithms to achieve fault-tolerant formation of nonlinear multi-agent systems is challenging and still open.
Overall, how to design algorithms to have fault-tolerant formation tracking of nonlinear multi-agent systems is challenging, and to the best of our knowledge, is still open.   

One common feature in aforementioned works assumes that the amplitudes and signs of control coefficients are known a priori for each agent. However, for nonlinear multi-agent systems subject to time-varying actuator faults, the control coefficient of each agent usually becomes unknown and time-varying \cite{Hu19AT}. Moreover, both their amplitudes and signs might not be available  a priori in many applications \cite{19ATChen}. For example, position of each robot influences its own controlling effect in the robot formation tracking problem. To handle this issue, there are mainly three systematic approaches: the switching detection method \cite{16KanACC}, the nonlinear proportional-integral (PI) scheme \cite{17HarisTAC,19WangTAC}, and the Nussbaum gain technique \cite{14TACWen1,Chen17TAC,Fan19TAC,15TACSu,Huang17TAC}. In particular, the switching mechanism is proposed in \cite{16KanACC} to deal with the unknown control sign, where the nonsmooth design may bring undesired chattering behaviors. The authors in \cite{17HarisTAC,19WangTAC} develops nonlinear PI schemes to solve this problem for a networked of single- and double-integrator agents over strongly connected graphs, 
%The extension to consider unbalanced switching topologies was provided in \cite{19WangTAC}. 
where the design relies on the assumption that the control coefficients have to be constant. 
%The Nussbaum gain technique, as an efficient tool to handle the unknown control sign, is employed in \cite{14TACWen1,15ATDing,Chen17TAC,Fan19TAC,15TACSu,15TACLiu,Huang17TAC}. 
A class of Nussbaum-type functions are proposed in \cite{14TACWen1} to achieve consensus for first- and second-order systems with constant coefficients and identical signs. 
%The author in \cite{15TACLiu} adopted these Nussbaum functions to address a cooperative control problem for nonlinear multi-agent systems. 
%The proposed algorithms required control coefficients of each agent to be constant with identical control signs. 
The piecewise Nussbaum function is developed in \cite{Chen17TAC} to allow nonidentical but partially unknown control signs. Moreover, the design is extended in \cite{Fan19TAC} to study output-constrained consensus with partially unknown control signs. Distributed cooperative output regulations are studied in \cite{15TACSu,Huang17TAC,15TACLiu}, where internal model schemes are used to allow each agent to have constant coefficients under connected and undirected graphs. 
%enable each agent to have its own control sign. 
%based on the solution of regulated equations.  
%However, it assumed that there always exists an exosystem that is linear and marginally stable for each agent to track. Furthermore, the existence of solutions relied on solving the regulated equations that were based on known linear system  matrices.   

%The major difficulties in incorporating a Nussbaum gain technique for coordination of multi-agent systems include: a) unlike a single system, each agent has its own unknown control sign which makes the team coexist  multiple interacted Nussbaum functions. That is, the Nussbaum  parameters for each agent could move in different directions. Thus, existing Nussbaum functions used for a single system cannot be directly used to achieve coordination; 2) due to unknown and state-dependent control coefficients induced by time-varying actuator faults,  existing consensus designs are inapplicable. Moreover, not only the control law but the parameter updating law have to be in a distributed way; 3) directed graphs make it difficult to implement the Nussbaum gain technique for multi-agent systems as the information exchange matrix may not be symmetric, which brings more challenges to stability analysis.          

%\subsection{Main Contributions}
%\vspace*{-3pt}
%\textbf{\textit{Contributions}}: 
This work focuses on the fault-tolerant formation tracking research of heterogeneous nonlinear multi-agent systems with time-varying actuator faults over the directed leader-follower network. In particular, the distributed estimation and control framework is developed to solve this fault-tolerant formation tracking problem.  
%Specifically, distributed estimators are developed for a nonlinear leader system so that their outputs can estimate the leader's states in a finite time. Each agent will thus use its associated estimate as the local reference. Consequently, the addressed problem can be transformed into a simultaneous tracking control problem. Then, distributed adaptive control laws based on this local reference are developed for zero-error tracking. 
%The Nussbaum gains adopted in the controller do not rely on the neighboring information, and thus are decoupled to allow each agent to have nonidentical signs. 
The main contributions of this paper are summarized as follows. \textbf{(a)} Asymptotic fault-tolerant formation tracking is achieved under the proposed distributed estimation and control algorithm over a directed graph with a spanning tree. To the best of our knowledge, this paper is the first attempt to solve this issue. 
%It is more practical and challenging than existing related works in threefold.

\vspace*{-2pt}
\begin{itemize}
\item In contrast to existing formation/consensus works in \cite{Lin16TAC,Dong17TAC, Dong16AT,Liu18TRO,Wen14AT,Jiang13FS,Yang16IAC,Zhao20Tcyber,Wen16AT,Gang16Tcyber,Lewis15IE} with known and constant control coefficients, the presence of time-varying actuator faults in nonlinear multi-agent systems makes this control coefficient  time-varying with completely unknown nonidentical signs.

\item Although works in \cite{14TACWen1,Chen17TAC,Fan19TAC,15TACSu,15TACLiu,Huang17TAC} also adopt the Nussbaum gain technique to achieve consensus, the designs require constant control coefficients in \cite{14TACWen1,15TACSu,Huang17TAC,15TACLiu} or dynamic coefficients but with partially known signs in \cite{Chen17TAC,Fan19TAC}. 

\item Unlike aforementioned leaderless consensus works, we consider a directed leader-follower network to have fault-tolerant formation tracking asymptotically. The proposed distributed algorithm does not need the upper bound of faults as required in \cite{Wen14AT,Jiang13FS,Yang16IAC,Zhao20Tcyber,Wen16AT,Gang16Tcyber,Lewis15IE} where the UUB consensus is obtained. 
%and asymptotic convergence is achieved rather than UUB results in \cite{Jiang13FS,Yang16IAC,Wen16AT,Gang16Tcyber,Lewis15IE}. 
\end{itemize}
%a) Asymptotic fault-tolerant formation tracking is achieved for heterogeneous nonlinear multi-agent systems with time-varying actuator faults. This problem is more practical and complicated than related works in two major aspects. Firstly, unlike formation/consensus works in \cite{Lin16TAC,Dong17TAC,Wen14AT,Jiang13FS,Yang16IAC,Wen16AT,Gang16Tcyber,Lewis15IE,Hu19AT} for first- and second-order agent systems, in this paper the involved control coefficients of nonlinear systems with actuator faults become time-varying with completely unknown nonidentical signs. Secondly, although the most related works in \cite{14TACWen1,Chen17TAC,Fan19TAC} adopt a Nussbaum gain technique to achieve leaderless consensus, their designs are applicable to constant control coefficients with identical signs \cite{14TACWen1} or time-varying coefficients but with partially unknown nonidentical signs \cite{Chen17TAC,Fan19TAC}. In contrast, we present distributed estimation and control algorithms to achieve leader-following formation tracking asymptotically, allowing completely unknown and nonidentical signs. The development of distributed algorithm does not need the upper bounds of faults as required in \cite{Jiang13FS,Yang16IAC,Wen16AT,Gang16Tcyber,Lewis15IE}. 
%where only UUB consensus tracking results are obtained. 
\textbf{(b)} Unlike existing works over undirected or strongly connected graphs, the developed framework enables agents to communicate over the directed graph with a spanning tree, which is nontrivial. Due to the time-varying control coefficients with completely unknown nonidentical signs in the directed leader-follower network, existing designs 
%in \cite{14TACWen1,Chen17TAC,Fan19TAC,15TACSu,15TACLiu,Huang17TAC} 
cannot be directly applied. 

\vspace{2pt}
\hspace{-1.2em}
\textbf{(c)} Although the preliminary version of this paper was presented in \cite{Hu20CDC}, this paper unifies and generalizes our conference contributions in \cite{Hu20CDC}. In particular, the proposed distributed algorithm is applied to solve the fault-tolerant task-space coordinated tracking problem for manipulators. 
%The robots are nonlinear mechanical systems irrespective of uncertain kinematics, dynamics, and faults, which brings more difficulties in designing distributed controllers. 
As compared to most related works in \cite{WangTAC,Zhang17Tcyber,LiuTRO,WangAT,LiangTcyber,He20TNNLS}, the proposed algorithm can bring several advantages. Specifically, the proposed schemes in \cite{Zhang17Tcyber,WangAT,LiuTRO,WangTAC,LiangTcyber,He20TNNLS} require the global task reference for each robot to accomplish cooperative tasks. 
%This centralized strategy is undesirable and lacks robustness. 
On the contrary, the distributed framework is developed to guarantee task-space coordination. The directed graph with a spanning tree is more general. 
%than  \cite{Zhang17Tcyber,WangAT,LiuTRO,WangTAC,LiangTcyber}. 
Moreover, the proposed design does not require the upper bounds of unknown faults/uncertainties.
%, which is beneficial to precision control and prolong operational life-span of robots. 
%Overall, the design makes implementation of algorithms more practical and feasible. 

\vspace*{3pt}
This paper is organized as follows. Section II gives the problem formulation. The distributed estimation and control framework is developed in Section III to provide the main results. Section IV presents its application to task-space coordinated tracking control for networked manipulators. Simulation results are presented in Section V followed by the conclusions provided in Section VI.  

%\vspace*{2pt}
%This paper is organized as follows. Section II gives the problem formulation. The designs are shown in Section III, while Section IV shows the application to task-space manipulation. Simulations and conclusions are given in Sections VI and V, respectively. 

\vspace*{-5pt}
\section{Preliminaries and Problem Formulation}
\vspace*{-1pt}
\subsection{Notation \label{Notation}}
\vspace*{-2pt}
%$\mathbb{R}$ and $\mathbb{R}^{N\times N}$ represent the sets of reals and $N\times N$ matrices, respectively.
Let $0_{N}$ ($1_{N}$) be the $N\times 1$ vector with all the zeros (ones). Let col$(x_{1},...,x_{N})$ and diag$\{a_{1},...,a_{N}\}$ be the column vector with entries $x_{i}$ and the diagonal matrix with entries $a_{i}$, $i=1,\cdots,N$, respectively.  $\otimes $ and $\left\Vert \cdot \right\Vert $ represent the Kronecker product and the Euclidean norm, respectively. For $ x_{i} \in \mathbb{R}^{n} $, define $ \text{sig}^{\theta}(x_{i})=\text{col}(\text{sig}^{\theta}(x_{i1}),\cdots,\text{sig}^{\theta}(x_{in})) $, where $ \text{sig}^{\theta}(x_{ik})=\text{sgn}(x_{ik})|x_{ik}|^{\theta} $, $ k=1,\cdots,n $,  $\text{sgn}(x_{ik})$ is a signum function, and $ 0<\theta<1$. For a real symmetric matrix $M$, $ M>0 $ means that it is positive definite. Moreover, $\lambda _{\min }(M)$ and $\lambda _{\max }(M)$ are its minimum and maximum eigenvalues, respectively.   

\vspace*{-6pt}
\subsection{Graph Theory\label{Graph theory}}
\vspace*{-2pt}
Let $\mathcal{G}$ $=$ $\left\{ \mathcal{V},\mathcal{E}\right\} $ represent a digraph, where the set of vertices is defined as $\mathcal{V}$ $\in $ $\left\{ 1,...,N\right\} $, and the set of edges is $\mathcal{E}$ $\subseteq$ $ \mathcal{V\times V}$. 
%In this paper, we assume that there is no self loops in the graph, that is, $(i,i)\notin \mathcal{E}$. 
$\mathcal{N}_{i}= \left\{ j\in \mathcal{V} | (j,i)\in \mathcal{E}\right\} $ denotes the neighborhood set of vertex $i$. For a directed graph $\mathcal{G}$, $(i,j)\in \mathcal{E}$ means that the
information of node $ i $ is accessible to node $ j $, but not conversely. $A$ $=$ $\left[ a_{ij}\right] $ is the adjacency matrix, where $a_{ij}>0$ if $(j,i)\in \mathcal{E}$, else $a_{ij}=0$. A matrix $\mathcal{L}$ $\triangleq $ $D-A$ is called the Laplacian matrix, where $D=\left[ d_{ii}\right] $ is a diagonal matrix with $d_{ii}=\sum\nolimits_{j=1}^{N}a_{ij}
$. Let $\mathcal{\bar{G}} = (\mathcal{\bar{V}}, \mathcal{\bar{E}}) $ be a directed graph of a leader-follower network, where  $\bar{\mathcal{E}} \subseteq \bar{\mathcal{V}} \times \bar{\mathcal{V}} $, \\ $ \mathcal{\bar{V}}=\{0,\cdots,N\} $, and the node $ 0 $ is associated with the leader. 
%Then, $ \bar{\mathcal{N}}_{i}= \{ j\neq i, (j, i)\in  \mathcal{\bar{E}} \} $ denotes the neighbor set of node $ i $. 
Clearly, $\mathcal{G}$ is a subgraph of $\mathcal{\bar{G}}$, where $ \mathcal{E} $ is obtained from $ \mathcal{\bar{E}}$ by removing all the edges between the node $0$ and the nodes in $ \bar{\mathcal{V}} $. Define the Laplacian matrix of $\mathcal{\bar{G}}$ as $\mathcal{\bar{L}}= [0, 0^{T}_{N}; -\mathcal{B}1_{N}, H]$
% $\mathcal{\bar{L}}=\left( \begin{array}{cc} 0_{1\times 1} & 0_{1\times N} \\
%-\mathcal{B}1_{N} & H \end{array} \right),$ 
where $\mathcal{B}$ is a diagonal matrix with its $i$-th diagonal element being $a_{i0}$, 
(similarly, $a_{i0}>0$, if $(0,i) \in \mathcal{\bar{E}}$, and $a_{i0}=0$, otherwise), 
and $ H=[h_{ij}] \triangleq \mathcal{L}+\mathcal{B} $ is an information exchange matrix.

%\vspace*{-8pt}
\subsection{Problem Formulation}
Consider a class of heterogeneous nonlinear multi-agent systems consisting of $ N $ followers labeled by agents $ 1, 2, \cdots, N $ and one leader labeled by agent $ 0 $. The dynamics of follower $ i $ are    
\vspace*{-3pt}
%\begin{align}
%\dot{x}_{i,k}&=x_{i,k+1}, \ k=1,2,\cdots,m-1, \notag \\
%\dot{x}_{i,m}&=f^{T}_{i,m}(x_{i},t) \theta_{i} +g_{i}(x_{i},t)u_{i}+d_{i,n}(x_{i},t), \notag \\ 
%y_{i}&=x_{i,1}, \ i=1,2,\cdots, N,
%\end{align}
\begin{equation} \label{SystemModelFollower}
\left\{ 
\begin{array}{l}
\hspace{-0.3em} \dot{x}_{i,k} = x_{i,k+1}, \ k=1,2,\cdots,m-1,  \\
\hspace{-0.5em} \dot{x}_{i,m} = f^{T}_{i,m}(x_{i}) \theta_{i} +g_{i,m}(x_{i})u_{ai}+d_{i,m}(x_{i},t),  \\ 
\hspace{-0.5em} \ \ \ y_{i} = x_{i,1}, \ i=1,2,\cdots, N,
\end{array}
\right. 
\end{equation} 
where $ x_{i}=\text{col}(x_{i,1},\cdots,x_{i,m}) \in \mathbb{R}^{nm}$ is the state vector with $ x_{i,k} \in \mathbb{R}^{n}, k=1,\cdots,m $, $ m $ is the system order, $ n $ is the system dimension, $ u_{ai} \in \mathbb{R}^{n} $ denotes a control input with actuator
faults, $ y_{i} \in \mathbb{R}^{n} $ is the system output, $ \theta_{i} \in \mathbb{R}^{r} $ is the unknown constant parameter, $ f_{i,m}: \mathbb{R}^{nm}  \rightarrow \mathbb{R}^{r \times n} $ is the known nonlinear function, $ g_{i,m}: \mathbb{R}^{nm} \rightarrow \mathbb{R} $ is the unknown coefficient, and $ d_{i,m}: \mathbb{R}^{nm} \rightarrow \mathbb{R}^{n} $ are  uncertainties/disturbances that can be upper bounded by certain unknown constants. 
%$ f_{i,m}$ and $g_{i,m} $ are locally Lipschitz with respect to $ x_{i} $ uniformly in $ t $. 
The sign of $g_{i,m}$ is unknown. 
%and different for each agent $ i $.  
 
The leader evolves with the following dynamics
% with an unknown input function described by
\vspace*{-3pt}
%\begin{align}
%\dot{x}_{0,k}&=x_{0,k+1}, \ k=1,2,\cdots,m-1, \notag \\
%\dot{x}_{0,m}&=h_{i}(x_{i},t)u_{i}, \notag \\ 
%y_{0}&=x_{0,1},  
%\end{align}
\begin{equation} \label{SystemModelLeader}
\left\{ 
\begin{array}{l}
\hspace{-0.3em} \dot{x}_{0,k} = x_{0,k+1}, \ k=1,2,\cdots,m-1,  \\
\hspace{-0.5em} \dot{x}_{0,m} = o_{0,m}(x_{0},u_{0}),   \ y_{0} = x_{0,1},  
\end{array}
\right. 
\end{equation}
where $ x_{0}=\text{col}(x_{0,1},\cdots,x_{0,m}) \in \mathbb{R}^{nm}$ is the state vector with $ x_{0,k} \in \mathbb{R}^{n}, k=1,2,\cdots,m $, $ u_{0} \in \mathbb{R}^{n} $ is the control input of the leader, $ y_{0} \in \mathbb{R}^{n} $ is the leader's output, and $ o_{0,m}: \mathbb{R}^{nm} \times \mathbb{R}^{n} \rightarrow \mathbb{R}^{n} $ is an unknown and nonlinear input function.

In contrast to many existing works in distributed coordination based on healthy actuation of multi-agent systems, actuators with undetectable faults are considered. When actuation faults occur, there exists the  discrepancy between the actual control input $ u_{ai} $  and the designed control input $ u_{i} $ of the $ i $-th actuator. Thus, two types of actuator faults that may take place, are modeled as
\vspace*{-3pt}
\begin{equation}
u_{ai}= \phi_{i}(t) u_{i} + \psi_{i}(t), \ i=1,2,\cdots,N,  \label{Fault}
\end{equation}%
where $ u_{i} \in \mathbb{R}^{n} $, and $ \phi_{i}(t) \in \mathbb{R} $, $ \psi_{i}(t) \in \mathbb{R}^{n} $ denote the actuation loss of effectiveness fault and the float fault, respectively. 

\vspace*{2pt}
\begin{remark}
The multi-agent model in (\ref{SystemModelFollower}) involves the heterogeneous nonlinear dynamics and intrinsic unknown parameters. 
%, which can cover linear and nonlinear agent models as special cases. 
%, and results in this work are applicable for a wide field of practical applications. 
The system in (\ref{SystemModelLeader}) represents a class of non-autonomous leaders with an unknown and nonlinear input function. The fault formulation in (\ref{Fault}) describes a more generalized form that captures more failure processes than those in \cite{Wen14AT,Jiang13FS,Yang16IAC,Zhao20Tcyber,Wen16AT,Gang16Tcyber,Lewis15IE}, where either float faults or loss of effectiveness actuation faults are considered. It follows from (\ref{SystemModelFollower}) and (\ref{Fault}) that the control coefficients are thus time-varying with completely unknown nonidentical signs.  
%In those works, the UUB convergence results are obtained for either known dynamic or unknown but constant actuator faults.  
\end{remark}

As aforementioned, the main technical challenges of this paper lie in handling the time-varying unknown control coefficient with nonidentical signs for nonlinear multi-agent systems (\ref{SystemModelFollower})-(\ref{SystemModelLeader}) under time-varying actuator faults  (\ref{Fault}) over the directed graph. 

The problem of this paper is stated as follows.
 
\begin{problem} \label{Problem}
Given the nonlinear multi-agent system composed of (\ref{SystemModelFollower})-(\ref{Fault}) and a directed graph $ \bar{\mathcal{G}} $, design a distributed control law $ u_{i} $ for an auxiliary state $ \hat{x}_{i} $ and two smooth functions $ h_{i1}, h_{i2} $,  
\vspace*{-3pt}
\begin{equation}
u_{i}=h_{i1}(x_{i}, \hat{x}_{i}),  \  \dot{\hat{x}}_{i}=h_{i2}(\hat{x}_{i}, \hat{x}_{j}), i \in \mathcal{V}, j \in \bar{\mathcal{N}}_{i},   \label{ControlLaw}
\end{equation}%
such that for each agent $ i $, it can track the nonlinear leader while maintaining a prescribed formation pattern in the sense that 
\vspace*{-3pt}
\begin{equation}
\underset{t\rightarrow \infty}{\text{lim}}  
(y_{i}(t)-y_{0}(t)) = \varDelta_{i} \ \text{and} \ \underset{t\rightarrow \infty }{\text{lim}}  (x_{i,k}(t)-x_{0,k}(t)) = 0_{n}, \label{Problem1}
\end{equation}% 
where $ k=2,\cdots,m $ and $ \varDelta_{i} \in \mathbb{R}^{n} $ is a desired formation offset.     
\end{problem}

To solve Problem 1, the following assumptions and some useful lemmas are introduced to facilitate the development of distributed algorithm and Lyapunov stability analysis. 
%\vspace*{2pt}
%To solve Problem 1, the following assumptions are made.

\vspace*{2pt}
\begin{condition} \label{FollwerGains}
The sign of $ g_{i,m} \neq 0 $ in (\ref{SystemModelFollower}) is unknown, but there exist two continuous positive functions  
$ g^{-}_{i,m}(x_{i})$, $g^{+}_{i,m}(x_{i}) $ so that $ g^{-}_{i,m}(x_{i}) \leq |g_{i,m}| \leq  g^{+}_{i,m}(x_{i})$. 
%for $\forall x_{i} $ and $ t \geq 0 $. 
\end{condition}

\begin{condition} \label{FaultAssumption}
The faults $ \phi_{i}(t) $ in (\ref{Fault}) satisfy: $ 0<\phi_{i}(t) \leq 1 $ and $ \psi_{i}(t) $ is bounded by certain unknown constants.
\end{condition}

\begin{condition} \label{LeaderAssumption}
The leader's input $ o_{0,m}(x_{0},u_{0}) $ in (\ref{SystemModelLeader}) and its time derivative are bounded by certain unknown constants. Only $ x_{0,k}$,  $k=1,\cdots,m $ are available to a subset of followers.  
\end{condition}

\begin{condition} \label{CommunicaitonAssumption}
The directed graph $\mathcal{\bar{G}}$ contains a  spanning tree with the leader $ 0 $ being the root. 
\end{condition}

%\vspace*{2pt} 
\begin{remark}
Assumption \ref{FollwerGains} implies that the sign of coefficients is completely unknown, and allowed to be different for each agent, which is a much weaker assumption compared to existing related works in   \cite{14TACWen1,Chen17TAC,Fan19TAC,15TACSu,15TACLiu,Huang17TAC} considering constant/time-varying coefficients with known or partially known signs. Assumption \ref{FaultAssumption} is adopted for the robust design and is conventional in a single agent system with actuator faults. Assumption \ref{LeaderAssumption} is widely used in existing works for a nonlinear leader system. 
%, which includes marginally stable linear leader systems in \cite{15TACSu,Huang17TAC,Cai16TAC,Liu17AT,Hu16TCNS1} as special cases. 
Assumption \ref{CommunicaitonAssumption} 
%means that there is at least one directed path from the leader to followers. 
is a standard assumption for consensus works in the existing literature. 
%which is made to guarantee the existence of a distributed estimator and achieve the zero-error coordinated tracking in finite-time.  
\end{remark} 

%\vspace*{2pt}
%Some useful lemmas are introduced to facilitate the analysis.

%\begin{lemma}\label{inequality}   \cite{Zhao16AT} 
%	For real numbers $ a, b, g >0$, and $ p, q>1 $ with $ \frac{1}{p}+\frac{1}{q}=1 $, then $ a b \leq  g^{p} \frac{a^{p}}{p} + g^{-q} \frac{b^{q}}{q}$. Furthermore, if $ 0<p<q $, then $ (a^{q}+b^{q})^{\frac{1}{q}} \leq (a^{p}+b^{p})^{\frac{1}{p}} $.   	
%\end{lemma}

\vspace*{2pt}
\begin{lemma} \label{young} \cite{Wen14AT}
Let $ \xi_{1},\xi_{2},\cdots, \xi_{N}\geq 0 $. Then, we can have $ (\sum_{i=1}^{N} \xi_{i} )^{p} \leq \sum_{i=1}^{N}\xi^{p}_{i}  \leq N^{1-p} (\sum_{i=1}^{N} \xi_{i} )^{p} $ for $ 0<p\leq 1$, and  $N^{1-p} ( \sum_{i=1}^{N} \xi_{i} )^{p} \leq \sum_{i=1}^{N}\xi^{p}_{i} \leq (\sum_{i=1}^{N} \xi_{i} )^{p}$ for $p>1$.	
	%	\vspace*{-4pt}
	%	\begin{align}
	%	\begin{cases} 
	%	( \sum_{i=1}^{N} \xi_{i} )^{p} \leq \sum_{i=1}^{N}\xi^{p}_{i} \leq N^{1-p} ( \sum_{i=1}^{N} \xi_{i} )^{p}, \ \text{if} \ 0<p\leq 1, \\
	%	\sum_{i=1}^{N}\xi^{p}_{i} \geq N^{1-p} ( \sum_{i=1}^{N} \xi_{i} )^{p}, \ \text{if} \ p\geq 1,
	%	\label{lemma1}
	%	\end{cases} 
	%	\end{align}	
\end{lemma}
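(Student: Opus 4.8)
The plan is to prove the two chains of inequalities separately, using only the concavity of $t\mapsto t^{p}$ on $[0,\infty)$ when $0<p\le 1$ and its convexity when $p>1$; the degenerate case $\sum_{i=1}^{N}\xi_{i}=0$ (all $\xi_{i}=0$) makes every stated inequality an equality, so I would dispose of it first and assume $S\triangleq\sum_{i=1}^{N}\xi_{i}>0$ thereafter, working with the normalized quantities $\lambda_{i}\triangleq \xi_{i}/S\in[0,1]$.

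First consider $0<p\le 1$. For the left-hand inequality $(\sum_{i}\xi_{i})^{p}\le\sum_{i}\xi_{i}^{p}$, I would record the elementary fact that $\lambda^{p}\ge\lambda$ for $\lambda\in[0,1]$ when $p\le 1$; summing $\lambda_{i}^{p}\ge\lambda_{i}$ over $i$ gives $S^{-p}\sum_{i}\xi_{i}^{p}\ge\sum_{i}\lambda_{i}=1$, which rearranges to the claim. (Equivalently, this is subadditivity of $t^{p}$, provable by induction from the $N=2$ case.) For the right-hand inequality $\sum_{i}\xi_{i}^{p}\le N^{1-p}(\sum_{i}\xi_{i})^{p}$, I would apply Jensen's inequality to the concave map $t\mapsto t^{p}$ with uniform weights $1/N$, namely $\frac1N\sum_{i}\xi_{i}^{p}\le\bigl(\frac1N\sum_{i}\xi_{i}\bigr)^{p}$, and multiply through by $N$, noting $N\cdot N^{-p}=N^{1-p}$.

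For $p>1$ the map $t\mapsto t^{p}$ is convex and superadditive, so both estimates reverse: from $\lambda^{p}\le\lambda$ on $[0,1]$ one gets $\sum_{i}\xi_{i}^{p}\le(\sum_{i}\xi_{i})^{p}$, and Jensen (with the inequality now pointing the other way) yields $N^{1-p}(\sum_{i}\xi_{i})^{p}\le\sum_{i}\xi_{i}^{p}$. I do not expect any genuine obstacle here: the result is the standard power-mean inequality, the citation to \cite{Wen14AT} is for convenience, and the only points needing a word of care are the zero case and the legitimacy of the normalization $\lambda_{i}=\xi_{i}/S$, which is immediate once $S>0$.
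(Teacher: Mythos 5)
Your proof is correct. The paper itself gives no argument for this lemma --- it is stated with a citation to \cite{Wen14AT} and used as a black box --- so there is no in-paper proof to compare against. Your treatment is the standard one: dispose of the degenerate case $S=\sum_i\xi_i=0$, normalize to $\lambda_i=\xi_i/S\in[0,1]$ with $\sum_i\lambda_i=1$, use $\lambda^p\ge\lambda$ (resp.\ $\le$) on $[0,1]$ for the subadditivity/superadditivity half, and Jensen with uniform weights $1/N$ for the $N^{1-p}$ half. All four inequalities check out, the sign reversals for $p>1$ are handled consistently, and the only delicate points (the zero case and the legitimacy of normalization) are the ones you flag. Nothing further is needed.
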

 
\vspace*{2pt}
\begin{lemma}
	\label{DirectedGraphTheta}\cite{Hu16TCNS1} Under Assumption \ref{CommunicaitonAssumption}, $H$ is positive definite, and there exists a positive diagonal matrix $\Pi= \text{diag} \{\pi_{1},\pi_{2}, \cdots ,$ $\pi_{N}\}$ such that $\Xi =(\Pi H+H^{T}\Pi)/2 $ is symmetric and positive definite, where $\pi = \text{col} (\pi_{1}, \pi_{2},\cdots, \pi_{N}) =(H^{T})^{-1}1_{N}$.
\end{lemma}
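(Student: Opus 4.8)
I will establish the three claims in turn: $H$ is nonsingular (a nonsingular $M$-matrix -- the sense in which $H$ is ``positive definite'' for a directed graph), the vector $\pi=(H^{T})^{-1}1_{N}$ is entrywise positive, and $\Xi=(\Pi H+H^{T}\Pi)/2$ is symmetric positive definite. Since $H=\mathcal{L}+\mathcal{B}$, its off-diagonal entries are $h_{ij}=-a_{ij}\le 0$, so $H$ is a $Z$-matrix. Reading off the block form $\bar{\mathcal{L}}=[0,\,0_{N}^{T};\,-\mathcal{B}1_{N},\,H]$, the spectrum of $\bar{\mathcal{L}}$ is $\{0\}\cup\sigma(H)$; under Assumption~\ref{CommunicaitonAssumption} the Laplacian of a digraph with a spanning tree rooted at node $0$ has $0$ as a simple eigenvalue and all remaining eigenvalues having positive real part, so every eigenvalue of $H$ has positive real part and $H$ is invertible. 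A $Z$-matrix whose spectrum lies in the open right half-plane is a nonsingular $M$-matrix, hence $H^{-1}\ge 0$ entrywise; applying the same to $H^{T}$ (same off-diagonal sign pattern, same spectrum) gives $(H^{T})^{-1}\ge 0$. In particular $h_{ii}>0$ for each $i$, else row $i$ of $H$ would vanish, contradicting invertibility.

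\textbf{Positivity of $\pi$.} Writing $\pi=(H^{T})^{-1}1_{N}=\text{col}(\pi_{1},\cdots,\pi_{N})$, each $\pi_{i}$ is the sum of the $i$-th row of the nonnegative matrix $(H^{T})^{-1}$, so $\pi_{i}\ge 0$; if $\pi_{i}=0$ then that entire row is zero, making $(H^{T})^{-1}$ singular, which is impossible. Hence $\pi>0$ and $\Pi=\text{diag}\{\pi_{1},\cdots,\pi_{N}\}$ is a legitimate positive diagonal matrix.

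\textbf{Positive definiteness of $\Xi$.} Set $M=2\Xi=\Pi H+H^{T}\Pi$, which is symmetric. For $i\ne j$, $M_{ij}=\pi_{i}h_{ij}+\pi_{j}h_{ji}\le 0$ since $\pi>0$ and $H$ has nonpositive off-diagonal entries. Using $\mathcal{L}1_{N}=0_{N}$ (so $H1_{N}=\mathcal{B}1_{N}$), $\Pi 1_{N}=\pi$, and $H^{T}\pi=1_{N}$, the row sums satisfy $M1_{N}=\Pi\mathcal{B}1_{N}+1_{N}$, whose $i$-th entry is $\pi_{i}a_{i0}+1\ge 1>0$. Because the off-diagonal entries are nonpositive, $(M1_{N})_{i}=M_{ii}-\sum_{j\ne i}|M_{ij}|$, so $M$ is strictly diagonally dominant with positive diagonal; by Gershgorin's disc theorem $M$ is positive definite, hence $\Xi=M/2>0$.

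\textbf{Main obstacle.} The only step that is more than a short computation is the first one: passing from the purely combinatorial spanning-tree hypothesis of Assumption~\ref{CommunicaitonAssumption} to the algebraic facts that $H$ is nonsingular with $H^{-1}\ge 0$ (equivalently, $-H$ is Hurwitz). This can be carried out either via the eigenvalue argument through $\bar{\mathcal{L}}$ sketched above, or by directly constructing a positive vector $v$ with $Hv>0$ (propagating weights along the spanning tree from the root) and invoking the matching $M$-matrix characterization. Once this is secured, the positivity of $\pi$ and the diagonal-dominance argument that pins down $\Xi>0$ for the specific weighting $\pi=(H^{T})^{-1}1_{N}$ are routine.
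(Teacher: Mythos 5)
Your proof is correct: the chain spanning tree $\Rightarrow$ all eigenvalues of the $Z$-matrix $H$ in the open right half-plane $\Rightarrow$ $H$ a nonsingular $M$-matrix with $(H^{T})^{-1}\geq 0$ and strictly positive row sums $\pi_{i}$ $\Rightarrow$ $\Pi H+H^{T}\Pi$ symmetric, strictly diagonally dominant with positive diagonal $\Rightarrow$ $\Xi>0$ is exactly the standard argument behind this lemma, which the paper itself does not reprove but imports by citation from \cite{Hu16TCNS1}. Your reading of ``$H$ is positive definite'' as ``nonsingular $M$-matrix'' (rather than $x^{T}Hx>0$, which need not hold for nonsymmetric $H$) is also the intended one, and your identification of the spanning-tree-to-spectrum step as the only nontrivial ingredient is accurate.
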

 
%\vspace*{1pt} 
%\begin{lemma} \label{lemma5} \cite{Ren16TAC}
%For two scalars $ x, \alpha \in \mathbb{R} $, we have $ \frac{d|x|^{\alpha+1}}{dt}=(\alpha+1) \text{sig}^{\alpha}(x)\dot{x} $ and $ \frac{d [\text{sig}^{\alpha+1}(x)]}{dt}=(\alpha+1) |x|^{\alpha}\dot{x} $. 
%\end{lemma}

\vspace*{1pt} 
\begin{lemma} \label{VaryingGain} \cite{15ATZuo} 
For $ \xi \in \mathbb{R} $ and $ \gamma >0 $, the following inequality holds: $ 0\leq |\xi|-\xi^{2}/ \sqrt{\xi^{2}+\gamma^{2}} \leq \gamma $. 
\end{lemma}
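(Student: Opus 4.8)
\emph{Proof proposal.} The statement is an elementary scalar inequality, so the plan is to dispatch the two bounds separately by direct algebraic manipulation, using nothing beyond the monotonicity of $\sqrt{\cdot}$ on $[0,\infty)$ and the fact that $\gamma>0$ keeps $\sqrt{\xi^2+\gamma^2}$ bounded away from zero.

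For the left inequality $|\xi|\ge \xi^2/\sqrt{\xi^2+\gamma^2}$, I would first observe that the claim is trivial when $\xi=0$ (both sides vanish), and otherwise note that since $\gamma>0$ we have $\sqrt{\xi^2+\gamma^2}>\sqrt{\xi^2}=|\xi|>0$. Multiplying this through by $|\xi|>0$ gives $|\xi|\sqrt{\xi^2+\gamma^2}\ge \xi^2$, and dividing by the positive quantity $\sqrt{\xi^2+\gamma^2}$ yields the claim. Equivalently, one may write $\xi^2/\sqrt{\xi^2+\gamma^2}=|\xi|\cdot\bigl(|\xi|/\sqrt{\xi^2+\gamma^2}\bigr)$ and note that the second factor lies in $[0,1)$.

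For the right inequality, the key step is the identity $\xi^2/\sqrt{\xi^2+\gamma^2}=\sqrt{\xi^2+\gamma^2}-\gamma^2/\sqrt{\xi^2+\gamma^2}$, obtained by writing $\xi^2=(\xi^2+\gamma^2)-\gamma^2$ in the numerator. Substituting this gives $|\xi|-\xi^2/\sqrt{\xi^2+\gamma^2}=\bigl(|\xi|-\sqrt{\xi^2+\gamma^2}\bigr)+\gamma^2/\sqrt{\xi^2+\gamma^2}$. The bracketed term is $\le 0$ because $|\xi|=\sqrt{\xi^2}\le\sqrt{\xi^2+\gamma^2}$, and the remaining term satisfies $\gamma^2/\sqrt{\xi^2+\gamma^2}\le\gamma^2/\sqrt{\gamma^2}=\gamma$ since $\sqrt{\xi^2+\gamma^2}\ge\gamma>0$. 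Adding the two bounds gives the desired estimate $\le\gamma$.

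There is no real obstacle here; the only points needing a little care are (i) not dividing by zero, which is handled by treating $\xi=0$ separately (or simply by noting that $\gamma>0$ keeps the denominator positive), and (ii) spotting the ``complete the numerator'' trick that splits the right-hand bound into two obviously controllable pieces. An alternative route that rationalizes $\sqrt{\xi^2+\gamma^2}-|\xi|=\gamma^2/(\sqrt{\xi^2+\gamma^2}+|\xi|)$ and then bounds the result also works, but it is slightly longer, so I would prefer the manipulation above.
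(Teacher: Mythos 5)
Your argument is correct and complete: the left bound follows from $\sqrt{\xi^{2}+\gamma^{2}}\geq|\xi|$, and the ``complete the numerator'' identity $\xi^{2}/\sqrt{\xi^{2}+\gamma^{2}}=\sqrt{\xi^{2}+\gamma^{2}}-\gamma^{2}/\sqrt{\xi^{2}+\gamma^{2}}$ cleanly yields the right bound, with the $\xi=0$ case and the positivity of the denominator properly handled. Note that the paper itself does not prove this lemma --- it is quoted from the cited reference --- so there is no in-paper argument to compare against; your elementary derivation stands on its own as a valid verification.
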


%\begin{lemma} \label{lemma2}  \cite{Bhat98TAC}
%Consider a dynamic system $\dot{x} = f(x,t) $, $x \in \mathbb{R}^{n} $ with $f(0,t)=0$. Suppose that there exists a positive definite $\mathcal{C}^{1}$ Lyapunov function $V(x,t)$ defined on a neighborhood of the origin ($ \mathcal{D} \in \mathbb{R}^{n}$), and there are constants $a>0$, $ b\in (0,1) $, and an open neighborhood $ \mathcal{U} \subseteq \mathcal{D} $ so that 
%%$ \dot{V}(x,t)+a V^{b}(x,t) \leq 0, \ x \in \mathcal{U} \setminus {0}$, 
%\vspace*{-3pt}  
%\begin{equation}
%\dot{V}(x,t)+a V^{b}(x,t) \leq 0, \ x \in \mathcal{U} \setminus {0}, \label{Finite-time}
%\end{equation}
%then the origin of this system is locally finite-time stable, and the settling time is described by $T \leq V^{1- b}(x(0),t) / (a (1-b))$.    		
%\end{lemma} 
\vspace*{1pt}
\begin{lemma} \label{lemma2}  \cite{Hu19TCNS}
Consider a dynamic system $\dot{x} = f(x,t) $, $x \in \mathbb{R}^{n} $ with $f(0,t)=0_{n}$. Suppose that there exists a positive definite $\mathcal{C}^{1}$ Lyapunov function $V(x,t)$ defined on a neighborhood of the origin ($ \mathcal{D} \in \mathbb{R}^{n}$), and there are constants $a>0$, $ b\in (0,1) $, and an open neighborhood $ \mathcal{U} \subseteq \mathcal{D} $ so that $ \dot{V}(x,t)+a V^{b}(x,t) \leq 0, \ x \in \mathcal{U} \setminus {0}$, 
%	\vspace*{-3pt}  
%	\begin{equation}
%	\dot{V}(x,t)+a V^{b}(x,t) \leq 0, \ x \in \mathcal{U} \setminus {0}, \label{Finite-time}
%	\end{equation}
then the origin of this system is finite-time stable, and the settling time is described by $T \leq V^{1- b}(x(0),t) / (a (1-b))$.    		
\end{lemma}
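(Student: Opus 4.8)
The plan is to establish this via a scalar comparison argument, in the spirit of the classical Bhat--Bernstein and Haimo results, adapted to the time-varying setting. First I would verify Lyapunov stability: since $\dot V(x,t) \le -aV^{b}(x,t) \le 0$ for all $x \in \mathcal{U}\setminus\{0\}$ and $V$ is positive definite and $\mathcal{C}^{1}$, the standard Lyapunov stability theorem yields that the origin is stable; in particular, for initial conditions sufficiently close to the origin the solution $x(t)$ stays in $\mathcal{U}$ for all $t\ge 0$, so the differential inequality is in force along the entire trajectory up until it reaches the origin.

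Next I would reduce the convergence claim to a one-dimensional problem. Put $W(t)\triangleq V(x(t),t)$, which is continuous and piecewise $\mathcal{C}^{1}$ with $\dot W(t)\le -a\,W(t)^{b}$ so long as $x(t)\ne 0$. Let $T^{\ast}\triangleq\inf\{t\ge 0 : W(t)=0\}$. On $[0,T^{\ast})$ we have $W(t)>0$, hence
\begin{equation}
\frac{d}{dt}\,W(t)^{1-b} = (1-b)\,W(t)^{-b}\,\dot W(t) \le -a(1-b),
\end{equation}
and integrating from $0$ to $t$ gives $W(t)^{1-b}\le W(0)^{1-b}-a(1-b)t$. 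Because the left-hand side is nonnegative, this cannot persist past $t = W(0)^{1-b}/(a(1-b))$; therefore $T^{\ast}\le V^{1-b}(x(0),0)/(a(1-b))\triangleq T$, and by continuity $W(T^{\ast})=0$. Since $V$ is positive definite, $W(T^{\ast})=0$ forces $x(T^{\ast})=0_{n}$.

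It remains to argue that the trajectory stays at the origin for $t\ge T^{\ast}$. As $\dot V\le 0$ wherever $x\ne 0$ and $V$ is continuous with $V(0,t)=0$, a continuity argument at the boundary point $x=0$ shows that $t\mapsto W(t)$ cannot increase; consequently $0\le W(t)\le W(T^{\ast})=0$ for all $t\ge T^{\ast}$, i.e. $x(t)=0_{n}$ for all $t\ge T^{\ast}$. Combined with the stability established above, this shows the origin is finite-time stable with settling-time estimate $T$.

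The step I expect to be the main obstacle is this last one: ruling out that the solution re-emerges from the origin after reaching it. Since $f(\cdot,t)$ is not Lipschitz at $0$ (indeed finite-time convergence precludes it), one cannot appeal to uniqueness of solutions; the argument must instead rest on the monotonicity of $V$ along trajectories, which requires care precisely because the hypothesis $\dot V + aV^{b}\le 0$ is only posited on $\mathcal{U}\setminus\{0\}$. A secondary technical point is the possible non-smoothness of $t\mapsto V(x(t),t)$ at instants where $x(t)$ passes through points of non-differentiability; this is handled by working with the upper Dini derivative of $W$ in place of $\dot W$ throughout, which does not affect the integration step above.
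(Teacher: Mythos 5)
Your argument is correct and is essentially the standard Bhat--Bernstein comparison proof: the paper itself states this lemma as a cited result from \cite{Hu19TCNS} without reproducing a proof, and the separation-of-variables step $\frac{d}{dt}V^{1-b}\le -a(1-b)$ together with the monotonicity argument for remaining at the origin is exactly how that reference (and the classical finite-time stability literature) establishes it. Your flagged caveats --- working with the Dini derivative and ruling out re-emergence without invoking uniqueness --- are the right ones and are handled adequately by the last-exit-time/monotonicity argument you sketch.
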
 

\begin{lemma} \label{Barbalat} (Barbalat's Lemma, \cite{BookKhail}): Let $ f(t): \mathbb{R}\rightarrow \mathbb{R}$ be a uniformly continuous function for $ t \geq 0 $. If 
%the limit of the integral 
$ \lim_{t \rightarrow \infty} \int_{0}^{t} f(\omega)d\omega $ exists and is finite, then $ \lim_{t\rightarrow \infty} f(t)=0$.
\end{lemma}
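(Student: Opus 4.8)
The plan is to argue by contradiction, using only the Cauchy criterion for convergence of the improper integral together with uniform continuity. Suppose $f(t)$ does not tend to $0$ as $t\to\infty$. Then there exist a constant $\varepsilon>0$ and an increasing sequence $\{t_{n}\}$ with $t_{n}\to\infty$ such that $|f(t_{n})|\geq \varepsilon$ for every $n$. The goal is to show that the tails of $\int_{0}^{t}f(\omega)\,d\omega$ cannot be made arbitrarily small, contradicting the assumed finite limit.

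First I would upgrade the pointwise lower bound at each $t_{n}$ to a lower bound on an interval of fixed length. By uniform continuity of $f$ on $[0,\infty)$, there is a $\delta>0$, independent of $n$, with $|f(t)-f(s)|<\varepsilon/2$ whenever $|t-s|<\delta$. Hence for all $t\in[t_{n},t_{n}+\delta]$,
\[
|f(t)|\geq |f(t_{n})|-|f(t)-f(t_{n})|>\varepsilon-\frac{\varepsilon}{2}=\frac{\varepsilon}{2}>0 ,
\]
so $f$ does not vanish on $[t_{n},t_{n}+\delta]$ and, being continuous, keeps a fixed sign there. Consequently
\[
\left|\int_{t_{n}}^{t_{n}+\delta} f(\omega)\,d\omega\right|=\int_{t_{n}}^{t_{n}+\delta}|f(\omega)|\,d\omega\geq \frac{\varepsilon\delta}{2}.
\]

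Finally I would invoke the hypothesis that $\lim_{t\to\infty}\int_{0}^{t}f(\omega)\,d\omega$ exists and is finite: by the Cauchy criterion, for $\eta=\varepsilon\delta/2>0$ there is a time $T$ such that $\bigl|\int_{a}^{b}f(\omega)\,d\omega\bigr|<\eta$ for all $b>a>T$. Choosing $n$ with $t_{n}>T$ and taking $a=t_{n}$, $b=t_{n}+\delta$ contradicts the previous display, so the assumption fails and $\lim_{t\to\infty}f(t)=0$. The one place that needs care is the passage from $\bigl|\int f\bigr|$ to $\int|f|$ over the short interval: this is exactly where uniform continuity (not mere continuity) is used, since it both makes $\delta$ uniform in $n$ and keeps $|f|$ bounded away from zero over an interval of fixed positive length, which is what produces the non‑vanishing lower bound on the integral.
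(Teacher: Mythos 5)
Your proof is correct and is the standard contradiction argument for Barbalat's lemma; the paper itself states this as a cited result from \cite{BookKhail} without proof, and the proof given there is essentially identical to yours (uniform continuity yields an $n$-independent $\delta$ and a sign-constant lower bound $\varepsilon\delta/2$ on $\bigl|\int_{t_{n}}^{t_{n}+\delta}f\bigr|$, contradicting the Cauchy criterion for the convergent improper integral). The only cosmetic point is that uniform continuity gives the bound for $|t-s|<\delta$, so strictly you should work on the half-open interval or shrink $\delta$; this does not affect the integral estimate.
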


\vspace*{-5pt}
\section{Main Result} 
\vspace*{-2pt}
In this section, we first present a distributed nonlinear estimator to cooperatively estimate the states of the nonlinear leader system for each agent so that its output tracks the reference trajectory in a finite time. As mentioned in Section I, this estimator design is significant as it can not only handle the case when only a subset of followers has access to the leader's states, but also provide a distributed solution to adopt Nussbaum gain technique for each agent having unknown control coefficients with nonidentical signs. Each agent can use its associated estimator's output as a local reference. Then, this issue can be transformed into a simultaneous tracking problem. As a result, we propose a distributed adaptive controller to achieve  fault-tolerant formation tracking for a nonlinear leader-follower agent system subject to time-varying actuator faults and completely unknown nonidentical control signs.  

%\vspace*{-5pt}
\subsection{Distributed Nonlinear Leader Estimator Design}
%In consideration that the leader's states are not available to all agents (only a small portion), we propose the following distributed estimator 
%for each agent to estimate the leader's information.
\vspace*{-1pt}
\textit{\underline{\textbf{Distributed Nonlinear Estimator}}}: for the nonlinear leader, motivated by the design in our preliminary work \cite{Hu19TCNS}, a  finite-time distributed estimator is designed for $ i\in\mathcal{V} $, $k=1,\cdots,m-1$, 
\vspace*{-4pt}
\begin{subequations}\label{Estimator}	
\begin{align}  
\hspace{-1.1em}
\dot{\hat{x}}_{i,k}&=\hat{x}_{i,k+1}+\kappa_{k i} \text{sig}^{\gamma}  (  \sum_{j=0}^{N}a_{ij}(\hat{x}_{j,k}-\hat{x}_{i,k}) ),  \hat{x}_{0,k}=x_{0,k},    \label{E0} \\ 
\hspace{-1.1em}
\dot{\hat{x}}_{i,m}&=\eta_{i}+\kappa_{mi} \text{sig}^{\beta}  (  \sum_{j=0}^{N}a_{ij}(\hat{x}_{j,m}-\hat{x}_{i,m}) ) ,  \hat{x}_{0,m}= x_{0, m},  \label{Ea} \\ 
\hspace{-1.1em} 
\dot{\eta}_{i}&=\kappa_{\eta i} [\text{sig}^{\alpha}  (e^{\eta}_{i} ) + \text{sgn} ( e^{\eta}_{i}) ],  e^{\eta}_{i}= \sum_{j=1}^{N}a_{ij}(\eta_{j}-\eta_{i}) + e^{\varrho}_{i},  \label{Eb} 
	\end{align}
\end{subequations}
where $ \kappa_{k i}$, $\kappa_{mi} $, $\kappa_{\eta i}  \in \mathbb{R}$ are positive gains, $\alpha$, $\beta$, $ \gamma $ $ \in (0.5,1)$,  and $ \hat{x}_{i,k}$, $ k=1,2,\cdots,m $, and $\eta_{i} $ are the state estimates of the leader's states $ x_{0,k} $, and $ \dot{x}_{0,m} $, respectively, for any $\hat{x}_{i,k}(0)$, $\eta_{i}(0) \in \mathbb{R}^{n} $. In (\ref{Eb}), $  e^{\varrho}_{i}= a_{i0}(\varrho_{i}-\eta_{i})$ where $ \varrho_{i} $, an estimate of unavailable $ \dot{x}_{0,m} $, is generated by the following estimator that for $ \kappa_{\xi i}$, $\kappa_{\varrho i}>0$, 
\vspace*{-3pt}
\begin{subequations}\label{filter} 
	\begin{align}
 \dot{\varrho}_{i}&=\kappa_{\varrho i}a_{i0}\text{sgn} \left(x_{0,m}-\xi_{i} \right), \ \varrho_{i}(0)=0_{n},  \label{Fb} \\
 	\dot{\xi}_{i}&=\varrho_{i}+\kappa_{\xi i}a_{i0}\text{sig}^{\frac{1}{2}} \left( x_{0,m}-\xi_{i} \right), \ \xi_{i}(0)=0_{n}. \label{Fa} 
	\end{align}
\end{subequations}
%In (\ref{Eb}), $e^{\varrho}_{i}= a_{i0}(\varrho_{i}-\eta_{i}) $ where $ \varrho_{i} $, an estimate of $ \dot{x}_{0,m} $ that is unknown in (\ref{SystemModelLeader}), is generated by a similar estimator  
%\vspace*{-4pt}
%\begin{subequations}\label{filter} 
%\begin{align}
%\dot{\xi}_{i}&=\varrho_{i}+\kappa_{\xi i}a_{i0}\text{sig}^{\frac{1}{2}} \left( x_{0,m}-\xi_{i} \right) , \ \xi_{i}(0)=0_{n}, \label{Fa} \\  \dot{\varrho}_{i}&=\kappa_{\varrho i}a_{i0}\text{sgn} \left(x_{0,m}-\xi_{i} \right) , \ \varrho_{i}(0)=0_{n}, \, i \in \mathcal{V}, \label{Fb}
%\end{align}
%\end{subequations}
%where $ \kappa_{\xi i}$, $\kappa_{\varrho i}>0$ are constants and $ x_{0,m} $ is only available for a subset of followers (i.e., $ a_{i0}>0 $).

Define the local estimate errors $ \tilde{x}_{i,k}$,  $\tilde{\eta}_{i}$, $\tilde{\xi}_{i}$,  $ \tilde{\varrho}_{i} $  as    
\vspace*{-4pt} 
\begin{subequations}\label{LocalEerrors} 
\begin{align}
\tilde{x}_{i,k}&=\hat{x}_{i,k}-x_{0,k}, \ \tilde{\eta}_{i}= \eta_{i}-\dot{x}_{0,m}, \ k=1,\cdots,m, 
 \label{Eaa} \\ 
\tilde{\xi}_{i}&=\xi_{i}-x_{0,m}, \ \tilde{\varrho}_{i}= \varrho_{i}-\dot{x}_{0,m}, \ i =1,\cdots,N.  \label{Ebb} 
\end{align} 
\end{subequations}

%%\vspace*{2pt} 
%\textbf{\underline{\textit{Finite-Time Convergence Analysis of (\ref{Estimator})-(\ref{filter})}}}
 
Then, based on (\ref{LocalEerrors}), the estimated error dynamics can be derived under (\ref{Estimator}) and (\ref{filter}) for $ k=1,2,\cdots,m-1$, 
%for each agent $ i\in\mathcal{V} $ and $ k=1,2,\cdots,m-1$,   
\vspace*{-4pt} 
\begin{subequations}\label{DefinedErrors}	
\begin{align} 	
\hspace{-0.4em}
\dot{\tilde{x}}_{i,k}&=\tilde{x}_{i, k+1}+\kappa_{k i} \text{sig}^{\gamma} ( \sum_{j=1}^{N}a_{ij}(\tilde{x}_{j,k}-\tilde{x}_{i,k}) - a_{i0} \tilde{x}_{i,k} ),   \label{D0} \\
\hspace{-0.4em}
\dot{\tilde{x}}_{i,m}&=\tilde{\eta}_{i}+\kappa_{mi} \text{sig}^{\beta} ( \sum_{j=1}^{N}a_{ij}(\tilde{x}_{j,m}-\tilde{x}_{i,m}) - a_{i0} \tilde{x}_{i,m} ),   \label{Da} \\
\hspace{-0.4em}  
\dot{\tilde{\eta}}_{i}&=\kappa_{\eta i}\text{sig}^{\alpha} ( \sum_{j=1}^{N}a_{ij}(\tilde{\eta}_{j}-\tilde{\eta}_{i}) + a_{i0}(\tilde{\varrho}_{i}-\tilde{\eta}_{i} ) )-\ddot{x}_{0,m}  \notag  \\
\hspace{-0.4em} 
& \ \ \ + \kappa_{\eta i}\text{sgn} ( \sum_{j=1}^{N}a_{ij}(\tilde{\eta}_{j}-\tilde{\eta}_{i}) + a_{i0}(\tilde{\varrho}_{i}-\tilde{\eta}_{i} ) ),  i\in\mathcal{V}, \label{Db} \\
\hspace{-0.4em}
\dot{\tilde{\xi}}_{i}&=\tilde{\varrho}_{i}-\kappa_{\xi i}a_{i0}\text{sig}^{\frac{1}{2}} (\tilde{\xi}_{i} ),  \   \dot{\tilde{\varrho}}_{i}=-\kappa_{\varrho i}a_{i0}\text{sgn} (\tilde{\xi}_{i} )-\ddot{x}_{0,m}. \label{Dc}
\end{align}
\end{subequations} 

%To facilitate the subsequent stability analysis for  (\ref{DefinedErrors}) under a directed graph, 
Next, we define two disagreement estimate errors as    
\vspace*{-3pt}  
\begin{subequations}\label{CollectiveErrors}	
\begin{align}
%\hspace{-0.2em} 
\bar{x}_{i,k}&=\textstyle \sum_{j=1}^{N}a_{ij}(\tilde{x}_{j,k}-\tilde{x}_{i,k}) - a_{i0} \tilde{x}_{i,k},  k=1,\cdots,m, \label{Eaaa}  \\
%\hspace{-0.2em} 
\bar{\eta}_{i}&=\textstyle \sum_{j=1}^{N}a_{ij}(\tilde{\eta}_{j}-\tilde{\eta}_{i}) + a_{i0} (\tilde{\varrho}_{i}-\tilde{\eta}_{i} ),  i=1,\cdots,N. \label{Ebbb}
\end{align}  
\end{subequations} 

Substituting (\ref{CollectiveErrors}) into (\ref{DefinedErrors}) gives the following disagreement error dynamics for each order $ k=1,\cdots,m-1$,  
%for each agent $ i\in\mathcal{V} $ and $ k=1,2,\cdots,m-1$,
\vspace*{-3pt} 
\begin{subequations}\label{DisagreeErrors}	
\begin{align} 	
\hspace{-0.5em}
\dot{\bar{x}}_{i,k}&=\bar{x}_{i, k+1}+\kappa_{ki} e^{\gamma}_{i,s}, \  
%\label{DD0} \\
%\hspace{-0.5em}
\dot{\bar{x}}_{i,m}= \bar{\eta}_{i} -a_{i0}\tilde{\varrho}_{i} + \kappa_{mi} e^{\beta}_{i,m},  \label{DDa} \\
\hspace{-0.5em} 
\dot{\bar{\eta}}_{i}&=\kappa_{\eta i} e^{\alpha}_{i,\eta} + \kappa_{\eta i}e^{0}_{i,\eta}  - \kappa_{\varrho i} a^{2}_{i0} \text{sgn} (  \tilde{\xi}_{i} ), \ i=1,\cdots,N,   \label{DDb} \\
%\dot{\bar{\eta}}_{i}&=k_{\eta i}( \sum_{j=1}^{N}a_{ij}(\text{sig}^{\alpha} (\bar{\eta}_{j})-\text{sig}^{\alpha} (\bar{\eta}_{i}) - a_{i0} \text{sig}^{\alpha} (\bar{\eta}_{i}))   \notag  \\
%\hspace{-0.5em} 
%& + k_{\eta i} ( \sum_{j=1}^{N}a_{ij}(\text{sgn}(\bar{\eta}_{j})-\text{sgn}(\bar{\eta}_{i})) - a_{i0}\text{sgn}(\bar{\eta}_{i})  ) -k_{\varrho i} \text{sgn} (\tilde{\xi}_{i}), \label{DDb} \\
\hspace{-0.5em}
\dot{\tilde{\xi}}_{i}&=\tilde{\varrho}_{i}-\kappa_{\xi i}a_{i0}\text{sig}^{\frac{1}{2}} (\tilde{\xi}_{i} ),  \   \dot{\tilde{\varrho}}_{i}=-\kappa_{\varrho i}a_{i0}\text{sgn} (\tilde{\xi}_{i} )-\ddot{x}_{0,m}, \label{DDc}  
%& . \label{DDd}  
\end{align}
\end{subequations}
where $ e^{*}_{i,s} = \sum_{j=1}^{N}a_{ij}( \text{sig}^{*}(\bar{x}_{j,s})- \text{sig}^{*}(\bar{x}_{i,s})) - a_{i0}  \text{sig}^{*}(\bar{x}_{i,s} ) $ with $ *=\gamma$ for $ s=1,2,\cdots,m-1 $, $ *= \beta $ for $ s=m $, and $ *= \alpha $ for $ s=\eta $, and $e^{0}_{i,\eta}= \sum_{j=1}^{N}a_{ij}(\text{sgn}(\bar{\eta}_{j})-\text{sgn}(\bar{\eta}_{i})) - a_{i0}\text{sgn}(\bar{\eta}_{i}) $. 

\vspace*{5pt}
Next, finite-time convergence of the proposed distributed nonlinear estimator in (\ref{Estimator}) and (\ref{filter}) is presented as follows.

%\vspace*{2pt}
\begin{theorem} \label{theorem1}
Suppose that Assumptions \ref{LeaderAssumption} and \ref{CommunicaitonAssumption} hold. Under the proposed distributed nonlinear estimator (\ref{Estimator}) and (\ref{filter}), all state estimates are uniformly bounded, and the finite-time estimation is achieved in the sense that 
	$\lim_{t\to T_{1}} a_{i0}\tilde{\varrho}_{i}=0_{n}$,  $\lim_{t\to T_{2}} \tilde{\eta}_{i}=0_{n}$, $\lim_{t\to T_{3}} \tilde{x}_{i,m}=0_{n}$, and $\lim_{t\to T_{4}} \tilde{x}_{i,s}=0_{n}$, $ s=1,\cdots,m-1 $ for certain $ T_{i}>0, i=1,2,3,4$.
	%	\vspace*{-3pt} 
	%	\begin{equation}
	%	\hspace{-0.45em}
	%	\lim_{t\to T_{1}} a_{i0}\tilde{\varrho}_{i}=0_{m}, \lim_{t\to T_{2}} \tilde{\eta}_{i}=0_{m},  \lim_{t\to T_{3}} \tilde{\vartheta}_{i}=0_{m}, \lim_{t\to T_{4}} \tilde{\chi}_{i}=0_{m}.  \label{M10}
	%	\end{equation}% 	
	%	where $ T_{i}, i=1,\cdots,4 $ are the settling times. 
\end{theorem}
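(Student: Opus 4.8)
\emph{Proof proposal.} The error system (\ref{DefinedErrors})--(\ref{DisagreeErrors}) has a cascade structure, and the plan is to prove finite-time convergence one layer at a time, exploiting at each stage that once a layer has reached the origin the discontinuous terms keep it there (in the Filippov sense), so the next layer down is forced either by nothing or by a bounded term that its own discontinuous action dominates. The top layer is the super-twisting pair $(\tilde\xi_i,\tilde\varrho_i)$ in (\ref{DDc}): for every $i$ with $a_{i0}>0$, writing $k_{1i}=\kappa_{\xi i}a_{i0}$ and $k_{2i}=\kappa_{\varrho i}a_{i0}$, this is exactly the super-twisting algorithm driven by the perturbation $-\ddot x_{0,m}$, which is bounded by a finite (if unknown) constant by Assumption~\ref{LeaderAssumption}. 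Using a strict generalized Lyapunov function of the Moreno--Osorio type, I would conclude that for gains dominating that bound one has $\tilde\xi_i=\tilde\varrho_i=0$ for all $t\ge T_1$, the exactness of the differentiator keeping $\tilde\xi_i\equiv\tilde\varrho_i\equiv0$ thereafter; for $i$ with $a_{i0}=0$ the assertion $a_{i0}\tilde\varrho_i\equiv0$ is trivial. This gives $\lim_{t\to T_1}a_{i0}\tilde\varrho_i=0_n$.

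For the second layer, observe that for $t\ge T_1$ we have $a_{i0}\tilde\varrho_i\equiv0$, so by the definition (\ref{CollectiveErrors}) of $\bar\eta_i$ one gets $\bar\eta=-(H\otimes I_n)\tilde\eta$ and (\ref{DDb}) reduces to a directed-graph finite-time consensus system in $\bar\eta$: on the sliding surface the term $-\kappa_{\varrho i}a_{i0}^{2}\,\text{sgn}(\tilde\xi_i)$ equals the bounded residual $a_{i0}\ddot x_{0,m}$ (the equivalent value of the super-twisting sign term), the $\text{sgn}$-based term $\kappa_{\eta i}e^{0}_{i,\eta}$ is built to reject it, and $\kappa_{\eta i}e^{\alpha}_{i,\eta}$ supplies a fractional convergence rate. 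Since Assumption~\ref{CommunicaitonAssumption} holds, Lemma~\ref{DirectedGraphTheta} furnishes the diagonal $\Pi$ with $\Xi=(\Pi H+H^{\top}\Pi)/2>0$; taking $V_{\eta}=\tfrac12\bar\eta^{\top}(\Pi\otimes I_n)\bar\eta$, differentiating along (\ref{DDb}), lower-bounding the $\bar\eta^{\top}(\Pi H\otimes I_n)\text{sig}^{\alpha}(\bar\eta)$ and $\bar\eta^{\top}(\Pi H\otimes I_n)\text{sgn}(\bar\eta)$ terms, dominating the residual by taking $\min_i\kappa_{\eta i}$ large enough, and using Lemma~\ref{young}, I expect $\dot V_{\eta}\le-c_{\eta}V_{\eta}^{b_{\eta}}$ for some $b_{\eta}\in(0,1)$; Lemma~\ref{lemma2} then yields a finite $T_2\ge T_1$ with $\bar\eta\equiv0$, hence $\tilde\eta=-(H^{-1}\otimes I_n)\bar\eta\equiv0$, for $t\ge T_2$.

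The remaining layers follow the same template but with zero residual. For $t\ge T_2$, $\bar\eta\equiv0$ and $a_{i0}\tilde\varrho_i\equiv0$ turn (\ref{DDa}) into $\dot{\bar x}_{i,m}=\kappa_{mi}e^{\beta}_{i,m}$, a pure finite-time consensus system; the $\Pi$-weighted function $V_{m}=\tfrac12\bar x_{\cdot,m}^{\top}(\Pi\otimes I_n)\bar x_{\cdot,m}$ gives $\dot V_{m}\le-c_{m}V_{m}^{(1+\beta)/2}$, so by Lemma~\ref{lemma2} there is a finite $T_3\ge T_2$ with $\tilde x_{i,m}\equiv0$ (hence $\bar x_{i,m}\equiv0$) for $t\ge T_3$. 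One then descends the chain in (\ref{DDa}): $\bar x_{i,m}\equiv0$ turns $\dot{\bar x}_{i,m-1}$ into $\kappa_{m-1,i}e^{\gamma}_{i,m-1}$, again finite-time consensus, so $\tilde x_{i,m-1}$ vanishes after a further finite time; repeating for $s=m-2,\dots,1$ produces, after $m-1$ steps, a finite $T_4$ with $\tilde x_{i,s}\equiv0$, $s=1,\dots,m-1$. Finally, I would get uniform boundedness of all estimator states by noting that every right-hand side in (\ref{Estimator})--(\ref{filter}) grows at most linearly (the $\text{sig}^{\theta}$ maps are sublinear and the sign terms bounded) and the only leader signal entering the error dynamics, $\ddot x_{0,m}$, is bounded by Assumption~\ref{LeaderAssumption}, so no solution escapes in finite time, after which the layer-by-layer convergence above furnishes the bound.

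The hard part is the $\bar\eta$-layer. One must (i) verify carefully that, after $T_1$, the super-twisting sliding mode leaves behind precisely the bounded forcing $a_{i0}\ddot x_{0,m}$ in (\ref{DDb}); (ii) establish, over a \emph{directed} graph and with possibly non-uniform gains $\kappa_{\eta i}$, the coercivity and cross-term estimates for $\bar\eta^{\top}(\Pi H\otimes I_n)\text{sig}^{\alpha}(\bar\eta)$ and its sign counterpart that make $V_{\eta}$ decrease at a fractional rate despite the residual --- this is exactly where the weighting $\Pi$ of Lemma~\ref{DirectedGraphTheta} and the power-mean inequality of Lemma~\ref{young} are indispensable; and (iii) make the cascade rigorous as a differential inclusion so that each converged layer genuinely stays on its sliding surface while the layer below converges. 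Once the $\bar\eta$-layer is handled, the lower layers are routine repetitions of the same finite-time consensus estimate with vanishing forcing.
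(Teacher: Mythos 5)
Your cascade architecture is exactly the paper's: super-twisting layer first (with the Moreno--Osorio strict Lyapunov function $\zeta_{ik}^{T}P_{ik}\zeta_{ik}$, $\zeta_{ik}=\mathrm{col}(\mathrm{sig}^{1/2}(\tilde\xi_{ik}),\tilde\varrho_{ik})$), then $\bar\eta$, then $\bar x_{\cdot,m}$, then descent through $s=m-1,\dots,1$, with Lemma~\ref{lemma2} closing each layer and a no-finite-escape estimate for boundedness. The first step and the overall ordering are fine. However, there is a genuine gap precisely at the point you flag as ``the hard part'': your choice of Lyapunov function for the $\bar\eta$- and $\bar x$-layers, $V_{\eta}=\tfrac12\bar\eta^{T}(\Pi\otimes I_{n})\bar\eta$ (and its analogue for $V_m$), does not work over a directed graph. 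Its derivative produces the bilinear form $\bar\eta^{T}(\Pi H\otimes I_{n})\,\mathrm{sig}^{\alpha}(\bar\eta)$ with \emph{different} vectors on the two sides, and Lemma~\ref{DirectedGraphTheta} only gives you $x^{T}(\Pi H)x\ge\lambda_{\min}(\Xi)\|x\|^{2}$ when the same vector appears on both sides; since $H$ is not symmetric, $\bar\eta^{T}\Pi H\,\mathrm{sig}^{\alpha}(\bar\eta)$ has no guaranteed sign (the edge-wise monotonicity trick that rescues this for symmetric Laplacians is unavailable here). The paper's proof avoids this by choosing the non-quadratic function $V_{\eta}=\sum_{i,k}\pi_{i}\bigl(|\bar\eta_{ik}|+\tfrac{1}{\alpha+1}|\bar\eta_{ik}|^{\alpha+1}\bigr)$, whose gradient is exactly $\pi_{i}(\mathrm{sgn}(\bar\eta_{ik})+\mathrm{sig}^{\alpha}(\bar\eta_{ik}))$, i.e.\ it reproduces the protocol nonlinearity, so the derivative collapses to $-[\mathrm{sig}^{\alpha}(\bar\eta)+\mathrm{sgn}(\bar\eta)]^{T}(\kappa_{\eta}\Pi H\otimes I_{n})[\mathrm{sig}^{\alpha}(\bar\eta)+\mathrm{sgn}(\bar\eta)]$ --- a form with identical vectors where $\Xi>0$ applies --- and the residual is absorbed by a gain condition $\underline\kappa_{\eta}^{2}\ge\bar\pi\bar\kappa_{\varrho}/\lambda_{\min}(\Xi)$. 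The same device ($V_m=\sum\pi_i\tfrac{1}{\beta+1}|\bar x_{ik,m}|^{\beta+1}$, etc.) is what makes the lower layers go through; without it your claimed inequalities $\dot V_{\eta}\le -c_{\eta}V_{\eta}^{b_{\eta}}$ and $\dot V_{m}\le -c_{m}V_{m}^{(1+\beta)/2}$ cannot be established.

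Two smaller remarks. Your equivalent-control reading of the residual in the $\bar\eta$-layer (that $-\kappa_{\varrho i}a_{i0}^{2}\mathrm{sgn}(\tilde\xi_{i})$ equals $a_{i0}\ddot x_{0,m}$ on the sliding surface) is a legitimate but unnecessary refinement: the paper simply bounds $|\mathrm{sgn}(\tilde\xi_{ik})|\le 1$ so the residual is bounded by $\kappa_{\varrho i}a_{i0}^{2}$, which is all that is needed since the $\mathrm{sgn}$-based consensus term $\kappa_{\eta i}e^{0}_{i,\eta}$ contributes a constant negative term $-\lambda_{\min}(\Xi)\underline\kappa_{\eta}^{2}Nn$ that dominates it. Also note the paper additionally supplies an explicit no-finite-escape bound $\dot V_{m}\le c_{0}V_{m}^{\beta/(\beta+1)}$ on $[0,T_{3})$; your linear-growth remark serves the same purpose but should be stated as such an inequality to be rigorous.
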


\vspace*{2pt}
\begin{proof}
%	See the proof in Appendix A. 
The proof includes three steps: \\
\textbf{Step (i)}: prove that 	$\lim_{t\to T_{1}}a_{i0}\tilde{\varrho}_{i}=0_{n}$. Two cases are studied:

When $ a_{i0}=0 $, we obtain that $\lim_{t\to T_{1}}a_{i0}\tilde{\varrho}_{i}=0_{n}$; 
When $ a_{i0}=1 $, motivated by \cite{Moreno12TAC}, we can define an error variable $\zeta_{ik} = \\ \text{col}(\text{sig}^{\frac{1}{2}}(\tilde{\xi}_{ik}),\tilde{\varrho}_{ik}) $, where $ \tilde{\xi}_{ik}, \tilde{\varrho}_{ik}$ are the $k$th element of $ \tilde{\xi}_{i}$, $ \tilde{\varrho}_{i} $, $ k=1,\cdots,n $, respectively. Select a Lyapunov function candidate as $ V_{\zeta}(t)=\sum_{i=1}^{N}\sum_{k=1}^{n}V_{\zeta ik}(t)=\sum_{i=1}^{N} \sum_{k=1}^{n} \zeta^{T}_{ik}P_{ik}\zeta_{ik}$, where $ P_{ik} >0$ is a constant matrix. 
%Due to $\text{sig}^{\frac{1}{2}}(\tilde{\xi}_{ik}) $, $ V_{\zeta}(t)$ is absolutely continuous (AC) but not locally Lipschitz on the set $ \varXi=\{(\tilde{\xi}_{ik}, \tilde{\varrho}_{ik}) \in \mathbb{R}^{2} | $  $ \tilde{\xi}_{ik}=0 \} $. This violates the classical Lyapunov theorem in which it  requires the Lyapunov function to be  continuously differentiable, or at least locally Lipschitz. As illustrated in \cite{Moreno12TAC}, $ V_{\zeta}(t) $ can be used for the stability analysis thanks to the Zubov theorem, which only requires Lyapunov functions to be continuous. Similarly, it can be checked that $ V_{\zeta}(t) $ is an AC function of $ t $, and hence, its time derivative is defined almost everywhere \cite{Hu16TCNS1}.  
% Due to $\text{sig}^{\frac{1}{2}}(\tilde{\xi}_{ik}) $, $ V_{\zeta}(t)$ is absolutely continuous (AC) but not locally Lipschitz on the set $ \varXi=\{(\tilde{\xi}_{ik}, \tilde{\varrho}_{ik}) \in \mathbb{R}^{2} | $  $ \tilde{\xi}_{ik}=0 \} $. This violates a classical Lyapunov theorem in which it  requires the Lyapunov function to be  continuously differentiable, or at least locally Lipschitz. As illustrated in \cite{Moreno12TAC}, $ V_{\zeta}(t) $ can be used for the stability analysis thanks to the Zubov theorem, which only requires Lyapunov functions to be continuous. 
%%Similarly, it can be checked that $ V_{\zeta}(t) $ is an AC function of $ t $, and its time derivative is defined almost everywhere. 
%% \cite{Book1988Filippov}.
Let $x_{0k,m} $ be the $k$th element of  $ x_{0,m} $. Then, the time derivative of $ \zeta_{ik}  $ is given by  \cite{Moreno12TAC}
\vspace*{-8pt}
\begin{equation}
\hspace{-0.6em}
\dot{\zeta}_{ik}=  \frac{1}{2} |\tilde{\xi}_{ik}|^{-\frac{1}{2}}  \left[
\begin{array}{c}
-\kappa_{\xi i} \text{sig}^{\frac{1}{2}}(\tilde{\xi}_{ik})+\tilde{\varrho}_{ik} \\
-2[\kappa_{\varrho i}-\ddot{x}_{0k,m}\text{sgn}(\tilde{\xi}_{ik})]\text{sig}^{\frac{1}{2}}(\tilde{\xi}_{ik})
\end{array}%
\right].  \notag 
%\label{A1}
\end{equation}

%The time derivative of the error $ \zeta_{ik}  $ can be expressed as
%\vspace*{-3pt}
%\begin{equation}
%%\hspace{-1em}
%\dot{\zeta}_{ik}=  \frac{1}{2} |\tilde{\xi}_{ik}|^{-\frac{1}{2}}  \left[
%\begin{array}{c}
%-\kappa_{\xi i} \text{sig}^{\frac{1}{2}}(\tilde{\xi}_{ik})+\tilde{\varrho}_{ik} \\
%-2[\kappa_{\varrho i}-\ddot{x}_{0k,m}\text{sign}(\tilde{\xi}_{ik})]\text{sig}^{\frac{1}{2}}(\tilde{\xi}_{ik})
%\end{array}%
%\right].  \label{A1}
%\end{equation}

\vspace*{-8pt}
Then, the time derivative of $ V_{\zeta} $ along the system (\ref{DDc}) is given by  $\dot{V}_{\zeta} = \sum_{i=1}^{N} \sum_{k=1}^{n}|\tilde{\xi}_{ik}|^{-\frac{1}{2}}  \zeta^{T}_{ik} \left( R^{T}_{ik}P_{ik}+P_{ik}R_{ik} \right) \zeta_{ik} $,
%\vspace*{-6pt}
%\begin{equation}
%%\hspace*{-0.5em}
%\dot{V}_{1}
%%&=  \sum_{i=1}^{N} \sum_{k=1}^{n}  \|\tilde{\xi}_{ik}\|^{-\frac{1}{2}} \xi^{T}_{ik}P_{ik}  \left[
%%\begin{array}{c}
%%-k_{3i}\text{sig}^{\frac{1}{2}} (\tilde{\xi}_{ik} )+\tilde{\varrho}_{ik} \\
%%-2[k_{4i}-\dddot{x}_{dk}\text{sign}(\tilde{\xi}_{ik})]\text{sig}^{\frac{1}{2}} (\tilde{\xi}_{ik} )
%%\end{array}\right]  \notag \\
%= \sum_{i=1}^{N} \sum_{k=1}^{n}\|\tilde{\xi}_{ik}\|^{-\frac{1}{2}}  \zeta^{T}_{ik} \left( R^{T}_{ik}P_{ik}+P_{ik}R_{ik} \right) \zeta_{ik},
%\label{EE1}
%\end{equation}%
where $ R_{ik}=\begin{bmatrix} -\frac{1}{2}\kappa_{\xi i}  & \frac{1}{2}  \\
-[\kappa_{\varrho i}- \ddot{x}_{0k,m}\text{sign}(\tilde{\xi}_{ik})] & 0 \\
\end{bmatrix}$ is Hurwitz if and only if $ \kappa_{\xi i} $ $ >0 $ and $ \kappa_{\varrho i}> \sup_{t\in (0,\infty)} \{ \| \ddot{x}_{0,m} \|_{\infty}\}+1 $ based on Assumption \ref{LeaderAssumption}. Since $ R_{ik}$ is Hurwitz, for each matrix $ \Gamma_{ik}>0$, we can find a matrix $ P_{ik} >0$ to the following algebraic Lyapunov inequality: $ R^{T}_{ik}P_{ik}+P_{ik}R_{ik} \leq -\Gamma_{ik}, i=1,\cdots,N, k=1,\cdots,n $, such that for the constructed strict Lyapunov function $ V_{\zeta} $ \cite{Moreno12TAC}, we have that $ \dot{V}_{\zeta} =- \sum_{i=1}^{N}\sum_{k=1}^{n} |\tilde{\xi}_{ik}|^{-\frac{1}{2}}  \zeta^{T}_{ik} \Gamma_{ik}\zeta_{ik} \leq 0$. On the other hand, we have $ |\tilde{\xi}_{ik}|^{\frac{1}{2}} = |\text{sig}^{\frac{1}{2}} (\tilde{\xi}_{ik} )| \leq  | \zeta_{ik} | \leq  \lambda^{-\frac{1}{2}}_{min}(P_{ik})V^{\frac{1}{2}}_{1ik}$. Then, defining $\epsilon_{0}= \text{min}_{i, k} \lbrace \lambda^{\frac{1}{2}}_{min}(P_{ik})\lambda_{min}(\Gamma_{ik}) /\lambda_{max}({P_{ik}}) \rbrace>0$ and using Lemma \ref{young} yield the following inequality  
\vspace*{-5pt}
\begin{equation}
\hspace*{-0.5em}
\dot{V}_{\zeta} \leq-\sum_{i=1}^{N} \sum_{k=1}^{n} \lambda^{\frac{1}{2}}_{min}(P_{ik})V^{-\frac{1}{2}}_{\zeta ik} \frac{\lambda_{min}(\Gamma_{ik})}{\lambda_{max}(P_{ik})} V_{\zeta ik} \leq -\epsilon_{0}V^{\frac{1}{2}}_{\zeta}. \label{A2} 
%	&\leq& -\kappa_{0} \sum_{i=1}^{N} V^{\frac{1}{2}}_{1i} \leq  -\kappa_{0} \left( \sum_{i=1}^{N} V_{1i} \right)^{\frac{1}{2}} =-\kappa_{0}V^{\frac{1}{2}}_{1}.    \notag 
\end{equation}

\vspace*{-5pt}
Based on Lemma \ref{lemma2}, $ V_{\zeta} \in \mathcal{L}_{\infty}$ and $\tilde{\varrho}_{i}$ converges to zero in a finite time, e.g.,  $\lim_{t\to T_{1}} \tilde{\varrho}_{i}=0_{n}$ with $T_{1}= \frac{2}{\epsilon_{0}}V^{\frac{1}{2}}_{\zeta}(0)$. Overall, we obtain that $\lim_{t\to T_{1}}a_{i0}\tilde{\varrho}_{i}=0_{n}$ for $ a_{i0}= 0 \ \text{or} \ 1$.

\vspace*{2pt}	
\textbf{Step (ii)}: prove that $\lim_{t\to T_{2}} \tilde{\eta}_{i}=0_{n}$. 

\vspace*{2pt}
Consider a nonnegative Lyapunov function candidate  $ V_{\eta}(t)= \sum_{i=1}^{N} \sum_{k=1}^{n} \pi_{i} (|\bar{\eta}_{ik}|+\frac{1}{\alpha+1}|\bar{\eta}_{ik}|^{\alpha+1}) $, where $ \bar{\eta}_{ik}$ is the $ k $th entry of $ \bar{\eta}_{i} $ in (\ref{Ebbb}) and $ \pi_{i} $ is given in Lemma \ref{DirectedGraphTheta}. The 
%$ V_{2}(t) \geq 0 $ and $ V_{\eta}(t)=0 $ if and only if $ \bar{\eta}_{ik}=0 $. In addition, 
upper bound of $ V_{\eta}(t) $ is derived by Lemma \ref{young} in two cases:    

\vspace{2pt}
Case a) For $ \sum_{i=1}^{N}\sum_{k=1}^{n} \pi_{i} |\bar{\eta}_{ik}|>1 $, $ V_{\eta} \leq  \sum_{i=1}^{N} \sum_{k=1}^{n} \pi_{i} |\bar{\eta}_{ik} \\ |   + \frac{1}{\alpha+1} \sum_{i=1}^{N} \sum_{k=1}^{n} \pi_{i}  |\bar{\eta}_{ik}|^{\alpha+1} \leq \bar{\pi} \frac{\alpha+2}{\alpha+1} (\sum_{i=1}^{N} \sum_{k=1}^{n}    \bar{\eta}^{2}_{ik})^{\frac{\alpha+1}{2}}$ with $ \bar{\pi}$$=\max \{\pi_{i}\} $. Thus, we have 
\vspace*{-5pt}
\begin{equation}
%\hspace*{-0.5em}
\sum_{i=1}^{N}\sum_{k=1}^{n} |\bar{\eta}_{ik}| \geq \left( \frac{(\alpha+1)V_{\eta}}{\bar{\pi}(\alpha+2)}  \right)^{\frac{1}{\alpha+1}}.  \label{U1}  
\end{equation}

\vspace*{-3pt} 
Case b) For $ \sum_{i=1}^{N}\sum_{k=1}^{n} \pi_{i} |\bar{\eta}_{ik}|\leq 1 $, $ V_{\eta} \leq \bar{\pi}(1+1/(\alpha+1))$ $  \sum_{i=1}^{N} \sum_{k=1}^{n}    |\bar{\eta}_{ik}| $. Thus, we obtain 
\vspace*{-6pt}
\begin{equation}
%\hspace*{-0.5em}
\sum_{i=1}^{N}\sum_{k=1}^{n} |\bar{\eta}_{ik}| \geq \left( \frac{(\alpha+1)V_{\eta}}{\bar{\pi}(\alpha+2)}  \right).  \label{U2}  
\end{equation}

\vspace*{-3pt} 
Let $ \bar{\eta} $ be the stacked column  vector of $ \bar{\eta}_{i} $ and  $ \kappa_{\eta}=\text{diag}\{\kappa_{\eta i}\} $. Then, the time derivative of $ V_{\eta}(t) $ along (\ref{DDb}) is given by 
\vspace*{-5pt}
\begin{align} 
\hspace{-0.5em}
\dot{V}_{\eta} 
%&=\sum_{i=1}^{N}\sum_{k=1}^{n} \pi_{i} ( \text{sig}^{\alpha}(\bar{\eta}_{ik}) +\text{sgn}(\bar{\eta}_{ik}) ) \dot{\bar{\eta}}_{ik} \notag  \\ 
&=\sum_{i=1}^{N}\sum_{k=1}^{n} \pi_{i} ( \text{sig}^{\alpha}(\bar{\eta}_{ik}) +\text{sgn}(\bar{\eta}_{ik}) ) \times [-a^{2}_{i0} \kappa_{\varrho i} \text{sgn} (  \tilde{\xi}_{ik} )   \notag  \\ 
& \ \ \ +  \kappa_{\eta i} (\sum_{j=1}^{N}a_{ij} (\text{sig}^{\alpha}(\bar{\eta}_{jk}) -\text{sig}^{\alpha}(\bar{\eta}_{ik}) )- a_{i0}\text{sig}^{\alpha}(\bar{\eta}_{ik})) \notag   \\ 
& \ \ \ +  \kappa_{\eta i} (\sum_{j=1}^{N}a_{ij} (\text{sgn}(\bar{\eta}_{jk}) -\text{sgn}(\bar{\eta}_{ik}) )- a_{i0}\text{sgn}(\bar{\eta}_{ik}))] \notag   \\ 
& = - [ \text{sig}^{\alpha}(\bar{\eta}) +\text{sgn}(\bar{\eta} ) ]^{T} (\kappa_{\eta}  \Pi H \otimes I_{n})  [ \text{sig}^{\alpha}(\bar{\eta}) +\text{sgn}(\bar{\eta} ) ]  \notag   \\  
& \ \ \ - \sum_{i=1}^{N}\sum_{k=1}^{n} \kappa_{\varrho i}  \pi_{i}  a^{2}_{i0}  ( \text{sig}^{\alpha}(\bar{\eta}_{ik}) +\text{sgn}(\bar{\eta}_{ik}) )  \text{sgn} (  \tilde{\xi}_{ik} )   \notag   \\ 
& \leq - \lambda_{min}(\Xi)  \underline{\kappa}^{2}_{\eta} \sum_{i=1}^{N}\sum_{k=1}^{n} \left( 1+ 2 |\bar{\eta}_{ik} |^{\alpha} + |\bar{\eta}_{ik} |^{2\alpha} \right) + Nn\bar{\pi} \bar{k}_{\varrho}   \notag   \\ 
& \ \ \ + Nn \bar{\pi} \bar{\kappa}_{\varrho} \sum_{i=1}^{N}\sum_{k=1}^{n} |\bar{\eta}_{ik} |^{\alpha} \leq - \bar{\pi} \bar{\kappa}_{\varrho} \sum_{i=1}^{N}\sum_{k=1}^{n}  |\bar{\eta}_{ik} |^{\alpha} ,   \label{U3} 
\end{align} 
where $ \Xi=(\Pi H+H^{T}\Pi)/2 $ is given in Lemma \ref{DirectedGraphTheta}, $ \underline{\kappa}_{\eta}=\min \{\kappa_{\eta i}\} $ \\ ,  $ \bar{\kappa}_{\varrho}=\max \{\kappa_{\varrho i}\} $ and $ \underline{\kappa}^{2}_{\eta} \geq \bar{\pi} \bar{\kappa}_{\varrho} / \lambda_{min}(\Xi) $. 

%\vspace{1pt}
By Lemma \ref{young}, $ \dot{V}_{\eta}  \leq -  \bar{\pi} \bar{\kappa}_{\varrho} \left(  \sum_{i=1}^{N}\sum_{k=1}^{n} |\bar{\eta}_{ik} | \right)^{\alpha}  \leq - \epsilon_{1} V^{\frac{\alpha}{\alpha+1}}_{\eta} $ by (\ref{U1}) and  $ \dot{V}_{\eta}  \leq -  \bar{\pi} \bar{\kappa}_{\varrho}   \left(  \sum_{i=1}^{N}\sum_{k=1}^{n} |\bar{\eta}_{ik} | \right)^{\alpha} \leq  - \epsilon_{2} V^{\alpha}_{\eta} $ by (\ref{U2}), 
%\vspace*{-5pt}
%\begin{align} 
%\dot{V}_{\eta}  \leq -  \bar{\pi} \bar{k}_{\varrho} \left(  \sum_{i=1}^{N}\sum_{k=1}^{n} |\bar{\eta}_{ik} | \right)^{\alpha}  \leq - \epsilon_{1} V^{\frac{\alpha}{\alpha+1}}_{\eta}, 
%\label{U4}
%\end{align}   
%and from (\ref{U2}) that for Case (ii)  
%\vspace*{-5pt}
%\begin{align} 
%\dot{V}_{\eta}  \leq -  \bar{\pi} \bar{k}_{\varrho}   \left(  \sum_{i=1}^{N}\sum_{k=1}^{n} |\bar{\eta}_{ik} | \right)^{\alpha} \leq  - \epsilon_{2} V^{\alpha}_{\eta}, 
%\label{U5}
%\end{align} 
where $ \epsilon_{1}= \bar{\pi} \bar{\kappa}_{\varrho} (\frac{\alpha+1}{\bar{\pi}(\alpha+2)})^{\frac{\alpha}{\alpha+1}} $ and $ \epsilon_{2}= \bar{\pi} \bar{\kappa}_{\varrho}  (\frac{\alpha+1}{\bar{\pi}(\alpha+2)})^{\alpha} $. 

\vspace*{3pt}
Since $ 0.5<\alpha<1 $, exploiting Lemma \ref{lemma2} yields $ V_{\eta} \in \mathcal{L}_{\infty}$ and $\bar{\eta}_{i}$ converges to zero in a finite time, e.g.,  $\lim_{t\to T_{\eta}} \bar{\eta}_{i}=0_{n}$ with $T_{\eta}=$ $\max \{\frac{1+\alpha}{\epsilon_{1}} V^{\frac{1}{1+\alpha}}_{\eta}(0), \frac{1}{\epsilon_{2}(1-\alpha)} V^{1-\alpha}_{\eta}(0) \}$. As $\lim_{t\to T_{1}}a_{i0}\tilde{\varrho}_{i}=0_{n}$ from Step (i), we get that for $ t \geq T_{1} $, it follows from (\ref{Ebbb}) and Assumption \ref{CommunicaitonAssumption}  that $\lim_{t\to T_{2}} \tilde{\eta}_{i}=0_{n}$, $T_{2}=T_{\eta}+T_{1}$.

\textbf{Step (iii)}: prove that $\lim_{t\to T_{3}} \tilde{x}_{i,m}=0_{n}$. 

Consider a nonnegative Lyapunov function candidate  $ V_{m}(t)= \sum_{i=1}^{N} \sum_{k=1}^{n} \pi_{i} \frac{1}{\beta+1}|\bar{x}_{ik,m}|^{\beta+1} $, where $ \bar{x}_{ik,m}$ is the $ k $th element of $ \bar{x}_{i,m} $. By Lemma \ref{young}, $ (Nn)^{-\beta}\frac{\underline{\pi}}{\beta+1} (\sum_{i=1}^{N} \sum_{k=1}^{n} |\bar{x}_{ik,m}|)^{\beta+1}  \leq  V_{m}(t)  \leq \frac{\bar{\pi}}{\beta+1} (\sum_{i=1}^{N} \sum_{k=1}^{n} |\bar{x}_{ik,m}|)^{\beta+1} $. 

Let $ \bar{x}_{m} $ be a stacked vector of $ \bar{x}_{i,m} $ and  $ \underline{\kappa}_{m}=\min \{\kappa_{m i}\} $. The time derivative of $ V_{m}(t) $ along (\ref{DDa}) can be expressed as 
\vspace*{-6pt}
\begin{align} 
\hspace{-0.5em}
\dot{V}_{m} 
%&=\sum_{i=1}^{N}\sum_{k=1}^{n} \pi_{i} \text{sig}^{\beta}(\bar{x}_{ik,m})  (k_{mi} e^{\beta}_{i,m} + \bar{\eta}_{i} -a_{i0} \tilde{\varrho}_{i} ) \notag  \\ 
&=\sum_{i=1}^{N}\sum_{k=1}^{n} \pi_{i} \text{sig}^{\beta}(\bar{x}_{ik,m}) (\bar{\eta}_{ik} -a_{i0} \tilde{\varrho}_{ik}) + \sum_{i=1}^{N}\sum_{k=1}^{n} \pi_{i} \text{sig}^{\beta}(\bar{x}_{ik,m})     \notag   \\ 
&  \ \times \kappa_{m i}  (\sum_{j=1}^{N}a_{ij} (\text{sig}^{\beta}(\bar{x}_{jk,m}) -\text{sig}^{\beta}(\bar{x}_{ik,m}) )- a_{i0}\text{sig}^{\beta}(\bar{x}_{ik,m}))    \notag   \\ 
%& \leq -[ \text{sig}^{\beta}(\bar{x}_{m}) ]^{T} (k_{m}  \Pi H \otimes I_{n})  [ \text{sig}^{\alpha}(\bar{x}_{m})] + \sum_{i=1}^{N}\sum_{k=1}^{n} \pi_{i}   \notag   \\  
%&  \ \ \ \times \text{sig}^{\beta}(\bar{x}_{ik,m}) (\bar{\eta}_{ik} -a_{i0} \tilde{\varrho}_{ik})  \notag   \\  
& \leq  \sum_{i=1}^{N}\sum_{k=1}^{n} [- \lambda_{min}(\Xi)  \underline{\kappa}_{m}  | \bar{x}_{ik,m}|^{2\beta} + \bar{\pi} |\bar{x}_{ik,m}|^{^{\beta}} ( |\bar{\eta}_{ik} |+ | \tilde{\varrho}_{ik} |) ].   \notag 
%\label{U6} 
\end{align}  

By using the fact that $\lim_{t\to T_{1}}a_{i0}\tilde{\varrho}_{i}=0_{n}$ and  $\lim_{t\to T_{2}} \bar{\eta}_{i}=0_{n}$ in Steps (i) and (ii), respectively, we get for $ t\geq T_{2} $, 
\vspace*{-5pt}
\begin{equation} 
\dot{V}_{m} \leq - \lambda_{min}(\Xi) \underline{\kappa}_{m}  (  \sum_{i=1}^{N}\sum_{k=1}^{n}  | \bar{x}_{ik,m}| ) ^{2\beta} \leq  -\epsilon_{3} V^{\frac{2\beta}{\beta+1}}_{m},  \label{U7} 
\end{equation}
where $ \epsilon_{3}= \lambda_{min}(\Xi) \underline{\kappa}_{m}  (\frac{\beta+1}{\bar{\pi}})^{\frac{2\beta}{\beta+1}} $. Thus, by Lemma \ref{lemma2}, $ V_{m} \in \mathcal{L}_{\infty}$ and $\bar{x}_{i,m}$ converges to zero in a finite time, e.g., $\lim_{t\to T_{3}} \bar{x}_{i,m}=0_{n}$ with $T_{3}= \frac{\beta+1}{\epsilon_{3}(1-\beta)}V^{\frac{1-\beta}{\beta+1}}_{m}(T_{2}) + T_{2}$. Further, it follows from (\ref{Eaaa}) that $\lim_{t\to T_{3}} \tilde{x}_{i,m}=0_{n}$ by Assumption \ref{CommunicaitonAssumption}.	

Next, we show that $ \tilde{x}_{i,m} $ is uniformly bounded for $ t< T_{3} $. 

In Steps (i) and (ii), it has been shown that
$\lim_{t\to T_{1}}a_{i0}\tilde{\varrho}_{i}=0_{n}$ and  $\lim_{t\to T_{2}} \tilde{\eta}_{i}=0_{n}$. Besides, $ \tilde{\varrho}_{i} $, $ \tilde{\eta}_{i} $ are uniformly bounded for $ t< T_{2} $. Thus, $ |\tilde{\varrho}_{ik}| \leq \varrho_{0} $, $ |\tilde{\eta}_{ik}| \leq \eta_{0} $ for certain constants $\eta_{0} $, $ \varrho_{0} $. Next, we verify that for $ t< T_{3} $ and certain constant $ c_{0}>0 $,
\vspace*{-4pt}
\begin{equation}
\dot{V}_{m} \leq  \sum_{i=1}^{N}\sum_{k=1}^{n} \bar{\pi} |\bar{x}_{ik,m}|^{^{\beta}} ( |\bar{\eta}_{ik} |+ | \tilde{\varrho}_{ik} |) \leq c_{0} V^{\frac{\beta}{\beta+1}}_{m}, \label{U8}
\end{equation} 
which implies that $ \tilde{x}_{i,m} $ cannot escape in a finite time.  
%by using the comparison principle \cite{BookKhail}. 
Therefore,  $ \hat{x}_{i,m}$ is uniformly bounded at any finite time interval if $x_{0,m} $ will not escape to infinity in a finite time by Assumption \ref{LeaderAssumption}.

\vspace*{2pt}
\textbf{Step (iv)}: prove that $\lim_{t\to T_{4}} \tilde{x}_{i,s}=0_{n}, s=1,\cdots,m-1$. 

\vspace*{2pt}
Select the nonnegative Lyapunov function candidate as $ V_{s}(t)= \sum_{i=1}^{N} \sum_{k=1}^{n} \pi_{i} \frac{1}{\gamma+1}|\bar{x}_{ik,s}|^{\gamma+1} $ where $ \bar{x}_{ik,s}$ is the $ k $th entry of $ \bar{x}_{i,s} $. Similar to Step (iii), we get $\dot{V}_{s} \leq -\epsilon_{4} V^{\frac{2\gamma}{\gamma+1}}_{s} $ and $\lim_{t\to T_{ts}} \bar{x}_{i,s}=0_{n}$ for certain positive constants $ \epsilon_{4} $ and $ T_{ts} $, $s=1,\cdots,m-1$. Similarly, $\bar{x}_{i,s}$ is uniformly bounded for $ t< T_{4} $, $ T_{4}=\max\{T_{ts}\} $. Thus, $\lim_{t\to T_{4}} \tilde{x}_{i,s}=0_{n}$ by Assumption \ref{CommunicaitonAssumption}.
 
\vspace*{2pt} 
To conclude, estimates $ \varrho_{i} $, $ \eta_{i} $, $ x_{i,s} $, $ s=1,\cdots,m $  are bounded, and finite-time estimation is achieved, i.e., 	$\lim_{t\to T_{1}} a_{i0}\tilde{\varrho}_{i}=0_{n}$,  $\lim_{t\to T_{2}} \tilde{\eta}_{i}=0_{n}$, $\lim_{t\to T_{3}} \tilde{x}_{i,m}=0_{n}$, and $\lim_{t\to T_{4}} \tilde{x}_{i,s}=0_{n}$, $ s=1,\cdots,m-1 $ for certain $ T_{i}>0, i=1,2,3,4$.
\end{proof}

\vspace*{1pt}
\begin{remark}
Theorem 1 implies that each agent can accurately estimate the state of the nonlinear leader after a finite time under the proposed nonlinear distributed estimator. Then, each agent can use its associated estimate as a local reference in the distributed control based on a cascaded structure. As a result, this distributed estimator will facilitate the following distributed controller design so that Nussbaum gains are decoupled for each agent 
as only the local reference instead of neighboring information is utilized. On the other hand, with the adopted function $ \text{sgn}(\cdot) $ in the proposed estimator, the right-hand sides of $ \dot{\tilde{\eta}}_{i} $ and $ \dot{\tilde{\varrho}}_{i} $ in (\ref{Db}) and (\ref{Dc}) are discontinuous, and their solutions can be investigated in terms of differential inclusions based on the nonsmooth analysis. To avoid symbol redundancy, the differential inclusion is not applied.
\end{remark}

%\begin{remark}
%With the signum function $ \text{sign}(\cdot) $ in the proposed estimator, the right-hand sides of $ \dot{\tilde{\eta}}_{i} $ and $ \dot{\tilde{\varrho}}_{i} $ in (\ref{Db}) and (\ref{Dc}) are discontinuous and their solutions should be investigated in terms of differential inclusions based on the nonsmooth analysis \cite{Book1988Filippov}. Notice that the signum function is measurable and locally essentially bounded. Then, the Filippov solutions of the closed-loop dynamics always exist \cite{Book1988Filippov}. To avoid symbol redundancy, the differential inclusions are not utilized \cite{Book1988Filippov}. Since the signum function is measurable and locally essentially bounded, the Filippov solutions of the closed-loop dynamics always exist. Besides, the adopted Lyapunov function candidate in Appendix A is AC. Thus, its set-valued Lie derivative is a singleton at the discontinuous points and the proof still holds. To avoid symbol redundancy, the differential inclusions are not used. Further, the Filippov solutions are AC curves and the agents' states are continuous functions \cite{Book1988Filippov}. The nonlinear estimator in (\ref{Estimator}) and (\ref{filter}) is continuous.
%\end{remark}

\vspace*{-5pt} 
\subsection{Fault-Tolerant Formation Tracking Control}
\vspace*{-1pt} 
With the estimated information in Theorem 1, a new distributed adaptive controller will be developed in this subsection to achieve formation tracking of nonlinear multi-agent systems. 

In particular, the leader's states $ x_{0,k}, k=1,\cdots,m$  have been reconstructed via distributed estimator (\ref{Estimator}) and (\ref{filter}) for each agent. The rest is to propose a novel distributed adaptive control law to solve Problem 1. Specifically, we can transform the coordinated trajectory tracking problem into a simultaneous tracking problem. For $ i\in \mathcal{V} $, we define the following tracking errors as
\vspace*{-4pt} 
\begin{equation}
e_{i,1}=x_{i,1}-x_{0,1}-\varDelta_{i}=y_{i}- \hat{y}_{i} +\hat{y}_{i}-y_{0}-\varDelta_{i}= z_{i,1}+\tilde{y}_{i},  
\label{B1}  
\end{equation}
\vspace*{-12pt}
\begin{equation}
e_{i,k}=x_{i,k}-x_{0,k}=z_{i,k}+\tilde{x}_{i,k}, z_{i,k}=x_{i,k}-\hat{x}_{i,k}, k=2,\cdots,m, \notag 
%\label{B2}  
\end{equation}
where $ \hat{y}_{i}=\hat{x}_{i,1} $, $ z_{i,1}=y_{i}- \hat{y}_{i} -\varDelta_{i}$, $ \tilde{y}_{i}= \hat{y}_{i}-y_{0}$, $ \varDelta_{i} $ is the formation offest defined in (\ref{Problem1}), and 
$ \tilde{x}_{i,k}, k=1,\cdots,m $ is the estimated error defined in (\ref{Eaa}). Then, based on the fact that $\lim_{t\to \infty} \tilde{y}_{i}=0_{n}$ and $\lim_{t\to \infty} \tilde{x}_{i,k}=0_{n}$ from Theorem 1, the coordinated tracking control objective in (\ref{Problem1}) can be transformed into the simultaneous tracking objective as  
%$\lim_{t\to \infty} (x_{i} - \chi_{i}) =0_{n}$ and $\lim_{t\to \infty} ( \dot{x}_{i} - \vartheta_{i}) =0_{n}$, $\forall  i\in \mathcal{V} $.  
\vspace*{-3pt}
\begin{equation}
%\hspace{-0.5em}
\lim_{t\to \infty} z_{i,k} =0_{n}, \ k=1,2,\cdots,m, \ i \in \mathcal{V}. \label{transform}
\end{equation} 
 
From (\ref{DefinedErrors}), the time derivative of $ z_{i,k}, \ k=1,\cdots,m-1$ is  
\vspace*{-3pt} 
\begin{subequations}\label{FollowerErrors}	
\begin{align} 	
\hspace{-0.8em}
%\dot{z}_{i,1}&=
%z_{i, 2}-k_{1} \text{sig}^{\beta} ( \sum_{j=1}^{N}a_{ij}(\tilde{x}_{j,1}-\tilde{x}_{i,1}) - a_{i0} \tilde{x}_{i,1} ),   \label{FE0} \\
\dot{z}_{i,k}&=z_{i, k+1}-\kappa_{ki} \text{sig}^{\gamma} ( \sum_{j=1}^{N}a_{ij}(\tilde{x}_{j,k}-\tilde{x}_{i,k}) - a_{i0} \tilde{x}_{i,k} ),   \label{FE1} \\
\hspace{-0.8em}
\dot{z}_{i,m}&=\dot{x}_{i,m}-\eta_{i}-\kappa_{mi} \text{sig}^{\beta} ( \sum_{j=1}^{N}a_{ij}(\tilde{x}_{j,m}-\tilde{x}_{i,m}) - a_{i0} \tilde{x}_{i,m} ) \notag \\
&= f^{T}_{i,m}(x_{i}) \theta_{i} +g_{i,m}(x_{i}) [\phi_{i}(t) u_{i} + \psi_{i}(t)] +d_{i,m}(x_{i},t)  \notag \\ 
& \ \ \ - \eta_{i} -\kappa_{mi} \text{sig}^{\beta} ( \sum_{j=1}^{N}a_{ij}(\tilde{x}_{j,m}-\tilde{x}_{i,m}) - a_{i0} \tilde{x}_{i,m} ) \notag \\
& = f^{T}_{i,m}(x_{i}) \theta_{i} + G_{i,m}(x_{i},t)u_{i} +D_{i,m}(x_{i},t) -\tilde{\eta}_{i}  \notag \\ 
&\ \ \ - \kappa_{mi} \text{sig}^{\beta} ( \sum_{j=1}^{N}a_{ij}(\tilde{x}_{j,m}-\tilde{x}_{i,m}) - a_{i0} \tilde{x}_{i,m} ), \label{FE2} 
\end{align}
\end{subequations} 
where $  G_{i,m}(x_{i},t)= g_{i,m}(x_{i})\phi_{i}(t) \in \mathbb{R} $ and $ D_{i,m}(x_{i},t)=d_{i,m} ( \\ x_{i},t)+g_{i,m}(x_{i},t)\psi_{i}(t) -\dot{x}_{0,m} \in \mathbb{R}^{n} $. It follows from Assumptions \ref{FollwerGains} and \ref{FaultAssumption} that the sign of the control coefficient $  G_{i,m}(x_{i},t) \neq 0 $ is unknown and there exist positive functions  
$ G^{-}_{i,m}(x_{i})$, $G^{+}_{i,m}(x_{i}) $ so that $ G^{-}_{i,m}(x_{i}) \leq |G_{i,m}(x_{i},t)| \leq  G^{+}_{i,m}(x_{i})$. Define the positive constant $ \varepsilon_{i}= \text{sup}_{t\geq 0}\{\|D_{i,m}(x_{i},t)\| \} $. Then, $ \hat{\varepsilon}_{i} $, representing the estimate of this unknown bound vector $ \varepsilon_{i} 1_{n} $, is to be determined later, and $ \tilde{\varepsilon}_{i} =\varepsilon_{i} 1_{n}-\hat{\varepsilon}_{i} $ is its estimated error. 

\vspace*{4pt}
Next, we introduce a Nussbaum gain technique to deal with the time-varying control coefficients with completely unknown signs via a smooth function $ N(k) $ that satisfies the following properties: $ \underset{k \rightarrow \infty }{\text{lim}} \text{sup}  \frac{1}{k} \int^{k}_{0} N(s)ds  =+\infty $, $\underset{k \rightarrow \infty }{\text{lim}} \text{inf}  \frac{1}{k} \int^{k}_{0}N(s)ds=-\infty $. Throughout this paper, select $ N_{i}(\kappa_{i})=\text{exp}(\kappa^{2}_{i})  \text{cos}((\pi/2)  \kappa_{i})+1$ in \cite{Hu19AT} with nonidentical Nussbaum gains $ \kappa_{i} $ for each agent $i=1,\cdots,N$. 
%Note that the bias term here has been added to avoid initial transients if the faults not occur. 
Let $ s_{ij}$, $i=1,\cdots,N$, $j=1,\cdots,m$ represent the $ j $-th element of certain vector $ s_{i} \in \mathbb{R}^{n}$, and define $\text{diag} \{s_{i1},\cdots, s_{im}\} $ as a diagonal matrix with its main diagonal being $ s_{ij}$. Similarly, define $\text{diag} \{\frac{s_{i1}}{\sqrt{s^{2}_{i1}+\delta^{2}_{i1}(t)}},\cdots, \frac{s_{im}}{\sqrt{s^{2}_{im}+\delta^{2}_{im}(t)}}\} $ as a diagonal matrix with its main diagonal being $ \frac{s_{ij}}{\sqrt{s^{2}_{ij}+\delta^{2}_{ij}(t)}}$, where  $  \delta_{ij}(t)>0 $ is an integrable function so that $ \int_{0}^{\infty} \delta_{ij}(\omega)d\omega \leq \delta^{*}_{ij} $
%, i.e., $ \delta_{ij}(t)=\text{exp}(-\delta_{0ij}t) $
for  $ \delta^{*}_{ij}>0$.

\vspace*{4pt}
Based on the selected Nussbaum function, we propose a novel distributed adaptive controller as follows. 
 
\vspace*{3pt}
\textbf{\underline{\textit{Fault-Tolerant Distributed Adaptive Controller}}}: in light of the Nussbaum function and local estimates $ \hat{x}_{i,k} $ from (\ref{Estimator}) and (\ref{filter}), a fault-tolerant adaptive controller is proposed as
\vspace*{-3pt}
\begin{subequations}\label{Controller}
\begin{align}
\hspace*{-0.86em}
u_{i}&= N_{i}(\kappa_{i})\bar{u}_{i}, \   N_{i}(\kappa_{i})=\text{exp}(\kappa^{2}_{i})\text{cos}((\pi/2)\kappa_{i})+1, \label{S11}  \\
\hspace*{-0.6em}
\bar{u}_{i}&=  \bar{k}_{mi}\tilde{z}_{i,m} +\tilde{z}_{i,m-1} -\dot{z}^{*}_{i,m} +f^{T}_{i,m} \hat{\theta}_{i} + \text{diag} \{\tilde{z}_{\delta m} \}   \hat{\varepsilon}_{i},   \label{S12}  \\
\hspace*{-0.6em}
 \dot{\hat{\varepsilon}}_{i} &=\varGamma_{\varepsilon i}\text{diag} \{\tilde{z}_{\delta m}\} \tilde{z}_{i,m},  \tilde{z}_{\delta m}=\tilde{z}_{ij,m} / \sqrt{\tilde{z}^{2}_{ij,m}+\delta^{2}_{ij}(t)},  \label{S13} \\
\hspace*{-0.6em} \dot{\kappa}_{i}&=k_{\kappa i}\tilde{z}^{T}_{i,m} \bar{u}_{i},   \ \dot{\hat{\theta}}_{i}=\varGamma_{\theta i}f_{i,m}\tilde{z}_{i,m}, \ i=1,2,\cdots,N,
  \label{S14} 
\end{align} 
\end{subequations}
where $ N_{i}(\kappa_{i}) $ is the agent $ i $' Nussbaum function,  $f_{i,m}=f_{i,m}(x_{i}) $,  $ k_{\kappa i} $ is the positive constant, $ \varGamma_{\varepsilon i}, \varGamma_{\theta i}>0$ are two constrant gain matrices,   $\hat{\theta}_{i} $ is the estimate of unknown vector $\theta_{i}$, $ \hat{\varepsilon}_{i} $ is an estimate of the unknown vector $\varepsilon_{i}1_{n} $, and $ \tilde{z}_{i,m}, z^{*}_{i,m} $ are the subsequently defined error terms to be determined.     

Now, we are ready to present the robust fault-tolerant formation tracking result for a class of nonlinear multi-agent systems under time-varying actuator faults over the directed graph.

\vspace*{2pt} 
\begin{theorem}
Consider a class of nonlinear multi-agent systems consisting of $ N $ followers in (\ref{SystemModelFollower}) and the leader in (\ref{SystemModelLeader}) with time-varying actuator faults in (\ref{Fault}). Under Assumptions \ref{FollwerGains}-\ref{CommunicaitonAssumption}, the proposed controller in (\ref{Controller}) combined with the distributed estimator in (\ref{Estimator}) and (\ref{filter}) ensures that Problem \ref{Problem} is solvable in the sense that:  $ \underset{t\rightarrow \infty}{\text{lim}}  
(y_{i}(t)-y_{0}(t)) = \varDelta_{i} \ \text{and} \ \underset{t\rightarrow \infty }{\text{lim}}  (x_{i,k}(t)-x_{0,k}(t)) = 0_{n}$. 
\end{theorem}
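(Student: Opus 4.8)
The plan is to exploit the reduction already set up before the statement: since Theorem~\ref{theorem1} gives $\tilde{y}_i\to 0_n$ and $\tilde{x}_{i,k}\to 0_n$ in finite time, by (\ref{B1}) it suffices to establish the simultaneous tracking objective (\ref{transform}), i.e.\ $z_{i,k}\to 0_n$ for all $i\in\mathcal{V}$, $k=1,\dots,m$. I would complete the controller design by a recursive backstepping construction: starting from $\tilde{z}_{i,1}=z_{i,1}$, introduce virtual controls $z^{*}_{i,k}$ and shifted errors $\tilde{z}_{i,k}=z_{i,k}-z^{*}_{i,k-1}$, $k=2,\dots,m$, chosen so that, by (\ref{FollowerErrors}), $\dot{\tilde{z}}_{i,k}=-\bar{k}_{ki}\tilde{z}_{i,k}-\tilde{z}_{i,k-1}+\tilde{z}_{i,k+1}+\chi_{i,k}$ for $k<m$, where $\chi_{i,k}$ collects the $\kappa_{ki}\,\text{sig}^{\gamma}(\cdot)$ estimator terms; by Theorem~\ref{theorem1} these are uniformly bounded and vanish after a finite time, hence act only as integrable perturbations. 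At the top of the chain, $\dot{\tilde{z}}_{i,m}$ is (\ref{FE2}), carrying the unknown-parameter term $f^{T}_{i,m}\theta_i$, the bounded disturbance $D_{i,m}$ (with $\|D_{i,m}\|\le\varepsilon_i$), the finite-time-vanishing term $\tilde{\eta}_i$, and the time-varying coefficient $G_{i,m}=g_{i,m}\phi_i$ of unknown but \emph{constant} sign (since $\phi_i(t)>0$) with $0<G^{-}_{i,m}\le|G_{i,m}|\le G^{+}_{i,m}$; the controller (\ref{Controller}) is built to absorb exactly these.

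\emph{Lyapunov analysis.} Take $V=\sum_{i=1}^{N}\bigl(\tfrac12\sum_{k=1}^{m}\|\tilde{z}_{i,k}\|^{2}+\tfrac12\tilde{\theta}_i^{T}\varGamma_{\theta i}^{-1}\tilde{\theta}_i+\tfrac12\tilde{\varepsilon}_i^{T}\varGamma_{\varepsilon i}^{-1}\tilde{\varepsilon}_i\bigr)$ with $\tilde{\theta}_i=\theta_i-\hat{\theta}_i$, $\tilde{\varepsilon}_i=\varepsilon_i1_n-\hat{\varepsilon}_i$. Differentiating along the closed loop, the telescoping terms $\tilde{z}_{i,k}^{T}\tilde{z}_{i,k+1}$ cancel through the $z^{*}_{i,k}$; the law $\dot{\hat{\theta}}_i=\varGamma_{\theta i}f_{i,m}\tilde{z}_{i,m}$ cancels $\tilde{z}_{i,m}^{T}f^{T}_{i,m}\tilde{\theta}_i$; and, treating $D_{i,m}$ componentwise, the law $\dot{\hat{\varepsilon}}_i=\varGamma_{\varepsilon i}\,\text{diag}\{\tilde{z}_{\delta m}\}\tilde{z}_{i,m}$ together with Lemma~\ref{VaryingGain} (applied to each $\tilde{z}_{ij,m}$ with gain $\delta_{ij}(t)$) bounds the disturbance contribution by $\sum_j\varepsilon_i\bigl(|\tilde{z}_{ij,m}|-\tilde{z}_{ij,m}^{2}/\sqrt{\tilde{z}_{ij,m}^{2}+\delta_{ij}^{2}(t)}\bigr)\le\sum_j\varepsilon_i\delta_{ij}(t)$, which is integrable since $\int_0^{\infty}\delta_{ij}(\omega)\,d\omega\le\delta^{*}_{ij}$. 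Finally $u_i=N_i(\kappa_i)\bar{u}_i$ and $\dot{\kappa}_i=k_{\kappa i}\tilde{z}_{i,m}^{T}\bar{u}_i$ let me rewrite $\tilde{z}_{i,m}^{T}G_{i,m}N_i(\kappa_i)\bar{u}_i$, after adding and subtracting $\tilde{z}_{i,m}^{T}\bar{u}_i$, as $\bigl(G_{i,m}N_i(\kappa_i)+1\bigr)\dot{\kappa}_i/k_{\kappa i}$ minus a term absorbed by the rest of $\bar{u}_i$. Collecting,
\begin{equation*}
\dot{V}\le -\sum_{i=1}^{N}\sum_{k=1}^{m}\bar{k}_{ki}\|\tilde{z}_{i,k}\|^{2}+\sum_{i=1}^{N}\frac{G_{i,m}N_i(\kappa_i)+1}{k_{\kappa i}}\dot{\kappa}_i+\varpi(t),
\end{equation*}
with $\varpi(t)\ge 0$ gathering the $\sum_j\varepsilon_i\delta_{ij}(t)$ terms and the finite-time-vanishing estimator residuals, so $\int_0^{\infty}\varpi(\omega)\,d\omega<\infty$.

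\emph{Nussbaum argument and conclusion.} A maximal solution exists on some $[0,t_f)$ since the right-hand sides are locally Lipschitz off the switching surfaces. Integrating the last inequality on $[0,t_f)$ and using that $\text{sgn}(G_{i,m})$ is constant with $G^{-}_{i,m}\le|G_{i,m}|\le G^{+}_{i,m}$, the standard Nussbaum lemma applies: were some $\kappa_i$ unbounded on $[0,t_f)$, the term $\frac1{k_{\kappa i}}\int_0^{t}(G_{i,m}N_i(\kappa_i)+1)\dot{\kappa}_i\,ds$ would take both arbitrarily large positive and negative values, contradicting $0\le V(t)\le V(0)+\sum_i\frac1{k_{\kappa i}}\int_0^{t}(G_{i,m}N_i(\kappa_i)+1)\dot{\kappa}_i\,ds+\int_0^{\infty}\varpi$. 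Hence every $\kappa_i$, and so $V$, is bounded; this forces $t_f=\infty$ and gives $\tilde{z}_{i,k},\hat{\theta}_i,\hat{\varepsilon}_i\in\mathcal{L}_{\infty}$ and, upon rearranging and integrating once more, $\tilde{z}_{i,k}\in\mathcal{L}_2$. With all closed-loop signals bounded (using Theorem~\ref{theorem1} for the estimator variables and Assumption~\ref{LeaderAssumption} for the leader), $\dot{\tilde{z}}_{i,k}\in\mathcal{L}_{\infty}$, so each $\tilde{z}_{i,k}$ is uniformly continuous, and Lemma~\ref{Barbalat} yields $\tilde{z}_{i,k}\to 0_n$, hence $z_{i,k}\to 0_n$. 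Combining with $\tilde{y}_i\to 0_n$, $\tilde{x}_{i,k}\to 0_n$ from Theorem~\ref{theorem1} and (\ref{B1}) gives $y_i-y_0\to\varDelta_i$ and $x_{i,k}-x_{0,k}\to 0_n$, i.e.\ Problem~\ref{Problem} is solved.

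I expect the crux to be the Nussbaum step: closing it when $G_{i,m}$ is genuinely time-varying and state-dependent, while the Lyapunov derivative simultaneously carries the finite-time-vanishing estimator residuals. One must check carefully that (i) the constant sign of $G_{i,m}$ (inherited from $g_{i,m}$ via $\phi_i>0$) together with the two-sided bounds $G^{\pm}_{i,m}$ is precisely the structure admissible for the Nussbaum lemma, and (ii) the extra integrable perturbation $\varpi(t)$ and the a-priori only local existence of the solution do not spoil the boundedness-by-contradiction argument — this is where Theorem~\ref{theorem1}, the controller, and the choice of the integrable gains $\delta_{ij}(t)$ genuinely interlock.
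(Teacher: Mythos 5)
Your proposal follows essentially the same route as the paper's proof: reduction to the simultaneous tracking objective via Theorem~\ref{theorem1}, a backstepping construction with virtual controls $z^{*}_{i,k}$, the composite Lyapunov function with $\tilde{\theta}_i$ and $\tilde{\varepsilon}_i$ terms, the $\delta_{ij}(t)$-smoothed robust term bounded by Lemma~\ref{VaryingGain}, the Nussbaum boundedness-by-contradiction argument (the paper cites \cite{Wen98TAC} for this step), and Barbalat's lemma to conclude $\tilde{z}_{i,k}\to 0_n$. The argument is correct and, if anything, spells out the maximal-solution and contradiction steps more explicitly than the paper does.
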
  

\vspace*{4pt} 
\begin{proof} 
%\vspace*{2pt}
%Next, we focus on the fault-tolerant distributed adaptive controller design for nonlinear multi-agent systems irrespective of unknown control directions and time-varying actuator faults. 
%Next, we focus on the adaptive controller development.  
%\vspace*{1pt}
%\underline{\textit{\textbf{Distributed Adaptive Controller Development}}:} the control law development will be performed by following a step-by-step procedure. For clarity and conciseness, the detailed design procedure can be presented as follows. 
Define an estimated error $ \hat{e}_{i,k} =\text{sig}^{\gamma} ( \sum_{j=1}^{N} a_{ij} (\tilde{x}_{j,k} \\ -\tilde{x}_{i,k}) - a_{i0} \tilde{x}_{i,k}), k=1,\cdots,m$.  Then, for clarity and conciseness, a step by step procedure is presented as follows. 
 
%in light of (\ref{FollowerErrors}). 

\vspace*{2pt}  
\textbf{Step 1:} Introduce two error variables
\vspace*{-3pt} 
\begin{subequations} \label{Step1} 
\begin{align} 
\tilde{z}_{i,1}&=z_{i,1}=y_{i}-\hat{y}_{i}-\varDelta_{i}, \ i =1,2,\cdots,N,  
\label{S1} \\ 
\tilde{z}_{i,2}&= z_{i,2}-z^{*}_{i,2},  \ z^{*}_{i,2}=-\bar{k}_{1i} \tilde{z}_{i,1},  \label{S2} 
	\end{align} 
\end{subequations}
where $ z^{*}_{i,2} $ is a virtual control input for the first subsystem in (\ref{FollowerErrors}) and $ \bar{k}_{1i}>0 $ is a constant gain. 

Choose the Lyapunov candidate as $ V_{i,1}=\frac{1}{2} \tilde{z}^{T}_{i,1}\tilde{z}_{i,1} $. Then, the time derivative of $ V_{i,1} $ along the first subsystem in (\ref{FollowerErrors}) is  
\vspace*{-3pt} 
\begin{equation}
\dot{V}_{i,1}=\tilde{z}^{T}_{i,1}(z_{i,2}-\kappa_{1i} \hat{e}_{i,1} )=\tilde{z}^{T}_{i,1}(\tilde{z}_{i,2}-\bar{k}_{1i}\tilde{z}_{i,1} -\kappa_{1i} \hat{e}_{i,1} ). \label{S3} 
\end{equation}

\textbf{Step 2:} In this step, we analyze the error $ \tilde{z}_{i,2} $, i.e., $ z_{i,2} $ tracks the virtual control input $ z^{*}_{i,2} $. Definite two error variables 
\vspace*{-3pt} 
\begin{subequations} \label{Step2} 
	\begin{align}  
	\dot{\tilde{z}}_{i,2}&= z_{i,3}-\kappa_{2i}\hat{e}_{i,2} -\dot{z}^{*}_{i,2}, \ \dot{z}^{*}_{i,2} =- \bar{k}_{1i} \dot{\tilde{z}}_{i,1},  
	\label{S4} \\ 
	\tilde{z}_{i,3}&= z_{i,3}-z^{*}_{i,3},  \ z^{*}_{i,3}=-\bar{k}_{2i} \tilde{z}_{i,2}-\tilde{z}_{i,1}+ \dot{z}^{*}_{i,2},   \label{S5} 
	\end{align} 
\end{subequations}
where $ z^{*}_{i,3} $ is a virtual control input and $ \bar{k}_{2i}>0 $.

Choose the augmented Lyapunov candidate as $ V_{i,2}=V_{i,1}+\frac{1}{2} \tilde{z}^{T}_{i,2}\tilde{z}_{i,2} $. Then, the time derivative of $ V_{i,2} $ is derived as 
\vspace*{-3pt} 
\begin{align}
\dot{V}_{i,2}&=-\bar{k}_{1i}\tilde{z}^{T}_{i,1}\tilde{z}_{i,1}  +\tilde{z}^{T}_{i,1}\tilde{z}_{i,2}-\kappa_{1i} \tilde{z}^{T}_{i,1}\hat{e}_{i,1} + \tilde{z}^{T}_{i,2} \dot{\tilde{z}}_{i,2}   \label{S6}  \\
&= -\bar{k}_{1i}\tilde{z}^{2}_{i,1} -\bar{k}_{2i}\tilde{z}^{2}_{i,2}  +\tilde{z}^{T}_{i,2}\tilde{z}_{i,3}-\kappa_{1i} \tilde{z}^{T}_{i,1}\hat{e}_{i,1} -\kappa_{2i} \tilde{z}^{T}_{i,2}\hat{e}_{i,2}. \notag 
\end{align}

%\vspace*{-1pt}
\textbf{Step} $ \mathbf{q \ (3\leq q \leq m-1)} $: Similarly, define error variables 
\vspace*{-3pt} 
\begin{subequations} \label{Step3} 
	\begin{align}  
	\hspace{-1.0em}
	\dot{\tilde{z}}_{i,s}&= z_{i,q+1}-\kappa_{qi}\hat{e}_{i,q}-\dot{z}^{*}_{i,q},   \dot{z}^{*}_{i,q}= -\sum_{s=1}^{q-1} \bar{k}_{si} \dot{\tilde{z}}^{*}_{i,s},  i \in \mathcal{V},  
	\label{S7} \\ 
	\hspace{-1.0em}
	\tilde{z}_{i,q+1}&= z_{i,q+1}-z^{*}_{i,q+1},   z^{*}_{i,q+1}=-\bar{k}_{qi} \tilde{z}_{i,q}-\tilde{z}_{i,q-1}+ \dot{z}^{*}_{i,q},   \label{S8} 
	\end{align} 
\end{subequations}
where $ z^{*}_{i,q+1} $ is a virtual control input and $ \bar{k}_{qi}>0 $. 

\vspace*{3pt}
Choose the augmented Lyapunov candidate as $ V_{i,q}=V_{i,q-1}+\frac{1}{2} \tilde{z}^{T}_{i,q}\tilde{z}_{i,q} $. Then, the time derivative of $ V_{i,q} $ can be derived as 
\vspace*{-3pt} 
\begin{equation}
\dot{V}_{i,q}=-\sum_{s=1}^{q-1} \bar{k}_{si} \tilde{z}^{T}_{i,s}\tilde{z}_{i,s} + \tilde{z}^{T}_{i,q}\tilde{z}_{i,q+1}-\sum_{s=1}^{q-1} \kappa_{si} \tilde{z}^{T}_{i,s}\hat{e}_{i,s}.   \label{S9}    
\end{equation}

%\vspace*{-2pt}
\textbf{Step m:} According to the analysis in Step $ q $, we have that for this final step, the time derivative of $ \tilde{z}_{i,m} $ by (\ref{FollowerErrors}) is 
%\vspace*{-3pt} 
\begin{align} 
%\hspace{-1.0em}	
\dot{\tilde{z}}_{i,m}
& = f^{T}_{i,m} \theta_{i} + G_{i,m}(x_{i},t)u_{i} +D_{i,m}(x_{i},t) -\tilde{\eta}_{i} - \kappa_{mi}  \notag \\ 
&\ \ \times \text{sig}^{\beta} ( \sum_{j=1}^{N}a_{ij}(\tilde{x}_{j,m}-\tilde{x}_{i,m}) - a_{i0} \tilde{x}_{i,m} ) -\dot{z}^{*}_{i,m}. \label{S10} 
\end{align} 

%\vspace*{-1pt}
Next, we select the augmented Lyapunov candidate as $ V_{i,m}=V_{i,m-1}+\frac{1}{2} \tilde{z}^{T}_{i,m}\tilde{z}_{i,m} $. Then, the time derivative of $ V_{i,m} $ using (\ref{S10}) can be further expressed as 
\vspace*{-3pt} 
\begin{align}
%\hspace{-0.5em}
\dot{V}_{i,m}&=-\sum_{s=1}^{m-1} \bar{k}_{si} \tilde{z}^{T}_{i,s}\tilde{z}_{i,s} -\sum_{s=1}^{m-1} \kappa_{si} \tilde{z}^{T}_{i,s}\hat{e}_{i,s}  + \tilde{z}^{T}_{i,m} [f^{T}_{i,m} \theta_{i}   \label{S99}   \\
\hspace{-1.0em}
&\  + G_{i,m}(x_{i},t)u_{i} +D_{i,m}(x_{i},t) + \tilde{z}_{i,m-1} -\tilde{\eta}_{i} -\dot{z}^{*}_{i,m} ].  \notag 
\end{align}

Substituting the control law (\ref{S11}) and (\ref{S12}) into (\ref{S10}) yields 
%the following closed-loop error system for Step $ m $  
%\vspace*{-3pt} 
%\begin{align} 	
%\dot{\tilde{z}}_{i,m}
%& = f^{T}_{i,m}(x_{i},t) \theta_{i} + G_{i,m}(x_{i},t)\mathcal{N}_{i}(\kappa_{i}) (\bar{k}_{m}\tilde{z}_{i,m} +\tilde{z}_{i,m-1}  \notag \\ 
%&\ \ -\dot{z}^{*}_{i,m} +f^{T}_{i,m} \hat{\theta}_{i} + \text{diag} \{\tilde{z}_{\delta ij,m} \}   \hat{\varepsilon}_{i})  +D_{i,m}(x_{i},t) -\tilde{\eta}_{i} \notag \\
%& \ \ -k_{m} \text{sig}^{\alpha} ( \sum_{j=1}^{N}a_{ij}(\tilde{x}_{j,m}-\tilde{x}_{i,m}) - a_{i0} \tilde{x}_{i,m} ) -\dot{z}^{*}_{i,m}. \label{S15} 
%\end{align} 
\vspace*{-5pt} 
\begin{align} 
%\hspace{1.0em}	
\dot{\tilde{z}}_{i,m}
%& = f^{T}_{i,m}  \theta_{i} + ( G_{i,m}(x_{i},t)N_{i}(\kappa_{i}) + 1 ) \bar{u}_{i} \label{S15}  \\
%& \ \ -\bar{u}_{i} +D_{i,m}(x_{i},t) -\tilde{\eta}_{i} -\kappa_{mi} \hat{e}_{i,m}-\dot{z}^{*}_{i,m}   \notag \\ 
&= f^{T}_{i,m} \tilde{\theta}_{i} - \bar{k}_{m}\tilde{z}_{i,m}  -\tilde{z}_{i,m-1} -\text{diag} \{\tilde{z}_{\delta m} \}   \hat{\varepsilon}_{i} -\tilde{\eta}_{i}  \label{S15} \\  
& \ \ +( G_{i,m}(x_{i},t)N_{i}(\kappa_{i}) +1 ) \bar{u}_{i} +D_{i,m}(x_{i},t)  - \kappa_{mi} \hat{e}_{i,m}. \notag 
\end{align} 

%\vspace*{3pt} 
Select $ V_{i}=V_{i,m}+\frac{1}{2} \tilde{\theta}^{T}_{i} \varGamma^{-1}_{\theta i} \tilde{\theta}_{i} + \frac{1}{2} \tilde{\varepsilon}^{T}_{i} \varGamma^{-1}_{\varepsilon i} \tilde{\varepsilon}_{i} $ for each agent $ i $, the time derivative of $ V_{i}$ using (\ref{S99}) is written as  
%can be further rewritten as 
\vspace*{-5pt} 
\begin{align} 
%\hspace{-1.8em}	
\dot{V}_{i}
& = -\sum_{s=1}^{m-1} \bar{k}_{si} \tilde{z}^{T}_{i,s}\tilde{z}_{i,s} -\sum_{s=1}^{m-1} \kappa_{si} \tilde{z}^{T}_{i,s}\hat{e}_{i,s} +   \tilde{z}^{T}_{i,m} f^{T}_{i,m} \tilde{\theta}_{i} \notag \\ 
& \ \ \  + \tilde{z}^{T}_{i,m} ( G_{i,m}(x_{i},t)N_{i}(\kappa_{i}) +1 ) \bar{u}_{i}- \tilde{\theta}^{T}_{i} f_{i,m} \tilde{z}_{i,m}  \notag \\ 
& \ \ \ + \tilde{z}^{T}_{i,m} (D_{i,m}(x_{i},t)-\text{diag} \{\tilde{z}_{\delta  m} \}   \hat{\varepsilon}_{i})  -\tilde{\varepsilon}^{T}_{i}\text{diag} \{\tilde{z}_{\delta m}\} \tilde{z}_{i,m}   \notag \\ 
& \ \ \  + \tilde{z}^{T}_{i,m}(-\bar{k}_{mi}\tilde{z}_{i,m} - \kappa_{mi} \hat{e}_{i,m} -\tilde{\eta}_{i}).      \label{S16} 
\end{align}

%\vspace*{4pt} 
Notice that it follows from Lemma \ref{VaryingGain} that $ |\tilde{z}_{ij,m}|-\tilde{z}_{ij,m}   \tilde{z}_{ij,m} / \\  \sqrt{\tilde{z}^{2}_{ij,m}+\delta^{2}_{ij}(t)}  \leq \delta_{ij}(t) $, where $ \tilde{z}_{ij,m} $ denotes the $j$-th element of $ \tilde{z}_{i,m} $, $ i=1,\cdots,N, j=1,\cdots,n $. Let $\varepsilon_{ij} $ be the $ j $th element of $ \varepsilon_{i}= \text{sup}_{t\geq 0}\{\|D_{i,m}(x_{i},t)\| \} $. Then, we have   
\vspace*{-5pt} 
\begin{align} 
%\hspace{-0.5em}	
\dot{V}_{i}
& \leq  -\sum_{s=1}^{m-1} \bar{k}_{si} \tilde{z}^{T}_{i,s}\tilde{z}_{i,s} -\sum_{s=1}^{m-1} \kappa_{si} \tilde{z}^{T}_{i,s}\hat{e}_{i,s}- \tilde{z}^{T}_{i,m} (\bar{k}_{mi}\tilde{z}_{i,m} +  \tilde{\eta}_{i}  \notag \\ 
&\ \ \  + \kappa_{mi} \hat{e}_{i,m} ) + \sum^{n}_{j=1}  \left[  |\tilde{z}_{ij,m}|\varepsilon_{ij}- \frac{ \tilde{z}^{2}_{ij,m} \varepsilon_{ij}} { \sqrt{\tilde{z}^{2}_{ij,m}+\delta^{2}_{ij}(t)}}   \right]    \notag \\
& \ \ \ + \tilde{z}^{T}_{i,m} ( G_{i,m}(x_{i},t)N_{i}(\kappa_{i}) +1) \bar{u}_{i} \notag \\
& \leq -\sum_{s=1}^{m} \bar{k}_{si} \tilde{z}^{T}_{i,s}\tilde{z}_{i,s} -\sum_{s=1}^{m} \kappa_{si} \tilde{z}^{T}_{i,s}\hat{e}_{i,s} + \sum^{n}_{j=1} \delta_{ij}(t) \varepsilon_{ij}  \notag \\  
& \ \ \ - \tilde{z}^{T}_{i,m}\tilde{\eta}_{i}  +  G_{i,m}(x_{i},t)(N_{i}(\kappa_{i}) +1)  \dot{\kappa}_{i}/k_{\kappa i}.      \label{S17} 
\end{align}

\vspace*{-5pt}
Thus, the overall Lyapunov function candidate is selected as $ V=\sum_{i=1}^{N} V_{i}$. Then, its time derivative is described by 
\vspace*{-3pt} 
\begin{align} 	
\dot{V}
& \leq -\sum_{i=1}^{N}  ( \sum_{s=1}^{m} \bar{k}_{si} \tilde{z}^{T}_{i,s}\tilde{z}_{i,s}+\sum_{s=1}^{m} \kappa_{si} \tilde{z}^{T}_{i,s}\hat{e}_{i,s} + \tilde{z}^{T}_{i,m}\tilde{\eta}_{i} )   \label{S18}  \\  
& \ \ \ + \sum_{i=1}^{N}    \sum^{n}_{j=1}  \delta_{ij}(t) \varepsilon_{ij}   + \sum_{i=1}^{N} \frac{1}{k_{\kappa i}} G_{i,m}(x_{i},t)(N_{i}(\kappa_{i}) +1)  \dot{\kappa}_{i} .   \notag    
\end{align}

\vspace*{-5pt}
Then, integrating both sides of (\ref{S18}) gives rise to  $ V(t)\leq  V(0) - \int_{0}^{t}W(s)ds + I_{A}+I_{B} $, 
%\vspace*{-5pt} 
%\begin{align} 	
%V(t)
%& 
%\leq - \int_{0}^{t} \sum_{i=1}^{N} ( \sum_{s=1}^{m} \bar{k}_{s} \tilde{z}^{T}_{i,s}\tilde{z}_{i,s} +\sum_{s=1}^{m} \bar{k}_{s} \tilde{z}^{T}_{i,s}\hat{e}_{i,s} + \tilde{z}^{T}_{i,m}\tilde{\eta}_{i} )   \notag  \\  
%%& \ \ \ +  \int_{0}^{t} \sum_{i=1}^{N} \sum^{n}_{j=1}  ( \delta_{ij}(t) \varepsilon_{ij}  +\sum^{n}_{j=1} (G_{ij,m}\mathcal{N}_{i}(\kappa_{ij}) +1 ) \dot{\kappa}_{ij} ).  \notag 
%& \ \ \ + I_{A}+I_{B} +  V_{0}.  \label{S19}  
%%\label{S19}   
%\end{align} 
where $ W(t) = \sum_{i=1}^{N} ( \sum_{s=1}^{m} \bar{k}_{si} \tilde{z}^{T}_{i,s}\tilde{z}_{i,s} +\sum_{s=1}^{m} \kappa_{si} \tilde{z}^{T}_{i,s}\hat{e}_{i,s} + \tilde{z}^{T}_{i,m}\tilde{\eta}_{i} )$, $ I_{A}=\int_{0}^{t} \sum_{i=1}^{N} \sum^{n}_{j=1} \delta_{ij}(\omega) |\varepsilon_{ij}|d\omega \leq  \sum_{i=1}^{N} \sum^{n}_{j=1}  \delta^{*}_{ij} |\varepsilon_{ij}| < \delta^{*} $ for certain positive scalarss $ \delta^{*} $, and  $ I_{B}= \int_{0}^{t} \sum_{i=1}^{N} k^{-1}_{\kappa i} |(G_{i,m}N_{i}  (\kappa_{i} (\omega))  +1 )  \dot{\kappa}_{i}(\omega)|d\omega $ can be bounded by seeking a contradiction under Assumptions \ref{FollwerGains} and \ref{FaultAssumption} with a similar spirit of arguments in \cite{Wen98TAC}. 
%Then, we can obtain that $ \int_{0}^{t}W(s)ds \leq V(0)-V(t)+I_{A}+I_{B} $, where $ W(t) = \sum_{i=1}^{N} ( \sum_{s=1}^{m} \bar{k}_{si} \tilde{z}^{T}_{i,s}\tilde{z}_{i,s} +\sum_{s=1}^{m} \kappa_{si} \tilde{z}^{T}_{i,s}\hat{e}_{i,s} + \tilde{z}^{T}_{i,m}\tilde{\eta}_{i} )$, and $ \int_{0}^{t}W(s)ds $ can be upper bounded. 
Then, $ \int_{0}^{t}W(s)ds $ can be upper bounded. Thus, the existence of $ \lim_{t\rightarrow \infty} \int_{0}^{t}W(s)ds$ can be guaranteed and it is finite. According to Theorem 1 and the input-to-state stability theory, we can obtain that $ \hat{e}_{i,s}, \tilde{\eta}_{i} $ are bounded and moreover, all signals in $ V_{i} $ are bounded. Hence, the linear analysis can be applied to show that $ \tilde{z}_{i,s}, s=1,\cdots,m $, and $ \dot{\tilde{z}}_{i,m} $ are bounded. Since $ \int_{0}^{t}W(s)ds $ is bounded, we further get  that  $ \lim_{t\rightarrow \infty} \int_{0}^{t} \sum_{i=1}^{N} \sum_{s=1}^{m} \bar{k}_{si} \tilde{z}^{T}_{i,s}(\tau)\tilde{z}_{i,s}(\tau) d\tau $ can be upper bounded as $ \int_{0}^{t} \sum_{i=1}^{N} (\sum_{s=1}^{m} \tilde{z}^{T}_{i,s}(\tau)\hat{e}_{i,s}(\tau) + \tilde{z}^{T}_{i,m}(\tau)\tilde{\eta}_{i}(\tau)) d\tau$ are bounded due to $ \lim_{t\rightarrow T_{2}} \tilde{\eta}_{i} =0_{n} $ and  $  \lim_{t\rightarrow T_{3}} \hat{e}_{i,s} =0_{n} $. Thus, $ \tilde{z}_{i,s} $ is square integrable. The Barbalat's Lemma in Lemma \ref{Barbalat} is used to conclude  $  \lim_{t\to \infty} \tilde{z}_{i,s}=0_{n}$. That is, $ \lim_{t\to \infty} z_{i,s} =0_{n}$ via the defined virtual control inputs. Hence, $ \underset{t\rightarrow \infty}{\text{lim}}  
(y_{i}(t)-y_{0}(t)) = \varDelta_{i} \ \text{and} \ \underset{t\rightarrow \infty }{\text{lim}}  (x_{i,k}(t)-x_{0,k}(t)) = 0_{n}$, $k=2,\cdots,m $.
%it gets that the robust fault-tolerant formation tracking can be achieved in the sense that 
\end{proof}

%\vspace*{-3pt}
\section{Application to Task-space Cooperative Tracking}
% control of networked manipulators} 
\vspace*{-2pt}
\subsection{Networked Manipulator Model}
\vspace*{-3pt} 
Consider a group of $N$ manipulators, where the kinematics and dynamics of each manipulator $ i \in \mathcal{V} $ are governed by 
\vspace*{-4pt}
%\begin{subequations}\label{ManipulatorMode	
\begin{align}  
&x_{i}=S_{i}(q_{i}), \ \dot{x}_{i}=J_{i}(q_{i}) \dot{q}_{i}, \ i=1,2,\cdots,N,   \label{M1} \\
&M_{i}(q_{i})\ddot{q}_{i}+C_{i}(q_{i},\dot{q}_{i})\dot{q}%
_{i}+G_{i}(q_{i})+F_{i}(q_{i},\dot{q}_{i})=g_{i}(t)\tau_{ai}+d_{i}(t),   \notag 
%\label{M1}
\end{align}
%\end{subequations}
%\begin{equation}
%M_{i}(q_{i})\ddot{q}_{i}+C_{i}(q_{i},\dot{q}_{i})\dot{q}%
%_{i}+G_{i}(q_{i})+F_{i}(q_{i},\dot{q}_{i})=g_{i}(t)\tau_{ai}+d_{i},   
%\label{M1}
%\end{equation}%
where $x_{i} \in \mathbb{R}^{n} $ is a generalized end-effector configuration, $ S_{i} (q_{i}) \\ :  \mathbb{R}^{l} \rightarrow \mathbb{R}^{n} $ is a nonlinear mapping from the joint space to the task space, and $ J_{i}(q_{i})=\partial{S_{i}(q_{i})}/\partial{q_{i}} \in \mathbb{R}^{n\times l} $ is its Jacobian matrix, $ q_{i}$ and $\dot{q}_{i} \in \mathbb{R}^{l} $ denote generalized position and velocity vectors, respectively, $M_{i}(q_{i})$ $\in $ $\mathbb{R}^{l\times l}$ is an inertia matrix, $C_{i}(q_{i},\dot{q}_{i})\dot{q}_{i} \in \mathbb{R}^{l} \\ $ is a Coriolis  centrifugal force vector, $G_{i}(q_{i})\in \mathbb{R}^{l}$ is a gravity vector,  $F_{i}(q_{i},\dot{q}_{i}) \in \mathbb{R}^{l}$ are uncertain dynamics 
%that represent the unknown changes in the dynamic model due to the viscous and Coulomb frictions, 
(e.g., $ F_{i}(q_{i},\dot{q}_{i})=F_{vi}\tanh (\dot{q}_{i}) + F_{ci}\text{sgn}(\dot{q}_{i})$ for matrices $F_{vi}$ and  $F_{ci} $), $\tau _{ai} \in \mathbb{R}^{l}$ are control torques with faults, $d_{i} \in \mathbb{R}^{l}$ are external disturbances, and $ g_{i}(t) \neq 0$ are unknown control coefficients relating $ \tau_{ai} $ to  torques $ \tau_{i} $ to be designed. The sign of $ g_{i}(t) $ is unknown. 

\vspace*{2pt}
Similar to (\ref{Fault}), two types of actuator faults are described by  
\vspace*{-3pt} 
\begin{equation}
\tau_{ai}= \phi_{i}(t) \tau_{i} + \psi_{i}(t), \ i=1,2,\cdots,N,  \label{FaultManipulator}
\end{equation}%
where $ 0<\phi_{i}(t)\leq 1  $ and $ \psi_{i}(t) \in \mathbb{R}^{l} $ represent the actuation loss of effectiveness fault and the float fault, respectively. 

%\vspace*{1pt} 
%\textbf{\textit{Manipulator Kinematics}}: since the end-effector of the manipulator is generally controlled to interact with objects or other robots in the workspace, its task-space motion is required in the design. 
%%of the manipulator control system. 
%The kinematic model of robotic manipulators is given as
%\vspace*{-4pt}
%\begin{equation}
%x_{i}=S_{i}(q_{i}), \ \dot{x}_{i}=J_{i}(q_{i}) \dot{q}_{i}, \ i=1,2,\cdots,N,   \label{M2}
%\end{equation}% 
%where $x_{i} \in \mathbb{R}^{n} $ denotes the generalized end-effector configuration of the $ i $th manipulator, $ S_{i}(q_{i}): \mathbb{R}^{l} \rightarrow \mathbb{R}^{n} $ is the nonlinear mapping from the joint space to the task space, and $ J_{i}(q_{i})=\partial{S_{i}(q_{i})}/\partial{q_{i}} \in \mathbb{R}^{n\times l} $ is the manipulator Jacobian matrix.

%To facilitate the subsequent control development and stability analysis, the results rely on the assumption that the joint measurements ($ q_{i} $ and $ \dot{q}_{i} $) are available. Moreover, the following properties are exploited for the subsequent stability analysis.

%\vspace*{2pt}
%\textbf{\textit{Manipulator Properties:}} 
%(basic properties of robot model) 
%are listed to facilitate the controller design and stability analysis.

\vspace*{1pt}
\textit{Property 1:}\label{M_proverty} $M_{i}(q_{i})$ is symmetric and positive definite.

\textit{Property 2:}\label{Skewsymmetric} $  \dot{M}_{i}(q_{i})-2C_{i}(q_{i},\dot{q}_{i}) $ is skew symmetric.

\textit{Property 3:}\label{Adaptive} 
For $x, y\in \mathbb{R}^{n}$, $M_{i}(q_{i})y+C_{i}(q_{i},\dot{q}_{i})x+G_{i}(q_{i})=Y_{i}(q_{i},\dot{q}_{i},x,y)\theta_{i}$, where $Y_{i}(\cdot)\in \mathbb{R}^{l\times p}$ denotes a known dynamic regression matrix, and $\theta_{i} \in \mathbb{R}^{p}$ is an unknown parameter vector.

\textit{Property 4:}\label{TaskAdaptive} The kinematics (\ref{M1}) relies linearly on a kinematic parameter vector $ a_{i}\in \mathbb{R}^{r} $, i.e., $ \dot{x}_{i}=J_{i}(q_{i}) \dot{q}_{i}= Z_{i}(q_{i}, \dot{q}_{i}) a_{i} $, where $ Z_{i}(q_{i}, \dot{q}_{i}) \in \mathbb{R}^{n \times r} $  is a known kinematic regression matrix.

%The control objective next is to design fault-tolerant distributed adaptive controllers to achieve task-space cooperative tracking of networked manipulators with nonidentical unknown control signs irrespective of uncertain kinematics, dynamics, and faults. 
%with nonidentical unknown control signs under a directed  graph irrespective of uncertain kinematics, dynamics, and time-varying actuator faults.

%We define a fault-tolerant cooperative manipulation problem.   

\vspace*{2pt} 
\begin{problem} \label{Problemm}
Consider the robots' dynamics and kinematics in (\ref{M1}) with actuator faults in (\ref{FaultManipulator}). Given a directed graph $\mathcal{\bar{G}}$, design a distributed controller $ \tau_{i} $ so that each robot achieves   
\vspace*{-3pt} 
\begin{equation}
\underset{t\rightarrow \infty}{\text{lim}}  
(x_{i}-x_{d}) = 0_{n} \ \text{and} \ \underset{t\rightarrow \infty }{\text{lim}}  (\dot{x}_{i}-\dot{x}_{d}) = 0_{n}, \ i\in \mathcal{V}, \label{M3}
\end{equation}%
where $ x_{d} $ is a desired global task reference and $ \dot{x}_{d} $ is its velocity. 
\end{problem}  

\vspace*{-8pt} 
\subsection{Task-space Coordinated Tracking of Networked Manipulators} 
\vspace*{-2pt}
%%In this subsection, we focus on task-space coordinated tracking control of networked robotic manipulators. 
%%The main goal is to control manipulators to cooperatively track a desired time-varying reference trajectory. 
%The existing related works in \cite{Zhang17Tcyber,WangAT,LiuTRO,WangTAC,LiangTcyber} assuming that desired global task information ($ x_{d}$, $\dot{x}_{d} $, and $ \ddot{x}_{d}$) is available to all robots, is undesirable for practical applications. In contrast, we present a distributed  framework to achieve task-space coordination where each robot does not need to fully have this information. 
Unlike works in \cite{Zhang17Tcyber,WangAT,LiuTRO,WangTAC,LiangTcyber} requiring available task information ($ x_{d}$, $\dot{x}_{d} $, and $ \ddot{x}_{d}$) to all robots,  we present a distributed estimation framework to reconstruct this global information for each robot. 
 
\vspace*{2pt} 
\textit{\underline{\textbf{Distributed Nonlinear Estimator}}}: similar to (\ref{Estimator}), the following finite-time distributed estimator is developed for each robot $ i $
% to obtain the
%leader's trajectory $ (\dot{x}_{d}, \ddot{x}_{d})$  in a finite time when only a subset of the followers can access the leader's trajectory.
\vspace*{-8pt}
\begin{subequations}\label{EEstimator}	
	\begin{align}  
	\hspace{-0.8em}	\dot{\chi}_{i}&=\vartheta_{i}+k_{\chi i} \text{sig}^{\gamma}  (  \sum_{j=1}^{N}a_{ij}(\chi_{j}-\chi_{i}) + a_{i0}(x_{d}-\chi_{i})  ) ,   \label{E0} \\ 
	\hspace{-0.8em} \dot{\vartheta}_{i}&=\eta_{i}+k_{\vartheta i} \text{sig}^{\beta}  (  \sum_{j=1}^{N}a_{ij}(\vartheta_{j}-\vartheta_{i}) + a_{i0}(\dot{x}_{d}-\vartheta_{i})  ) ,   \label{EEa} \\ 
	\hspace{-0.8em} 
	\dot{\eta}_{i}&=k_{\eta i} [\text{sig}^{\alpha}  (e^{\eta}_{i} ) + \text{sgn} ( e^{\eta}_{i}) ],  e^{\eta}_{i}= \sum_{j=1}^{N}a_{ij}(\eta_{j}-\eta_{i}) + e^{\varrho}_{i},  \label{EEb} \\
	\hspace{-0.8em} \dot{\xi}_{i}&=\varrho_{i}+k_{\xi i}a_{i0}\text{sig}^{\frac{1}{2}} \left( \dot{x}_{d}-\xi_{i} \right) , \ e^{\varrho}_{i}= a_{i0}(\varrho_{i}-\eta_{i}), \label{FFa} \\  
	\hspace{-0.8em} 
	\dot{\varrho}_{i}&=k_{\varrho i}a_{i0}\text{sgn} \left( \dot{x}_{d}-\xi_{i} \right) ,  \xi_{i}(0)=0_{n},  \varrho_{i}(0)=0_{n}, \label{FFb}
	\end{align}
\end{subequations}
where $ k_{\chi i}, k_{\vartheta i}, k_{\eta i}, k_{\xi i}, k_{\varrho i}>0$, $\alpha, \beta, \gamma \in (0.5,1)$, $ \chi_{i}$,  $\vartheta_{i}$, $\eta_{i} $ are the estimates of $ x_{d} $, $ \dot{x}_{d} $, $ \ddot{x}_{d} $, respectively. 
%for any $\chi_{i}(0)$, $\vartheta_{i}(0)$, $\eta_{i}(0) \in \mathbb{R}^{n} $. 
%In (\ref{EEstimator}), $ x_{d} $, $ \dot{x}_{d} $ are only available to a subset of robots (i.e., $ a_{i0}>0$). 
%and the unmeasurable acceleration $ \ddot{x}_{d} $ is not required.

%\vspace*{1pt} 
Define $\tilde{\chi}_{i} =\chi_{i}-x_{d}$, $\tilde{\vartheta}_{i} =\vartheta_{i}-\dot{x}_{d}$, $\tilde{\eta}_{i}=\eta_{i}-\ddot{x}_{d}$, $\tilde{\xi}_{i}=\xi_{i}-\dot{x}_{d}$, and $\tilde{\varrho}_{i}=\varrho_{i}-\ddot{x}_{d}$. Then, we get the estimation result below.
% similar to Theorem 1, the finite-time convergence of (\ref{EEstimator}) is presented as follows.

\vspace*{1pt}
\begin{theorem} \label{theorem3}
	Under the proposed distributed estimator in (\ref{EEstimator}), all these state estimates are uniformly bounded, and the finite-time estimation is achieved: 
	$\lim_{t\to T_{1}} a_{i0}\tilde{\varrho}_{i}=0_{n}$,  $\lim_{t\to T_{2}} \tilde{\eta}_{i}=0_{n}$, $\lim_{t\to T_{3}} \tilde{\vartheta}_{i}=0_{n}$, $\lim_{t\to T_{4}} \tilde{\chi}_{i}=0_{n}$ for $ T_{i}, i=1,2,3,4 $.
	%	\vspace*{-3pt} 
	%	\begin{equation}
	%	\hspace{-0.45em}
	%	\lim_{t\to T_{1}} a_{i0}\tilde{\varrho}_{i}=0_{m}, \lim_{t\to T_{2}} \tilde{\eta}_{i}=0_{m},  \lim_{t\to T_{3}} \tilde{\vartheta}_{i}=0_{m}, \lim_{t\to T_{4}} \tilde{\chi}_{i}=0_{m}.  \label{M10}
	%	\end{equation}% 	
	%	where $ T_{i}, i=1,\cdots,4 $ are the settling times. 
\end{theorem}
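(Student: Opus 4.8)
Since the estimator \eqref{EEstimator} reproduces exactly the layered architecture of \eqref{Estimator}--\eqref{filter}, the plan is to replay, stage by stage, the cascade argument of the proof of Theorem~\ref{theorem1}: an inner super-twisting pair $(\xi_i,\varrho_i)$ that forces $\xi_i\to\dot x_d$ and delivers $\varrho_i\to\ddot x_d$; an acceleration estimate $\eta_i\to\ddot x_d$ driven by $\varrho_i$; a velocity estimate $\vartheta_i\to\dot x_d$ driven by $\eta_i$; and a position estimate $\chi_i\to x_d$ driven by $\vartheta_i$. Throughout I would invoke Lemma~\ref{DirectedGraphTheta} for the positive diagonal weighting $\Pi=\text{diag}\{\pi_i\}$ with $\Xi=(\Pi H+H^{T}\Pi)/2>0$, Lemma~\ref{young} for the power-sum inequalities, and Lemma~\ref{lemma2} for the finite-time settling-time estimates. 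One preliminary remark: the statement is implicitly made under Assumption~\ref{CommunicaitonAssumption} together with the standing hypothesis---the exact analog of Assumption~\ref{LeaderAssumption}---that $x_d$ is sufficiently smooth with $x_d,\dot x_d,\ddot x_d$ and $\dddot x_d$ bounded by unknown constants; I would state this explicitly at the outset.

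\textbf{Step (i):} show $\lim_{t\to T_1}a_{i0}\tilde\varrho_i=0_n$. For $a_{i0}=0$ this is immediate. For $a_{i0}=1$, differentiating $\tilde\xi_i=\xi_i-\dot x_d$ and $\tilde\varrho_i=\varrho_i-\ddot x_d$ using \eqref{EEstimator} yields the super-twisting error system $\dot{\tilde\xi}_i=\tilde\varrho_i-k_{\xi i}\,\text{sig}^{1/2}(\tilde\xi_i)$, $\dot{\tilde\varrho}_i=-k_{\varrho i}\,\text{sgn}(\tilde\xi_i)-\dddot x_d$, with the perturbation $\dddot x_d$ bounded. I would then use the Moreno-type strict Lyapunov function $V_\zeta=\sum_{i,k}\zeta_{ik}^{T}P_{ik}\zeta_{ik}$ with $\zeta_{ik}=\text{col}(\text{sig}^{1/2}(\tilde\xi_{ik}),\tilde\varrho_{ik})$ from \cite{Moreno12TAC}, choose $k_{\xi i}>0$ and $k_{\varrho i}>\sup_{t}\|\dddot x_d\|_\infty+1$ so the associated matrix $R_{ik}$ is Hurwitz, solve the algebraic Lyapunov inequality for $P_{ik}$, and obtain $\dot V_\zeta\le-\epsilon_0 V_\zeta^{1/2}$ exactly as in \eqref{A2}; Lemma~\ref{lemma2} then gives finite-time convergence of $\tilde\varrho_i$, hence of $a_{i0}\tilde\varrho_i$, at some $T_1>0$.

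\textbf{Steps (ii)--(iv):} these mirror Steps (ii)--(iv) of Theorem~\ref{theorem1}. Introduce the disagreement variables $\bar\eta_i=\sum_j a_{ij}(\tilde\eta_j-\tilde\eta_i)+a_{i0}(\tilde\varrho_i-\tilde\eta_i)$, $\bar\vartheta_i=\sum_j a_{ij}(\tilde\vartheta_j-\tilde\vartheta_i)-a_{i0}\tilde\vartheta_i$, and $\bar\chi_i=\sum_j a_{ij}(\tilde\chi_j-\tilde\chi_i)-a_{i0}\tilde\chi_i$, whose dynamics take the form \eqref{Db}, \eqref{Da}, \eqref{D0} respectively with $\dddot x_d$ (resp. $\tilde\eta_i$, $\tilde\vartheta_i$) as the only exogenous term. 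Using $V_\eta=\sum_{i,k}\pi_i(|\bar\eta_{ik}|+\tfrac{1}{\alpha+1}|\bar\eta_{ik}|^{\alpha+1})$ and splitting into the two magnitude cases as in \eqref{U1}--\eqref{U2}, positive definiteness of $\Xi$ and Lemma~\ref{young} give $\dot V_\eta\le-\epsilon_1 V_\eta^{\alpha/(\alpha+1)}$ or $-\epsilon_2 V_\eta^{\alpha}$ once $a_{i0}\tilde\varrho_i$ has vanished, so $\bar\eta_i$ and then $\tilde\eta_i$ (via $H>0$) converge in finite time $T_2=T_\eta+T_1$. Taking $V_\vartheta=\sum_{i,k}\tfrac{\pi_i}{\beta+1}|\bar\vartheta_{ik}|^{\beta+1}$ and $V_\chi=\sum_{i,k}\tfrac{\pi_i}{\gamma+1}|\bar\chi_{ik}|^{\gamma+1}$, the cross terms are dominated after the upstream errors have reached zero, yielding $\dot V_\vartheta\le-\epsilon_3 V_\vartheta^{2\beta/(\beta+1)}$ and $\dot V_\chi\le-\epsilon_4 V_\chi^{2\gamma/(\gamma+1)}$, hence finite-time convergence at $T_3,T_4$; as in Theorem~\ref{theorem1} I would separately establish bounds of the form $\dot V\le c\,V^{\beta/(\beta+1)}$ (resp. with $\gamma$) to rule out finite escape on the preceding intervals, so all estimates remain uniformly bounded. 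The main obstacle is not any individual estimate but the cascade bookkeeping: one must verify at each stage that the not-yet-converged upstream errors stay uniformly bounded on $[0,T_k)$ before the clean finite-time inequalities can be asserted, and the super-twisting stage hinges on the extra smoothness hypothesis $\dddot x_d\in\mathcal{L}_\infty$, which should be made explicit in the statement.
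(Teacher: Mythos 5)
Your proposal is correct and matches the paper's intent exactly: the paper's own proof of Theorem~\ref{theorem3} is simply the one-line remark that it is analogous to Theorem~\ref{theorem1}, and your stage-by-stage replay (super-twisting filter, then $\eta_i$, $\vartheta_i$, $\chi_i$ via the weighted Lyapunov functions, Lemmas~\ref{young}, \ref{DirectedGraphTheta}, \ref{lemma2}, and the no-finite-escape bounds) is precisely that argument specialized to $m=2$ with $x_{0,1}=x_d$. Your observation that the boundedness of $\ddot x_d$ and $\dddot x_d$ (the analog of Assumption~\ref{LeaderAssumption}) is needed but left implicit in the theorem statement is a fair and worthwhile point.
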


\begin{proof}
	It is similar to that in Theorem 1 and is omitted.	
\end{proof}

%Since the desired reference information ($ x_{d}$, $\dot{x}_{d} $, and $ \ddot{x}_{d}$) has been estimated via the distributed estimator (\ref{EEstimator}) for each robot, the rest is to propose a novel distributed adaptive controller to achieve fault-tolerant coordinated tracking in the presence of the unknown kinematics, dynamics, and time-varying faults. 

\vspace*{5pt}
Next, we can transform the coordinated tracking problem into the simultaneous tracking problem for the decoupled robot group. Specifically, denote the position tracking error $ \bar{x}_{i}=x_{i}-x_{d}$ and velocity tracking error $  \dot{\bar{x}}_{i}=\dot{x}_{i}-\dot{x}_{d}$, $  i \in \mathcal{V} $. Then, based on the estimations $ \chi_{i} $ and $ \vartheta_{i} $ obtained from the distributed estimator in (\ref{EEstimator}), we have  $
\bar{x}_{i}= x_{i} - \chi_{i} + \tilde{\chi}_{i}$ and $
\dot{\bar{x}}_{i}= \dot{x}_{i} - \vartheta_{i} + \tilde{\vartheta}_{i}$ where $ \tilde{\chi}_{i} =\chi_{i}-x_{d}  $ and $ \tilde{\vartheta}_{i} =\vartheta_{i}-\dot{x}_{d}  $. As it follows from Theorem \ref{theorem3} that $\lim_{t\to \infty} \tilde{\chi}_{i}=0_{n}$ and $\lim_{t\to \infty} \tilde{\vartheta}_{i}=0_{n}$, $\forall i\in \mathcal{V} $, the coordinated tracking control objective in (\ref{M3}) can be transformed into the simultaneous tracking
objective in the sense that 
%$\lim_{t\to \infty} (x_{i} - \chi_{i}) =0_{n}$ and $\lim_{t\to \infty} ( \dot{x}_{i} - \vartheta_{i}) =0_{n}$, $\forall  i\in \mathcal{V} $.  
\vspace*{-4pt}
\begin{equation}
%\hspace{-0.5em}
\lim_{t\to \infty} (x_{i} - \chi_{i}) =0_{n} \ \text{and} \ \lim_{t\to \infty} ( \dot{x}_{i} - \vartheta_{i}) =0_{n}. \label{transform}
\end{equation}

First, let us define a task-space sliding variable $ s_{xi} $ 
\vspace*{-3pt}
%\begin{subequations}\label{SlidingError}
\begin{equation}
%\hspace{-1.5em}
s_{xi}=e_{vi} + \alpha_{xi}e_{xi}, \ e_{xi}= x_{i}-\chi_{i},  
\	e_{vi}= \dot{x}_{i}-\vartheta_{i}, \ i\in \mathcal{V}, \label{SlidingError}
\end{equation}
%\end{subequations}
where $ e_{xi}$ and $ e_{vi} $ are the task-space position and velocity tracking errors of the $ i $th manipulator and $ \alpha_{xi}>0 $ is a scalar. 

In the presence of uncertain kinematics, $ J_{i}(q_{i}) $ becomes unknown and satisfies Property 4. Using the estimate of $ J_{i}(q_{i}) $ and the sliding vector $ s_{xi} $, we define a joint-space reference velocity 
\vspace*{-5pt}
\begin{equation}
\dot{q}_{ri}=\hat{J}^{+}_{i}(q_{i})  ( \vartheta_{i} -\alpha_{xi}e_{xi}-\alpha_{ri} \int_{0}^{t}s_{xi}(\omega)d\omega ), \ i\in \mathcal{V}, \label{SS2}
\end{equation}
where $ \hat{J}^{+}_{i}(q_{i}) =\hat{J}^{T}_{i}(q_{i}) (\hat{J}_{i}(q_{i}) \hat{J}^{T}_{i}(q_{i}) )^{-1} $ is a generalized inverse of the approximate Jacobian matrix, $ \alpha_{ri}>0 $ is a scalar, $ \vartheta_{i} $ is obtained from  (\ref{EEstimator}), and $ \hat{J}_{i}(q_{i}) $ is the estimate of $ J_{i}(q_{i}) $, which is obtained by replacing $a_{i} $ in $ J_{i}(q_{i}) $ with $\hat{a}_{i} $ that defines the estimate of the unknown  kinematic parameter $a_{i} $ in Property 4.   

%\begin{remark}
%Each manipulator is assumed to be operating in the workspace so that the approximate Jacobian matrix is of full rank to avoid singularities, i.e., $ \hat{J}^{-1}_{i}(q_{i}) $ is nonsingular. In fact, $ x_{d} $ is usually given entirely inside the workspace of manipulator so that it operates in a reachable finite task-space without encountering the kinematic singularity problem. As stated in \cite{Cheah06IJRR}, a standard projection algorithm can be used to adjust $ \hat{a}_{i} $ into a suitable region so that parametric singularity problems can be avoided.
%	% Hence, both parametric and kinematic singularity problems can be avoided.
%\end{remark}

Differentiating (\ref{SS2}) gives the joint-space reference acceleration
\vspace*{-5pt}
\begin{equation}
\ddot{q}_{ri}=\hat{J}^{+}_{i}(q_{i})  ( \dot{\vartheta}_{i} -\alpha_{xi} \dot{e}_{xi} -\alpha_{ri} s_{xi}) + \dot{\hat{J}}^{+}_{i}(q_{i}) \hat{J}_{i}(q_{i})  \dot{q}_{ri} .  \label{SS3}
\end{equation}

Based on the reference velocity given in (\ref{SS2}), we define a joint-space sliding vector $ s_{i}=\dot{q}_{i}-\dot{q}_{ri} $ and an estimated end-effector velocity $ \dot{\hat{x}}_{i} =\hat{J}_{i}(q_{i})\dot{q}_{i} $. Then, it follows from (\ref{SS2}) that  
\vspace*{-2pt}
\begin{align}
%\hspace{-0.5em}
s_{i}&=\dot{q}_{i}-\dot{q}_{ri} =\hat{J}^{+}_{i}(q_{i})  ( \dot{\hat{x}}_{i} -\dot{x}_{i} + \dot{x}_{i}  ) -\dot{q}_{ri}  \label{SS5}  \\
&= \hat{J}^{+}_{i}(q_{i})  [ (\hat{J}_{i}(q_{i})-J_{i}(q_{i}))\dot{q}_{i} + s_{xi}
+ \alpha_{ri} \int_{0}^{t}s_{xi}(\omega)d\omega ],  \notag 
\end{align}
which yields the following relation between the joint-space sliding vector $ s_{i} $ and the task-space sliding vector $ s_{xi} $ 
\vspace*{-3pt}
\begin{equation}
\hat{J}_{i}(q_{i})s_{i}=s_{xi}+ \alpha_{ri} \int_{0}^{t}s_{xi}(\omega)d\omega +Z_{i}(q_{i},\dot{q}_{i}) \tilde{a}_{i},  \ i \in \mathcal{V},  \label{SS6} 
\end{equation}
where $ \tilde{a}_{i} = \hat{a}_{i} -a_{i} $ is the kinematic parameter estimation error.

\vspace*{2pt}
By Property 3, $ M_{i}(q_{i})\ddot{q}_{ri} +C_{i}(q_{i},\dot{q}_{i})\dot{q}_{ri}+G_{i}(q_{i})$$=Y_{i}\theta_{i} $. 
%Since $ s_{i}=\dot{q}_{i}-\dot{q}_{ri}$, we get that 
Then, for $ b_{i}(t)=g_{i}\phi_{i} \in \mathbb{R} $ and $ D_{i}(t)=g_{i}\psi_{i}-F_{i}(q_{i},\dot{q}_{i}) +d_{i} \in \mathbb{R}^{l}$,   
\vspace*{-3pt}
\begin{equation}
M_{i}(q_{i})\dot{s}_{i}+C_{i}(q_{i},\dot{q}_{i})s_{i}=b_{i}(t)\tau_{i}+D_{i}(t)  -Y_{i}\theta_{i}. \label{OpenLoop}
\end{equation}

Similarly, let $ \varepsilon_{i}= \text{sup}_{t\geq 0}\{\|D_{i}(t)\| \} $ and $ \hat{\varepsilon}_{i} $ denotes an estimate of this unknown bound vector $ \varepsilon_{i} 1_{n} $. Then, $ \tilde{\varepsilon}_{i} =\varepsilon_{i} 1_{n}-\hat{\varepsilon}_{i} $.

\textbf{\underline{\textit{Fault-Tolerant Distributed Adaptive Controller}}}: in light of the Nussbaum function and local estimates $ \chi_{i} $ and $ \vartheta_{i} $ from (\ref{EEstimator}), a fault-tolerant adaptive controller is proposed as 
\vspace*{-2pt}
\begin{subequations}\label{CController}
\begin{align}
\hspace*{-0.9em}
\tau_{i}&= N_{i}(\kappa_{i})u_{i}, \   N_{i}(\kappa_{i})=\text{exp}(\kappa^{2}_{i})\text{cos}((\pi/2)\kappa_{i})+1, \label{B7}  \\
\hspace*{-0.9em}
u_{i}&=  Y_{i}\hat{\theta}_{i} -\hat{J}^{T}_{i} K_{si}  \hat{J}_{i}s_{i} - \text{diag} \{s_{\delta ij} \}   \hat{\varepsilon}_{i}, \ \dot{\kappa}_{i}=-k_{\kappa i} s^{T}_{i} u_{i},  \label{B77}  \\
\hspace*{-0.9em}
\dot{\hat{\varepsilon}}_{i}&= \varGamma_{\varepsilon i} \text{diag} \{s_{\delta ij} \}s_{i}, \ s_{\delta ij} = s_{ij}/ \sqrt{s^{2}_{ij}+\delta^{2}_{ij}(t)}, \ i \in \mathcal{V},   \label{B777}
%\hspace*{-1.5em}
%\hat{J}_{i}(q_{i})s_{i}&=\dot{x}_{i}-\vartheta_{i}+ \alpha_{ri} \int_{0}^{t}s_{xi}(\omega)d\omega +Z_{i}(q_{i},\dot{q}_{i}) \tilde{a}_{i}, 
\end{align} 
\end{subequations}
where 
%$ N_{i}(\kappa_{i}) $ are the Nussbaum functions, $ Y_{i}=Y_{i}(q_{i},\dot{q}_{i},\dot{q}_{ri},\ddot{q}_{ri}) $ with $ \dot{q}_{ri}$ and $\ddot{q}_{ri} $ defined in (\ref{SS2}) and (\ref{SS3}), respectively,  $\hat{\theta}_{i} $ is the estimate of the unknown vector $\theta_{i}$, 
$ k_{\kappa i}>0 $ is a constant, and $ K_{si}, \varGamma_{\varepsilon i}>0$ are matrices.  Then, for matrices $ \varGamma_{\theta i}, \varLambda_{i}>0 $, the dynamic  adaptation law for $ \hat{\theta}_{i} $ and the kinematic adaptation law for $ \hat{a}_{i} $  are updated by  
\vspace*{-2pt}
\begin{subequations}\label{KDAdaptiveLaw}
\begin{align}
\dot{\hat{\theta}}_{i}&= -\varGamma_{\theta i}   Y^{T}_{i}(q_{i},\dot{q}_{i},\dot{q}_{ri},\ddot{q}_{ri})s_{i}, \ i \in \mathcal{V}, \label{AdaptiveLawY} \\
\dot{\hat{a}}_{i}&= \varLambda_{i} Z^{T}_{i}(q_{i},\dot{q}_{i}) (s_{xi}+ \alpha_{ri} \int_{0}^{t}s_{xi}(\omega)d\omega -\hat{J}_{i}s_{i}) . \label{AdaptiveLawZ2}
\end{align} 
\end{subequations}

%\vspace*{-3pt}
%\begin{equation}
%\hspace{-2.5em}
%\dot{\hat{\theta}}_{i}= -\varGamma_{\theta i}   Y^{T}_{i}(q_{i},\dot{q}_{i},\dot{q}_{ri},\ddot{q}_{ri})s_{i}, \ i \in \mathcal{V}, \label{AdaptiveLawY}
%\end{equation} 
%%\vspace*{-2pt} 
%%In addition, the estimated kinematic parameter $ \hat{a}_{i} $ is updated by using the following \textit{indirect adaptive law} (i.e., a standard gradient estimation algorithm) designed as 
%%\vspace*{-3pt}
%%\begin{equation}
%%\dot{\hat{a}}_{i}= - \varLambda_{i}  Z^{T}_{i}(q_{i},\dot{q}_{i}) e_{i}, \ e_{i}= \dot{\hat{x}}_{i}-\dot{x}_{i}=Z_{i}(q_{i},\dot{q}_{i}) \tilde{a}_{i},  \label{AdaptiveLawZ1}
%%\end{equation} 
%%where $ \Lambda_{i}$ a symmetric positive definite gain matrix and $ e_{i} $ is the prediction error of the end-effector velocity. 
%%Since $ J_{i}(q_{i}) $ in the kinematics $ \dot{x}_{i}= J_{ i}(q_{i})\dot{q}_{i} $ is unknown, the following \textit{direct adaption law} is proposed   
%\vspace*{-8pt}
%\begin{equation}
%\dot{\hat{a}}_{i}= \varLambda_{i} Z^{T}_{i}(q_{i},\dot{q}_{i})  \left(s_{xi}+ \alpha_{ri} \int_{0}^{t}s_{xi}(\omega)d\omega -\hat{J}_{i}s_{i} \right) . \label{AdaptiveLawZ2}
%\end{equation}  

Let $ \tilde{\theta}_{i}=\hat{\theta}_{i}- \theta_{i}$. Then, substituting (\ref{CController}) into (\ref{OpenLoop}) yields 
%\vspace*{-1pt}
\begin{align}
\hspace*{-0.9em}
M_{i}(q_{i})\dot{s}_{i}&=-C_{i}(q_{i},\dot{q}_{i})s_{i} + (b_{i}(t) N_{i}(\kappa_{i})-1)u_{i} +Y_{i} \tilde{\theta}_{i}  \notag \\
& \ \ \ -\hat{J}^{T}_{i}K_{s i} \hat{J}_{i}s_{i} +D_{i}(t)- \text{diag} \{ s_{\delta ij} \} \hat{\varepsilon}_{i}. \label{ClosedLoop}
\end{align}
%where $ \tilde{\theta}_{i}=\hat{\theta}_{i}- \theta_{i}$  is the dynamic parameter estimation error.

The combination of (\ref{SlidingError}), (\ref{SS6}) and (\ref{ClosedLoop}) yields a cascade system 
\vspace*{-5pt}
\begin{equation} \label{CascadeSystem}
%\hspace{-0.45em}
%\begin{numcases}
\left\{ 
\begin{array}{l}
\hspace{-0.75em} \ \  s_{xi}=e_{vi} + \alpha_{xi}e_{xi}, \ i \in \mathcal{V},  \\ 
\hspace{-0.75em} \ \hat{J}_{i}s_{i}=s_{xi}+ \alpha_{ri} \int_{0}^{t}s_{xi}(\omega)d\omega +Z_{i}(q_{i},\dot{q}_{i}) \tilde{a}_{i},    \\
\hspace{-0.75em} M_{i}\dot{s}_{i}=-C_{i}s_{i} + (b_{i}(t) N_{i}(\kappa_{i})-1)u_{i} -\hat{J}^{T}_{i}K_{s i} \hat{J}_{i}s_{i}   \\ 
\hspace{-0.72em}  \ \ \ \ \ \ \ \ \ \ +Y_{i}\tilde{\theta}_{i}  + D_{i}(t) - \text{diag} \{ s_{\delta ij}\} \hat{\varepsilon}_{i}.
\end{array}
\right.
%\end{numcases}
\end{equation} 

Now, we have the fault-tolerant task-space coordination result.

\vspace*{2pt} 
\begin{theorem}
Consider a group of networked manipulators subject to faults in (\ref{FaultManipulator}). Under the proposed distributed controller in (\ref{CController}) with the  nonlinear estimator in (\ref{EEstimator}), Problem \ref{Problemm} is solvable, i.e., $ \underset{t\rightarrow \infty }{\text{lim}}  (x_{i}-x_{d})=0_{n}$ and $ \underset{t\rightarrow \infty }{\text{lim}}  (\dot{x}_{i}-\dot{x}_{d})=0_{n}$. 
\end{theorem}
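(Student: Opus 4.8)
The plan is to mirror the proof of Theorem~2, adapting the backstepping/Nussbaum analysis to the manipulator cascade in (\ref{CascadeSystem}). First I would invoke Theorem~\ref{theorem3} to replace the global task reference: since $\lim_{t\to\infty}\tilde{\chi}_i=0_n$ and $\lim_{t\to\infty}\tilde{\vartheta}_i=0_n$, it suffices to establish the decoupled objective $\lim_{t\to\infty}(x_i-\chi_i)=0_n$ and $\lim_{t\to\infty}(\dot{x}_i-\vartheta_i)=0_n$, i.e. to drive the task-space sliding variable $s_{xi}$ and its integral to zero for each robot independently. The key structural fact is that the estimator has already localised the problem so that each robot's Nussbaum gain $\kappa_i$ is decoupled.

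Next I would introduce the composite Lyapunov function for each robot, combining the joint-space sliding energy, the kinematic and dynamic parameter errors, and the disturbance-bound estimate error:
\begin{equation}
V_i=\tfrac12 s_i^{T}M_i(q_i)s_i + \tfrac12 \tilde{a}_i^{T}\varLambda_i^{-1}\tilde{a}_i + \tfrac12 \tilde{\theta}_i^{T}\varGamma_{\theta i}^{-1}\tilde{\theta}_i + \tfrac12 \tilde{\varepsilon}_i^{T}\varGamma_{\varepsilon i}^{-1}\tilde{\varepsilon}_i + \tfrac12 \alpha_{ri}\Big\| \textstyle\int_0^t s_{xi}(\omega)d\omega \Big\|^2 .
\end{equation}
Differentiating along (\ref{CascadeSystem}) and using Property~2 to cancel $\dot{M}_i-2C_i$, the adaptation laws (\ref{KDAdaptiveLaw}) cancel the $\tilde{\theta}_i$ and $\tilde{a}_i$ cross terms; the relation $\hat{J}_i s_i=s_{xi}+\alpha_{ri}\int_0^t s_{xi}+Z_i\tilde{a}_i$ turns $-s_i^{T}\hat{J}_i^{T}K_{si}\hat{J}_i s_i$ into a negative definite term in $s_{xi}$ plus its integral; and Lemma~\ref{VaryingGain} handles the $D_i(t)$ term against $\mathrm{diag}\{s_{\delta ij}\}\hat{\varepsilon}_i$ up to the integrable residual $\sum_j\delta_{ij}(t)\varepsilon_{ij}$. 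What remains after summing over $i$ and integrating is an inequality of the form $V(t)\le V(0)-\int_0^t W(s)ds + \delta^{*} + I_B$, where $W$ is positive semidefinite in the sliding variables and $I_B=\int_0^t\sum_i k_{\kappa i}^{-1}|(b_i(\omega)N_i(\kappa_i(\omega))-1)\dot{\kappa}_i(\omega)|d\omega$ is the Nussbaum term arising from $\dot{\kappa}_i=-k_{\kappa i}s_i^{T}u_i$.

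The main obstacle, exactly as in Theorem~2, is bounding $I_B$ and hence $\kappa_i$: one argues by contradiction that if some $\kappa_i$ were unbounded, the oscillatory sign changes of $N_i(\kappa_i)$ forced by the Nussbaum property $\limsup_{k\to\infty}\frac1k\int_0^k N_i(s)ds=+\infty$, $\liminf_{k\to\infty}\frac1k\int_0^k N_i(s)ds=-\infty$ would make the left side of the integrated inequality grow without bound while $V\ge0$, contradicting the boundedness of the remaining terms (this uses Assumptions~\ref{FollwerGains} and~\ref{FaultAssumption} to bound $b_i(t)=g_i\phi_i$ away from zero in magnitude, in the spirit of \cite{Wen98TAC}). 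Once $\kappa_i$ and thus $I_B$ are bounded, $V$ is bounded, so $s_i$, $\tilde{a}_i$, $\tilde{\theta}_i$, $\tilde{\varepsilon}_i$ and $\int_0^t s_{xi}$ are bounded; boundedness of $\chi_i,\vartheta_i$ from Theorem~\ref{theorem3} together with the regression properties then gives boundedness of all remaining signals and of $\dot{s}_{xi}$. Finally $\int_0^t\|s_{xi}\|^2$ is bounded, so Barbalat's Lemma (Lemma~\ref{Barbalat}) yields $\lim_{t\to\infty}s_{xi}=0_n$; the cascade $s_{xi}=e_{vi}+\alpha_{xi}e_{xi}$ with bounded integral term then forces $e_{xi}\to0_n$ and $e_{vi}\to0_n$, and combining with $\tilde{\chi}_i\to0_n$, $\tilde{\vartheta}_i\to0_n$ gives $\lim_{t\to\infty}(x_i-x_d)=0_n$ and $\lim_{t\to\infty}(\dot{x}_i-\dot{x}_d)=0_n$.
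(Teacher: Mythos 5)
Your proposal follows the same overall route as the paper: invoke Theorem \ref{theorem3} to replace $x_d,\dot{x}_d$ by the local estimates $\chi_i,\vartheta_i$, run a Lyapunov/Nussbaum argument with the contradiction-based bound on the Nussbaum integral in the spirit of \cite{Wen98TAC}, conclude square-integrability of the sliding variables, apply Barbalat's Lemma, and unwind the cascade $s_{xi}=e_{vi}+\alpha_{xi}e_{xi}$. The Nussbaum/contradiction step and the final cascade step are handled exactly as in the paper.

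The one place where you diverge, and where a gap remains, is the merging of everything into a single Lyapunov function. The paper deliberately proceeds in two stages: Step (i) uses $V_{si}=\tfrac12(s_i^TM_is_i+\tilde\theta_i^T\varGamma_{\theta i}^{-1}\tilde\theta_i+\tilde\varepsilon_i^T\varGamma_{\varepsilon i}^{-1}\tilde\varepsilon_i)$ to establish only that $\hat J_is_i\in\mathcal{L}_2$ and $\hat J_is_i\to 0_n$; Step (ii) then \emph{recycles} that $\mathcal{L}_2$ bound by inserting the nonnegative, nonincreasing term $I_{si}-2\int_0^t s_i^T\hat J_i^T\hat J_is_i\,d\omega$ (together with $\tfrac12 e_{xi}^Te_{xi}$) into $V_{zi}$. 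Differentiating that term supplies an extra $-2\|\hat J_is_i\|^2=-2\|s_{xi}+\alpha_{ri}\int_0^ts_{xi}+Z_i\tilde a_i\|^2$, which is what allows (\ref{CS11})--(\ref{CS12}) to dominate the indefinite cross term $\alpha_{ri}s_{xi}^T\int_0^ts_{xi}$ coming from the integral-squared term and to extract a genuinely negative $-s_{xi}^Ts_{xi}$ under $K_{si}>I_l$, $\alpha_{xi}>\tfrac34$. In your single composite $V_i$ the only negative quadratic available is $-s_i^T\hat J_i^TK_{si}\hat J_is_i$, which is a quadratic form in the single vector $\hat J_is_i=s_{xi}+\alpha_{ri}\int_0^ts_{xi}+Z_i\tilde a_i$ and is therefore degenerate on the subspace where that sum vanishes; no choice of $K_{si}$ makes it dominate the cross term $\alpha_{ri}s_{xi}^T\int_0^ts_{xi}$ in all of $(s_{xi},\int_0^ts_{xi},Z_i\tilde a_i)$ by brute force. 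So the assertion that "what remains is a $W$ positive semidefinite in the sliding variables," with $\|s_{xi}\|^2$ appearing negatively, needs to be verified and does not follow immediately; you would either have to carry out the Young-inequality bookkeeping carefully (and possibly fail), or adopt the paper's two-pass structure in which the $\mathcal{L}_2$ bound on $\hat J_is_i$ is established first and then banked as an additional Lyapunov term. Everything else in your outline (adaptation-law cancellations via Properties 2--4, the $\delta_{ij}$-integrable residual from Lemma \ref{VaryingGain}, boundedness of $\dot s_{xi}$ before Barbalat, and the final combination with $\tilde\chi_i,\tilde\vartheta_i\to 0_n$) matches the paper.
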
  

\vspace*{3pt} 
\begin{proof} 
	The proof includes three steps: 
	
	\textbf{Step (i):} prove $ \hat{J}_{i}s_{i} \in \mathcal{L}_{2} $ 
	and $\underset{t\rightarrow \infty }{\text{lim}} \hat{J}_{i}s_{i} =0_{n}$, $ i\in \mathcal{V} $.
	
%	For the cascade closed-loop system in (\ref{CascadeSystem}), 
Construct the following Lyapunov function candidate: 
	\vspace*{-3pt}
	\begin{equation}
	\hspace{-0.69em}
	V_{si}=\frac{1}{2} \left(  s_{i}^{T}M_{i}(q_{i})s_{i}+   \tilde{\theta}_{i}^{T} \varGamma^{-1}_{\theta i} \tilde{\theta}_{i}   +\tilde{\varepsilon}^{T}_{i} \varGamma^{-1}_{\varepsilon i} \tilde{\varepsilon}_{i} \right).  \label{CS5}
	\end{equation}
	
	Then, the time derivative of $ V_{si} $ along 
	(\ref{CascadeSystem}) is given by
	\vspace*{-2pt}
	\begin{align}
	%\hspace{-0.5em} 
	\dot{V}_{si}&=   s^{T}_{i}M_{i}(q_{i})\dot{s}_{i}+\frac{1}{2}s^{T}_{i}\dot{M}_{i}(q_{i})s_{i}+ \tilde{\theta}^{T}_{i} \varGamma^{-1}_{\theta i}  \dot{\tilde{\theta}}_{i}   +\tilde{\varepsilon}^{T}_{i}\varGamma^{-1}_{\varepsilon i} \dot{\tilde{\varepsilon}}_{i}    \notag \\
	&= s^{T}_{i}  (  Y_{i}\tilde{\theta}_{i}-\hat{J}^{T}_{i}  K_{s i} \hat{J}_{i} s_{i} )  +  s^{T}_{i}  [ \frac{1}{2} \dot{M}_{i}(q_{i})- C_{i}(q_{i},\dot{q}_{i}) ]  s_{i}  \notag \\
	& \ \ \  +s^{T}_{i} [ D_{i}(t)- \text{diag} \{ s_{\delta ij} \}  \hat{\varepsilon}_{i} ] -\tilde{\varepsilon}^{T}_{i} \text{diag} \{ s_{\delta ij} \} s_{i} \notag \\
	&\ \ \ - \tilde{\theta}^{T}_{i} Y^{T}_{i} s_{i}  +  s^{T}_{i} \left( b_{ i}(t)N(\kappa_{i}) -1 \right) u_{i}. \label{CS6} 
	\end{align}   
	
Since $ \sum^{l}_{j=1} ( |s_{ij}|\varepsilon_{ij}-s_{ij} s_{ij} \varepsilon_{ij} / \sqrt{s^{2}_{ij}+\delta^{2}_{ij}(t)} ) \leq  \sum^{l}_{j=1} \delta_{ij}(t) \\ \varepsilon_{ij}$ 
%	\\ for certain scalars $\varepsilon_{ij}>0 $  
	by Lemma \ref{VaryingGain}, we have $	\dot{V}_{si} \leq  -\frac{1}{k_{\kappa i}}   ( b_{i}(t)  N_{i} (\kappa_{i}(t))$ $  -1 ) \dot{\kappa}_{i}(t) - s^{T}_{i} \hat{J}^{T}_{i}  K_{s i} \hat{J}_{i} s_{i} +\sum^{l}_{j=1} \delta_{ij}(t) \varepsilon_{ij}$.
%	% 	it follows from (\ref{CS6}) that we have  
%	\vspace*{-3pt}
%	\begin{align}
%	\hspace{-0.5em} 
%	% 	\dot{V}_{si}& \leq -s^{T}_{i}  \hat{J}^{T}_{i}  K_{s i} \hat{J}_{i} s_{i} - \sum^{m}_{j=1} \frac{1}{k_{\kappa i}}  \left( B_{ij}(t) N_{i}(\kappa_{ij}(t)) -1 \right)  \dot{\kappa}_{ij}(t)   \notag  \\
%	% 	& \ \ \ \   + \sum^{m}_{j=1} \left( |s_{ij}| \varepsilon_{ij} -  s_{ij}   \frac{s_{ij}}{\sqrt{s^{2}_{ij}+\delta^{2}_{ij}(t)}} \varepsilon_{ij}  \right)  \notag \\
%	\dot{V}_{si}&  \leq  -\frac{1}{k_{\kappa i}}  \left( b_{i}(t) N_{i}(\kappa_{i}(t)) -1 \right)  \dot{\kappa}_{i}(t) \notag  \\ 
%	& \ \ \ \ \ \ \ \ \ \ \ \ \ - s^{T}_{i} \hat{J}^{T}_{i}  K_{s i} \hat{J}_{i} s_{i} +\sum^{l}_{j=1} \delta_{ij}(t) \varepsilon_{ij}. \label{CS7}
%	\end{align}	
%	\vspace*{-3pt}
Then, integrating this inequality yields $ V_{si}(t) \leq V_{si}(0) - \int_{0}^{t} s^{T}_{i}(\omega)  \hat{J}^{T}_{i}(\omega)  K_{s i}(\omega) \hat{J}_{i}(\omega) s_{i}(\omega) d\omega + I^{'}_{A} +I^{'}_{B}, $
%	\vspace*{-3pt}
%	\begin{equation}
%	V_{si}(t) \leq V_{si}(0) - \int_{0}^{t} s^{T}_{i}(\omega)  \hat{J}^{T}_{i}(\omega)  K_{s i}(\omega) \hat{J}_{i}(\omega) s_{i}(\omega) d\omega + I^{'}_{A} +I^{'}_{B}, \notag 
%%	\label{CS8}
%	\end{equation}
where $ I^{'}_{A}=\int_{0}^{t} k^{-1}_{\kappa i} |(1-b_{i}(\omega) N_{i}(\kappa_{i}(\omega))) \dot{\kappa}_{i}(\omega)| d\omega$ and $ I^{'}_{B}= \int_{0}^{t} \sum^{l}_{j=1} \delta_{ij}(\omega) |\varepsilon_{ij}| d\omega$. Since $ \int_{0}^{t} \delta_{ij}(\omega) d\omega \leq \delta^{*}_{ij} $, it is not difficult to obtain that $ I^{'}_{B} \leq \sum^{l}_{j=1}\delta^{*}_{ij}  |\varepsilon_{ij} | $, which can be upper bounded by certain constant. Moreover, with the similar spirit of the argument in \cite{Wen98TAC}, the boundness of $ I^{'}_{A} $ can be obtained by seeking a contradiction. Thus, it can be concluded from (\ref{CS5}) that  $ s_{i} \in \mathcal{L}_{\infty} $, $ \tilde{\theta}_{i} \in \mathcal{L}_{\infty} $, and $ \tilde{\varepsilon}_{i} \in \mathcal{L}_{\infty} $. 
	
%	\vspace*{2pt}
	In addition, define $ W_{si}=s^{T}_{i} \hat{J}^{T}_{i}  K_{s i}\hat{J}_{i}s_{i} $ and we can obtain 
	\vspace*{-4pt}
	\begin{equation}
	\int_{0}^{t} W_{si}(\omega) d \omega  \leq  V_{si}(0) -V_{si}(t) + I^{'}_{A} + I^{'}_{B}, \ i \in \mathcal{V}, \label{CS9}
	\end{equation}
	which means that $ \int_{0}^{t} W_{si}(\omega) d \omega $ can be upper bounded and the existence of $ \underset{t\rightarrow \infty }{\text{lim}} \int_{0}^{t} W_{si}(\omega) d \omega $ can be guaranteed and it is finite. Then, according to the definition of $ W_{si} $, we have that $\hat{J}_{i}s_{i}$ is square integrable. Hence, the Barbalat's Lemma in Lemma \ref{Barbalat} is used to conclude that $ \hat{J}_{i}s_{i} \in \mathcal{L}_{2} $ 
	and $\underset{t\rightarrow \infty }{\text{lim}} \hat{J}_{i}s_{i} =0_{n}$.  
	
	\textbf{Step (ii): } prove $\underset{t\rightarrow \infty }{\text{lim}} (Z_{i}(q_{i},\dot{q}_{i}) \tilde{a}_{i}) =0_{n}$, $\underset{t\rightarrow \infty }{\text{lim}} e_{xi}=\underset{t\rightarrow \infty }{\text{lim}} (x_{i}-\chi_{i}) =0_{n}$, and  $\underset{t\rightarrow \infty }{\text{lim}} e_{vi}=\underset{t\rightarrow \infty }{\text{lim}} (\dot{x}_{i}-\vartheta_{i})=0_{n}$. 
	
	% 	Under the proposed indirect adaptation law in (\ref{AdaptiveLawZ1}), it is well known that $ e_{i} =Z_{i}(q_{i},\dot{q}_{i}) \tilde{a}_{i} \in \mathcal{L}_{2}$ and $\hat{a}_{i} \in \mathcal{L}_{\infty} $. However, for the direct adaptation law in (\ref{AdaptiveLawZ2}), we next prove $ Z_{i}(q_{i},\dot{q}_{i}) \tilde{a}_{i} \in \mathcal{L}_{2} $. 
	\vspace*{2pt}
	Since $ \hat{J}_{i}s_{i} \in \mathcal{L}_{2} $ by Step (i), there exists a constant $ I_{si} $ so that $ I_{si}= 2\int_{0}^{\infty} s^{T}_{i}(\omega) \hat{J}^{T}_{i}(\omega) \hat{J}_{i}(\omega) s_{i}(\omega) d\omega < \infty $, $ i\in \mathcal{V} $. Then, for the second subsystem in (\ref{CascadeSystem}), we can select  
	\vspace*{-4pt}
	\begin{align}
	\hspace{-0.8em}
	V_{zi}&= \frac{1}{2}e^{T}_{xi}e_{xi} + ( I_{si}- 2\int_{0}^{t} s^{T}_{i}(\omega) \hat{J}^{T}_{i}(\omega)  \hat{J}_{i}(\omega) s_{i}(\omega) d\omega )   \notag \\
	&+2\tilde{a}^{T}_{i} \Lambda^{-1}_{i} \tilde{a}_{i}+ \alpha_{ri} ( \int_{0}^{t}s_{xi}(\omega) d\omega  )^{T}  ( \int_{0}^{t}s_{xi}(\omega) d\omega  ).  \label{CS10} 
	\end{align}
	
	\vspace*{-3pt} 
	By using $ \hat{J}_{i}s_{i}=s_{xi}+ \alpha_{ri} \int_{0}^{t}s_{xi}(\omega)d\omega +Z_{i}(q_{i},\dot{q}_{i}) \tilde{a}_{i} $ from the second subsystem of (\ref{CascadeSystem}), the time derivative of $ V_{zi} $ is
	%  thus derived as
	\vspace*{-5pt}
	\begin{align}
	\hspace{-1.2em} 
	\dot{V}_{zi}
%	&= e^{T}_{xi}(s_{xi}-\alpha_{xi}e_{xi}) - 2 s^{T}_{i}  \hat{J}^{T}_{i} \hat{J}_{i} s_{i} + 4 \tilde{a}^{T}_{i}\Lambda^{-1}_{i} \dot{\hat{a}}_{i} \label{CS11} \\
%	& \ \ \ + \alpha_{ri} s^{T}_{xi} ( \int_{0}^{t}s_{xi}(\omega) d\omega ) + \alpha_{ri}  ( \int_{0}^{t}s_{xi}(\omega) d\omega ) s_{xi} \notag \\ 
%	%&\leq -\alpha_{xi}e^{T}_{xi}e_{xi} - 2\left(s_{xi}+ \alpha_{ri} \int_{0}^{t}s_{xi} d\omega +Z_{i}(q_{i},\dot{q}_{i}) \tilde{a}_{i} \right)^{T} \notag \\
%	%& \times \left( s_{xi}+ \alpha_{ri} \int_{0}^{t}s_{xi} d\omega +Z_{i}(q_{i},\dot{q}_{i}) \tilde{a}_{i} \right) +2\tilde{a}^{T}_{i} \Lambda^{-1}_{i} \dot{\hat{a}}_{i}  \notag \\
	&=-\alpha_{xi}e^{T}_{xi}e_{xi} +e^{T}_{xi}  ( \hat{J}_{i}s_{i}- \alpha_{ri} \int_{0}^{t}s_{xi}(\omega)d\omega -Z_{i}(q_{i},\dot{q}_{i}) \tilde{a}_{i} )   \notag \\
	& \ \ \ - 2 (s_{xi}+ \alpha_{ri} \int_{0}^{t}s_{xi} d\omega )^{T}  (s_{xi}+ \alpha_{ri} \int_{0}^{t}s_{xi} d\omega  )  \notag \\
	& \ \ \ - 2 \tilde{a}^{T}_{i} Z^{T}_{i}(q_{i},\dot{q}_{i}) Z_{i}(q_{i},\dot{q}_{i}) \tilde{a}_{i} - 2\alpha_{ri} s^{T}_{xi} ( \int_{0}^{t}s_{xi}(\omega) d\omega ). \label{CS11}
	\end{align}
	%which yields $ Z_{i}(q_{i},\dot{q}_{i}) \tilde{a}_{i} \in \mathcal{L}_{2} $ and $ \hat{a}_{i} \in \mathcal{L}_{\infty} $, $ i\in \mathcal{V} $. 
	
	\vspace*{-2pt}
	Let $ V_{szi}= V_{si}+V_{zi}$. Then, using Young inequality and (\ref{CS11}),
	\vspace*{-4pt}
	\begin{align}
	%\hspace{-0.5em} 
	\dot{V}_{szi}&\leq -(\alpha_{xi}-\frac{3}{4})e^{T}_{xi}e_{xi} - s^{T}_{i} \hat{J}^{T}_{i}  (K_{si} -I_{l})\hat{J}_{i} s_{i} -s^{T}_{xi}s_{xi} \notag \\
	&\ \ \  -\left(s_{xi}+ \alpha_{ri} \int_{0}^{t}s_{xi} d\omega\right)^{T} \left(s_{xi}+ \alpha_{ri} \int_{0}^{t}s_{xi} d\omega \right)  \notag \\
	&\ \ \  +\sum^{l}_{j=1} \delta_{ij}(t) \varepsilon_{ij}- \frac{1}{k_{\kappa i}} (b_{i}(t) N_{i}(\kappa_{i}) -1) \dot{\kappa}_{i} \notag \\ 
	&\ \ \  -2 \tilde{a}^{T}_{i} Z^{T}_{i}(q_{i},\dot{q}_{i}) Z_{i}(q_{i},\dot{q}_{i}) \tilde{a}_{i}.  \label{CS12}  
	\end{align}
	
	\vspace*{-3pt}
	Following the similar analysis in (\ref{CS9}),   $ \underset{t\rightarrow \infty }{\text{lim}} Z_{i}(q_{i}, \dot{q}_{i}) \tilde{a}_{i}=0_{n} $,  $ \underset{t\rightarrow \infty }{\text{lim}} \hat{J}_{i}(q_{i})s_{i}=0_{n} $, and $ \underset{t\rightarrow \infty }{\text{lim}} e_{xi}=0_{n} $, provided that $ \alpha_{xi}>\frac{3}{4} $ and $ K_{si}>I_{l} $. 
	% Thus, the dynamic adaptation law ensures that $ \underset{t\rightarrow \infty }{\text{lim}} \hat{J}_{i}(\zeta_{i})s_{i}=0_{m} $ and the indirect/direct kinematic adaption law ensures that $ \underset{t\rightarrow \infty }{\text{lim}} Z_{i}(q_{i}, \dot{q}_{i}) \tilde{a}_{i}=0_{m} $.  
	By (\ref{CascadeSystem}), $ s_{xi}+ \alpha_{ri} \int_{0}^{t}s_{xi}(\omega)d\omega=0_{n} $. From the input-output property of exponentially stable and strictly proper linear systems, $ \underset{t\rightarrow \infty }{\text{lim}} \int_{0}^{t}s_{xi}(\omega) d\omega =0_{n} $ and $ \underset{t\rightarrow \infty }{\text{lim}} s_{xi}=0_{n} $. By Step (ii), $ \underset{t\rightarrow \infty }{\text{lim}} e_{xi}=0_{n} $. Hence, $ \underset{t\rightarrow \infty }{\text{lim}} e_{vi}=0_{n} $. 
	
	\vspace{3pt}
	\textbf{Step (iii):} prove  $ \underset{t\rightarrow \infty }{\text{lim}}  (x_{i}-x_{d})=0_{n}$ and $ \underset{t\rightarrow \infty }{\text{lim}}  (\dot{x}_{i}-\dot{x}_{d})=0_{n}$. 
	
	Based on Theorem \ref{theorem3}, we have $ \underset{t\rightarrow \infty }{\text{lim}} \tilde{\chi}_{i}=0_{n}$ and $ \underset{t\rightarrow \infty }{\text{lim}} \tilde{\vartheta}_{i} =0_{n}$. From Step (ii), we have  $\underset{t\rightarrow \infty }{\text{lim}} e_{xi} =0_{n}$, and  $\underset{t\rightarrow \infty }{\text{lim}} e_{vi}=0_{n}$, $ i\in \mathcal{V} $. Hence, we get  the following expression
	\hspace{-10pt}
	%\begin{subequations}\label{TransErrors}
	\begin{align}
	%\hspace{-0.8em}
	\underset{t\rightarrow \infty }{\text{lim}} \bar{x}_{i}&= \underset{t\rightarrow \infty }{\text{lim}} (x_{i}-x_{d})=\underset{t\rightarrow \infty }{\text{lim}} (x_{i}- \chi_{i}+\chi_{i}-x_{d}) \notag \\
	&=\underset{t\rightarrow \infty }{\text{lim}}e_{xi}+\underset{t\rightarrow \infty }{\text{lim}}\tilde{\chi}_{i}  = 0_{n} \label{TranXErrors},  \\
	% \end{align} 
	%  \hspace{-5pt}
	% \begin{align} 
	\underset{t\rightarrow \infty }{\text{lim}} \dot{\bar{x}}_{i}&= \underset{t\rightarrow \infty }{\text{lim}} (\dot{x}_{i}-\dot{x}_{d})=\underset{t\rightarrow \infty }{\text{lim}} (\dot{x}_{i}- \vartheta_{i}+\vartheta_{i}-\dot{x}_{d}) \notag \\
	&=\underset{t\rightarrow \infty }{\text{lim}} e_{vi}+\underset{t\rightarrow \infty }{\text{lim}}\tilde{\vartheta}_{i}=0_{n},
	\end{align}
	%\end{subequations} 
	and the proof is thus completed.
\end{proof}

\section{Numerical Simulations}
%\vspace{-2pt}
In this section, two examples and numerical simulation results are provided to show the effectiveness of the proposed methods. 

\vspace{-6pt}
\subsection{Fault-Tolerant Formation Tracking of Second-Order Nonlinear Multi-Agent Systems with Unknown Control Signs} 
%\vspace{-1pt}
Consider a second-order multi-agent system consisting of six followers and one leader with the followers' dynamics given by 
\vspace{-4pt}
\begin{equation} \label{SFollower}
\left\{ 
\begin{array}{l}
\hspace{-0.3em}  \dot{x}_{i,1} = x_{i,2},  \  y_{i} = x_{i,1}, \ i=1,2,\cdots,6,  \\
\hspace{-0.3em}  \dot{x}_{i,2} = f^{T}_{i}(x_{i}) \theta_{i} +g_{i}(x_{i})u_{ai}+d_{i}(x_{i},t),  \\
%\hspace{-0.3em} \ \  y_{i} = x_{i,1}, \ i=1,2,\cdots,6,
\end{array}
\right. 
\end{equation} 
where $ x_{i}=\text{col}(x_{i,1},x_{i,2}) \in \mathbb{R}^{4}$, the nonlinear function $ f_{i}(x_{i}) \in \mathbb{R}^{2\times 2}$ and the unknown parameter $ \theta_{i} \in \mathbb{R}^{2}$ are described by  
\vspace{-3pt}
\begin{equation} \label{SFollower}
f_{i}(x_{i})=\left[ 
\begin{array}{cc} 
-\sin(x^{1}_{i,1}) & x^{2}_{i,2} \\
x^{1}_{i,2} & -x^{2}_{i,1}
\end{array}
\right], 
\ \theta_{i}=\left[ 
\begin{array}{l} 
\theta_{i,1} \\
\theta_{i,2}
\end{array}
\right],
\end{equation} 
where $ x^{k}_{i,1} $, $ x^{k}_{i,2} $, $ k=1,2 $ denote the $ k $-th elements of their states, respectively, $ \theta_{i,1}=0.3i $, $ \theta_{i,1}=0.5i $, $ i=1,2,\cdots,6 $, the control coefficients are $ g_{i}(x_{i})=p_{i}(\cos(x^{T}_{i,1}x_{i,1}+x^{T}_{i,2}x_{i,2}))$ with $ p_{i}=(-1)^{i}*0.1i $,  the time-varying actuator faults $ u_{ai} $ are given by  
\vspace{-3pt}
\begin{equation} \label{gfunction}
u_{ai}=\left\{ 
\begin{array}{l}
%\ \ \ \ \ \ \ \ \ u_{i}, \ \ \ \ \ \ \ \ \ \ \ \ \ \ \ \ \ \ \ \ \ \ \ \ \ \ \ \ \ 0 \leq t < 3 \\
\hspace{-0.5em} (0.2\sin(t)+0.4) u_{i}+[2,2\cos(t)]^{T},  \  3 \leq t < 6, \\
\hspace{-0.5em} (0.3\cos(t)+0.6) u_{i}+ [\sin(0.1t),3]^{T},  \ 6 \leq t,  
\end{array} 
\right. 
\end{equation} 
and the bounded uncertainties/disturbances are provided as 
\vspace{-3pt}
\begin{equation} \label{disturbanc}
d_{i}(x_{i},t)=
%\cos(x_{i,2})+0.5\sin(0.5t).
\left[ 
\begin{array}{l} 
0.1\sin(x^{1}_{i,1}+x^{2}_{i,1})-0.3\cos(0.3t) \\
0.2\cos(x^{1}_{i,2}x^{2}_{i,2})+0.5\sin(0.5t)
\end{array}
\right].  
\end{equation}
%and the bounded uncertainties and disturbances are provided by $ d_{i}(x_{i},t)=\cos(x_{i,2})+0.5\sin(0.5t) $.

The leader's dynamics are described by 
\vspace{-4pt}
\begin{equation} \label{leaderd}
%\left\{ 
%\begin{array}{l}
\dot{x}_{0,1} = x_{0,2},  \
\dot{x}_{0,2} = o_{0}(x_{0},u_{0},t),    \ y_{0} = x_{0,1},  
%\end{array}
%\right. 
\end{equation}
where $ x_{0}=\text{col}(x_{0,1},x_{0,2})$, $ u_{0}=\text{col}(0.8\sin(t),0.8\cos(t))$ and 
\vspace{-3pt}
\begin{equation} \label{leaderinput} 
o_{0}(x_{0},u_{0},t)=\left[ \begin{array}{l}
0.1\cos(0.1x^{1}_{0,1}+x^{2}_{0,2})+0.8\sin(t)  \\ 
0.2\sin(x^{2}_{0,1}+ 0.2x^{1}_{0,2})+0.8\cos(t) 
\end{array}
\right].  
\end{equation}

\begin{figure}[!t]
	\centering
	\includegraphics[width=3.6cm,height=1.5cm]{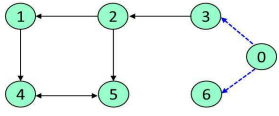}
	\caption{A directed communication topology for the leader-following network. } 
	\label{elsix_agent_undirected}
\end{figure}

The initial states are set as  
$ x_{1,1} = \text{col}(-0.3, -0.5)$, $ x_{2,1} = \text{col}(-2, -1.6)$, $ x_{3,1} = \text{col}(1, -3)$, $ x_{4,1} = \text{col}(0.2, 0.8)$, $ x_{5,1} = \text{col}(2, -1.5)$, $ x_{6,1} = \text{col}(2.5, 1.8)$, $ x_{i,2} = \text{col}(0, 0)$, $ x_{01}=\text{col}(0,-2) $ and $ x_{02}=\text{col}(1,0) $. The communication graph is depicted in Fig. \ref{elsix_agent_undirected}. The prescribed hexagonal formation is $ \varDelta=\text{col}(-1, 0, -\frac{1}{2}, \frac{\sqrt{3}}{2}, \frac{1}{2}, \frac{\sqrt{3}}{2}, 1, 0, \frac{1}{2},  -\frac{\sqrt{3}}{2}, -\frac{1}{2}, -\frac{\sqrt{3}}{2})$. The proposed distributed algorithm in (\ref{Controller}) with estimator in  (\ref{Estimator})-(\ref{filter}) is performed with  parameters selected as $ \kappa_{1i}=15 $, $ \kappa_{2i}=5 $, $ \kappa_{\eta i}=8 $, $ \kappa_{\xi i}=6 $, $ \kappa_{\varrho i} =4$, $ \bar{k}_{1i}= 0.8$,  $ \bar{k}_{2i}= 80$,  $\delta_{i}= 0.05$,  $k_{\kappa i}= 1$, $\varGamma_{\kappa i}= I_{2}$, $\varGamma_{\varepsilon i}= I_{2}$, and $\varGamma_{\theta i}= 10I_{2}$. The simulation results are obtained as illustrated in Figs. \ref{d}-\ref{AgentKapa}. In particular, Fig. \ref{d} depicts the position and velocity estimates of the leader's states, respectively. 
%where Figs. (\ref{d1}) and (\ref{d2}) show the position estimates and their estimated errors, while Figs. (\ref{dd1}) and (\ref{dd2}) show the velocity estimates and their estimated errors. 
Then, the formation trajectory of the six agents is shown in Fig. \ref{Formation}, where the agents' initial positions are marked by circles and their final positions make the hexagonal formation. In addition, the position and velocity tracking errors between the leader and the followers are shown in Fig. \ref{TrackingErrors}. From Fig. \ref{AgentKapa}, the adaptive parameters $ \hat{\varepsilon}_{i} $ and $  \kappa_{i} $ are bounded. Thus, it can be seen that the formation tracking is achieved for nonlinear multi-agent systems with time-varying actuator faults over the directed graph.

\begin{figure}[!t]  
	\centering
	\hspace*{-1.5em}
	\begin{tabular}{cc}
		\subfloat [Agents' estimates $ \hat{x}_{i,1} $]
		{\includegraphics[width=4.8cm,height=2.5cm]{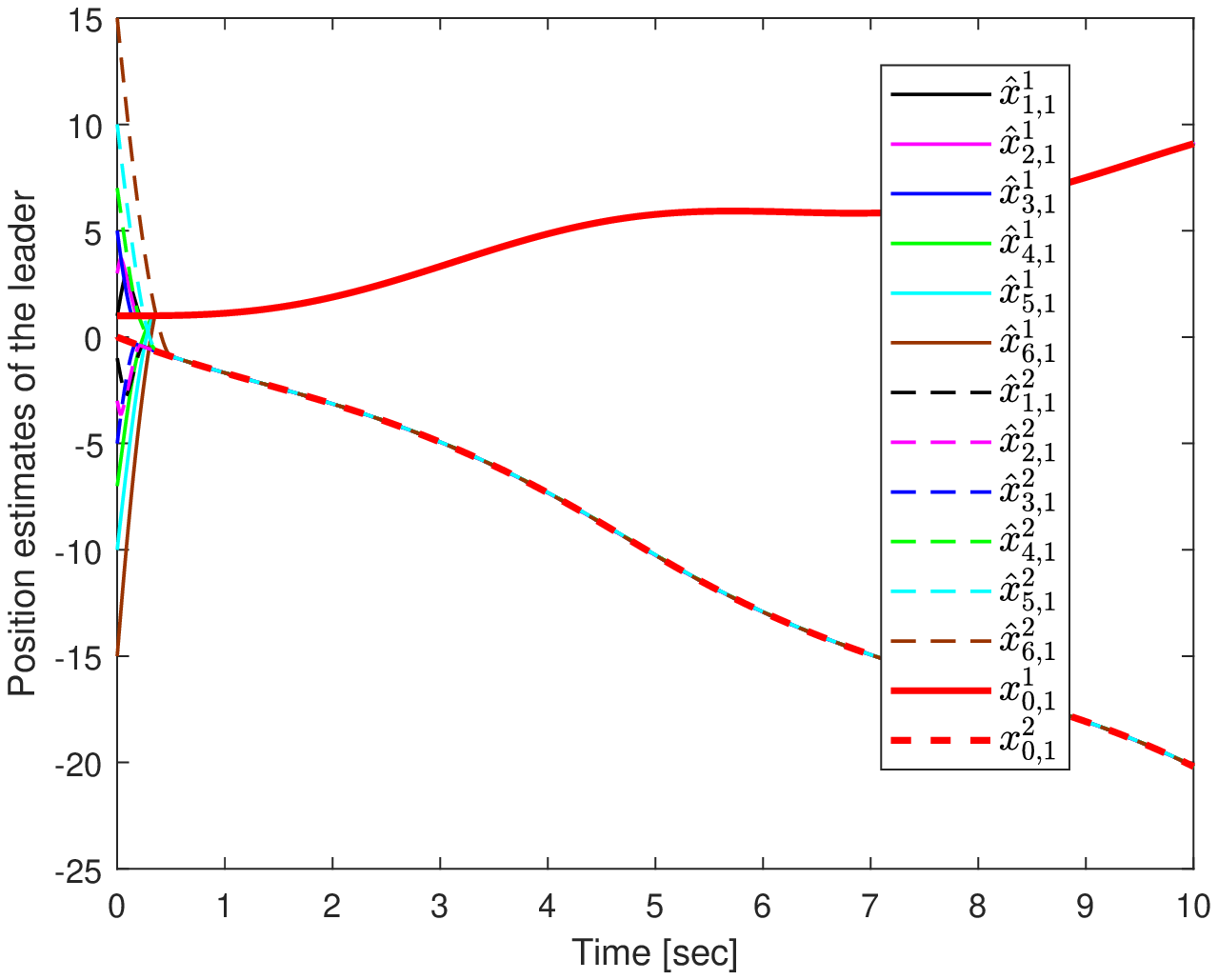} \label{d1}}
		
		\hspace*{-1.5em}
		\subfloat [Agents' estimated errors $ \tilde{x}_{i,1} $.] 
		{\includegraphics[width=4.8cm,height=2.5cm]{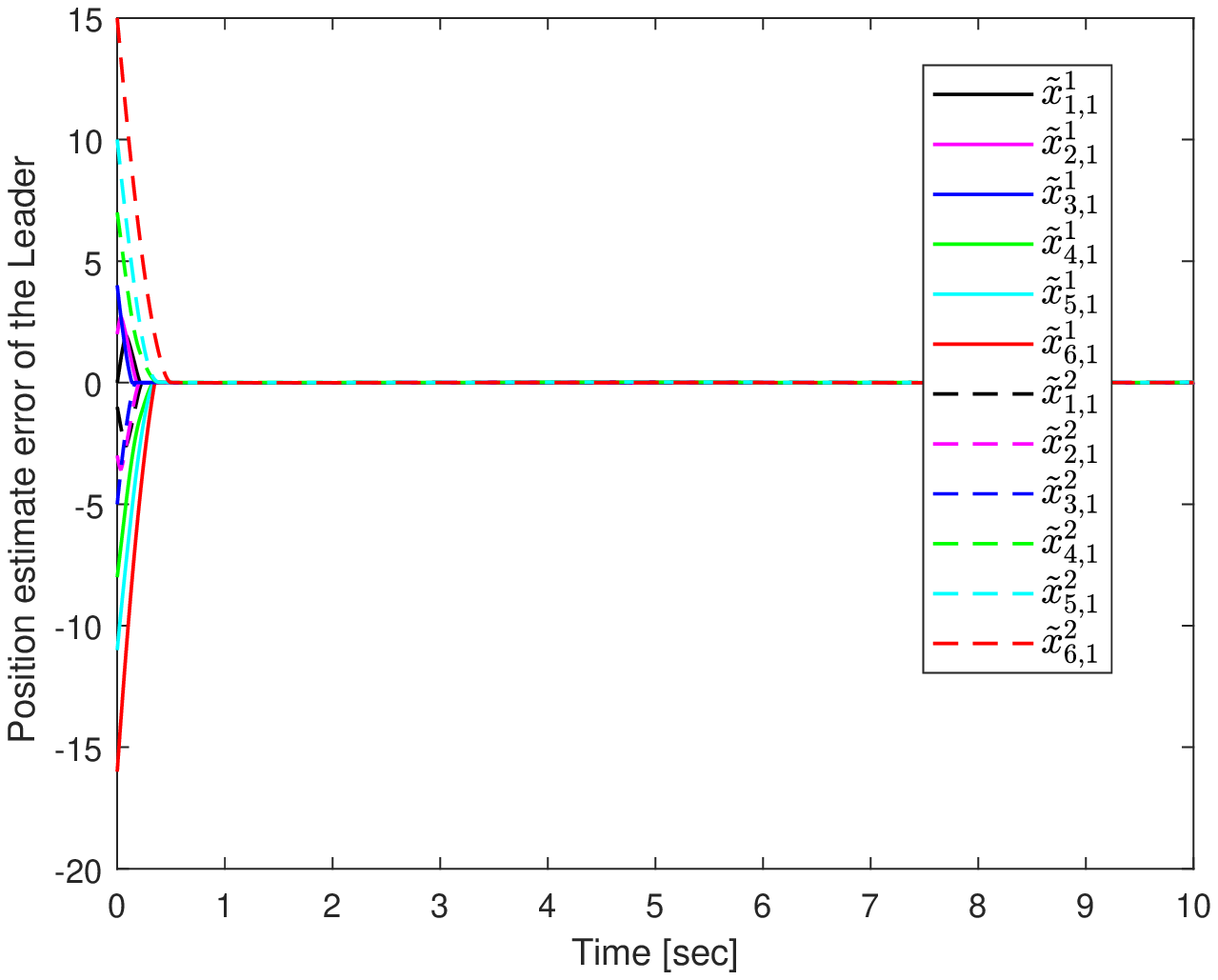}  \label{d2}} 
		
		\\

		\subfloat [Agents' estimates $ \hat{x}_{i,2} $]
				{\includegraphics[width=4.8cm,height=2.5cm]{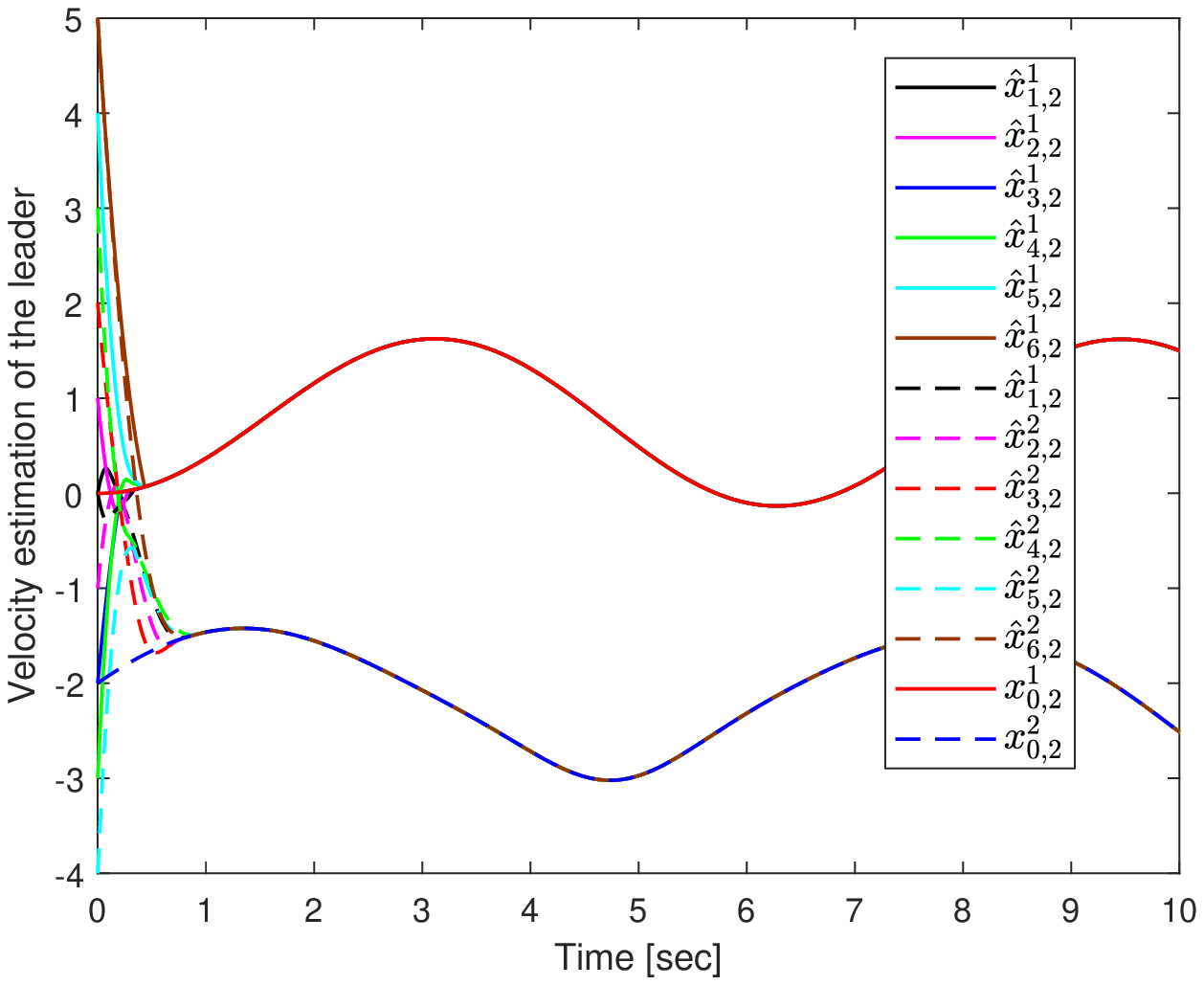} \label{dd1}}
				
		\hspace*{-1.5em}
		\subfloat [Agents' estimated errors $ \tilde{x}_{i,2} $.] 
				{\includegraphics[width=4.8cm,height=2.5cm]{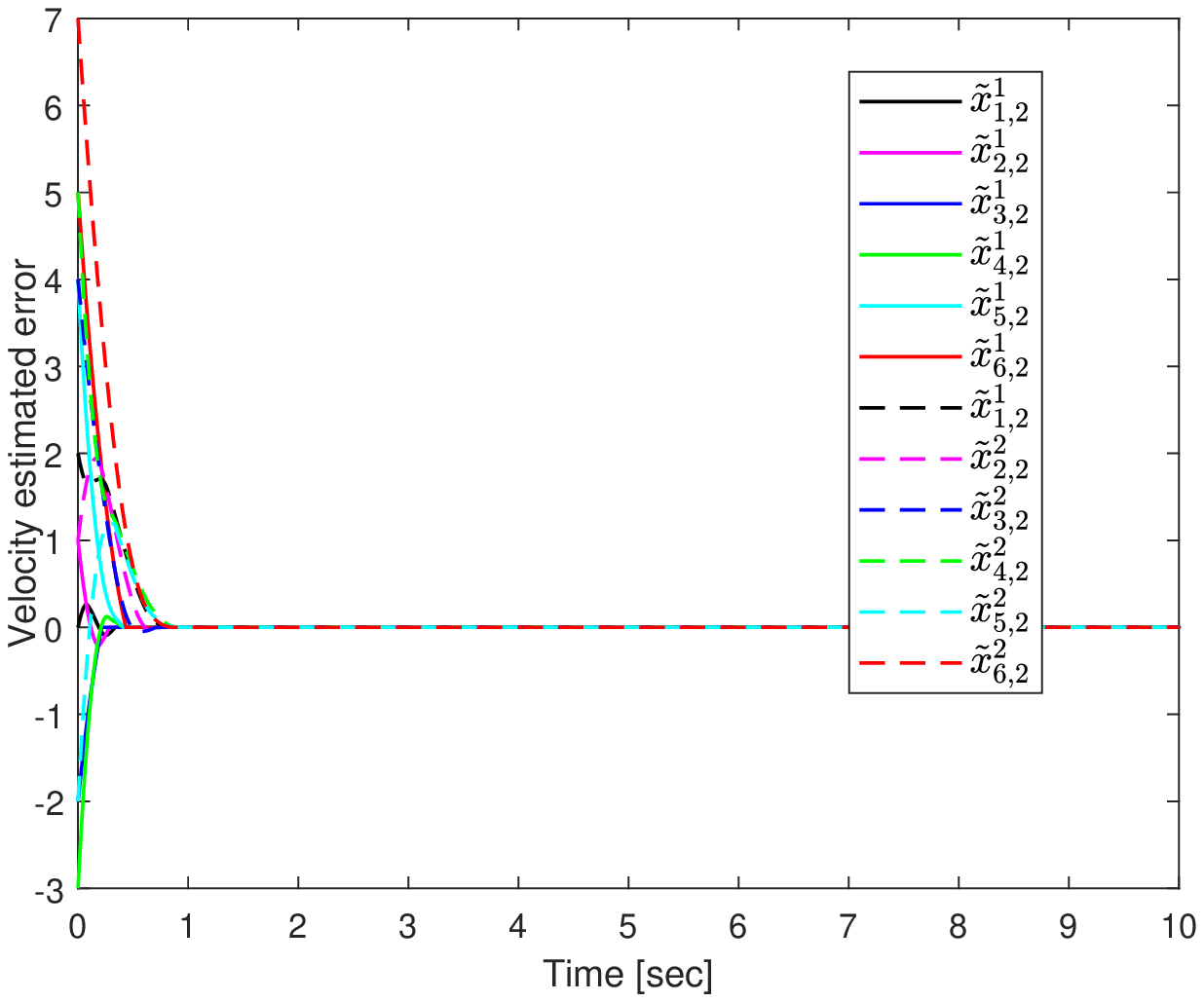}  \label{dd2}} 	
	\end{tabular}
	\vspace{-5pt}
	\caption{The estimates of the leader's states under the estimator (\ref{Estimator})-(\ref{filter}).}
	\label{d}	
\end{figure}

%\vspace{-8pt}
\begin{figure}[!t]
	\centering
	\includegraphics[width=8.0cm,height=5.0cm]{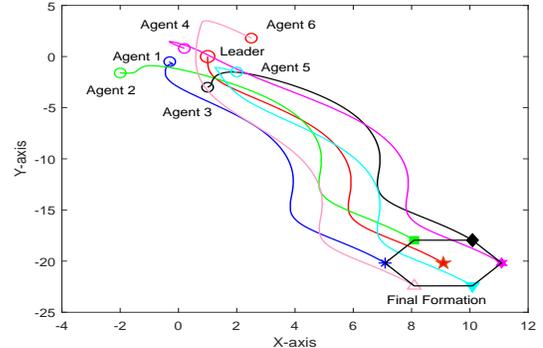}
	\vspace{-5pt}
	\caption{The formation trajectory of the six agents under the controller (\ref{Controller})} 
%		for systems (\ref{SFollower})-(\ref{leaderd}). } 
	\label{Formation}
\end{figure}

%\vspace{-8pt}
%\begin{figure}[!t]
%	\centering
%	\includegraphics[width=8.5cm,height=2.6cm]{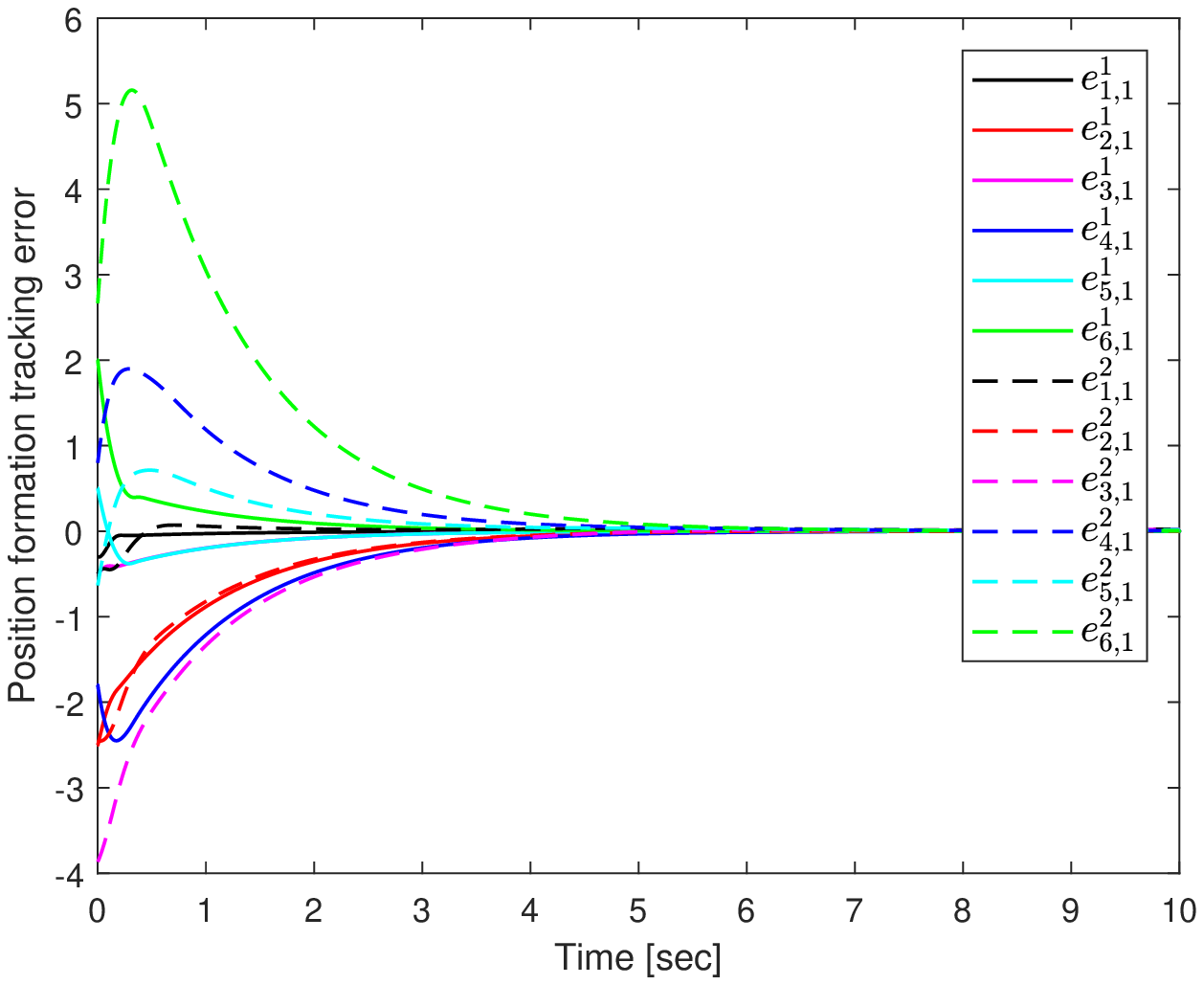}
%		\vspace{-5pt}
%	\caption{The trajectories of the formation tracking errors $ e_{i,1}, \ i=1,2,\cdots,6 $. } 
%	\label{Position_formation_tracking_error}
%\end{figure}

%\vspace{-8pt}
%\begin{figure}[!t]
%	\centering
%	\includegraphics[width=8.5cm,height=2.7cm]{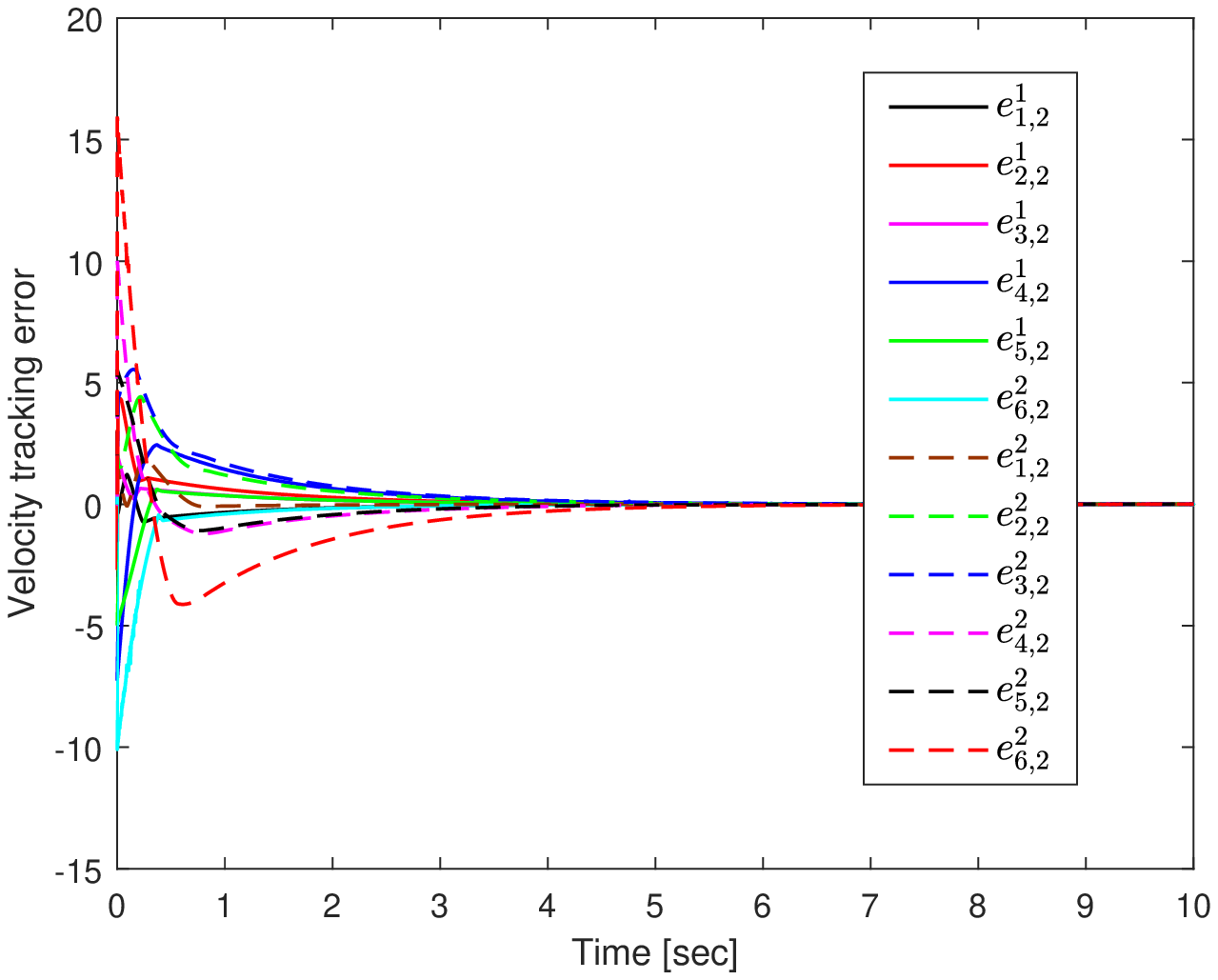}
%		\vspace{-5pt}
%	\caption{The trajectories of the velocity tracking errors $ e_{i,2}, \ i=1,2,\cdots,6 $. } 
%	\label{VelocityError}
%\end{figure} 
 
\begin{figure}[t!]

 
	\centering
	\hspace*{-0.5em}
	\begin{tabular}{cccc}	
		\hspace*{-1.0em}		
		\subfloat [Formation tracking error $e_{i,1}$]  {\includegraphics[width=4.8cm,height=2.5cm]{Position_formation_tracking_error.eps} \label{Position_formation_tracking_error}}
		
		\hspace*{-1.0em}		
		\subfloat [Velocity tracking error $ e_{i,2} $ ] {\includegraphics[width=4.8cm,height=2.5cm]{VelocityError.eps} \label{VelocityError}} 
		
	\end{tabular}
	\vspace*{-3pt}
	\caption{The trajectories of position and velocity tracking errors under (\ref{Controller}).}
	\label{TrackingErrors}
\end{figure}

%\vspace*{-5pt} 
\begin{figure}[t!] 
	\centering
	\hspace*{-0.5em}
	\begin{tabular}{cccc}	
		\hspace*{-1.0em}		
		\subfloat [$ \hat{\varepsilon}_{i} $]  {\includegraphics[width=4.8cm,height=2.5cm]{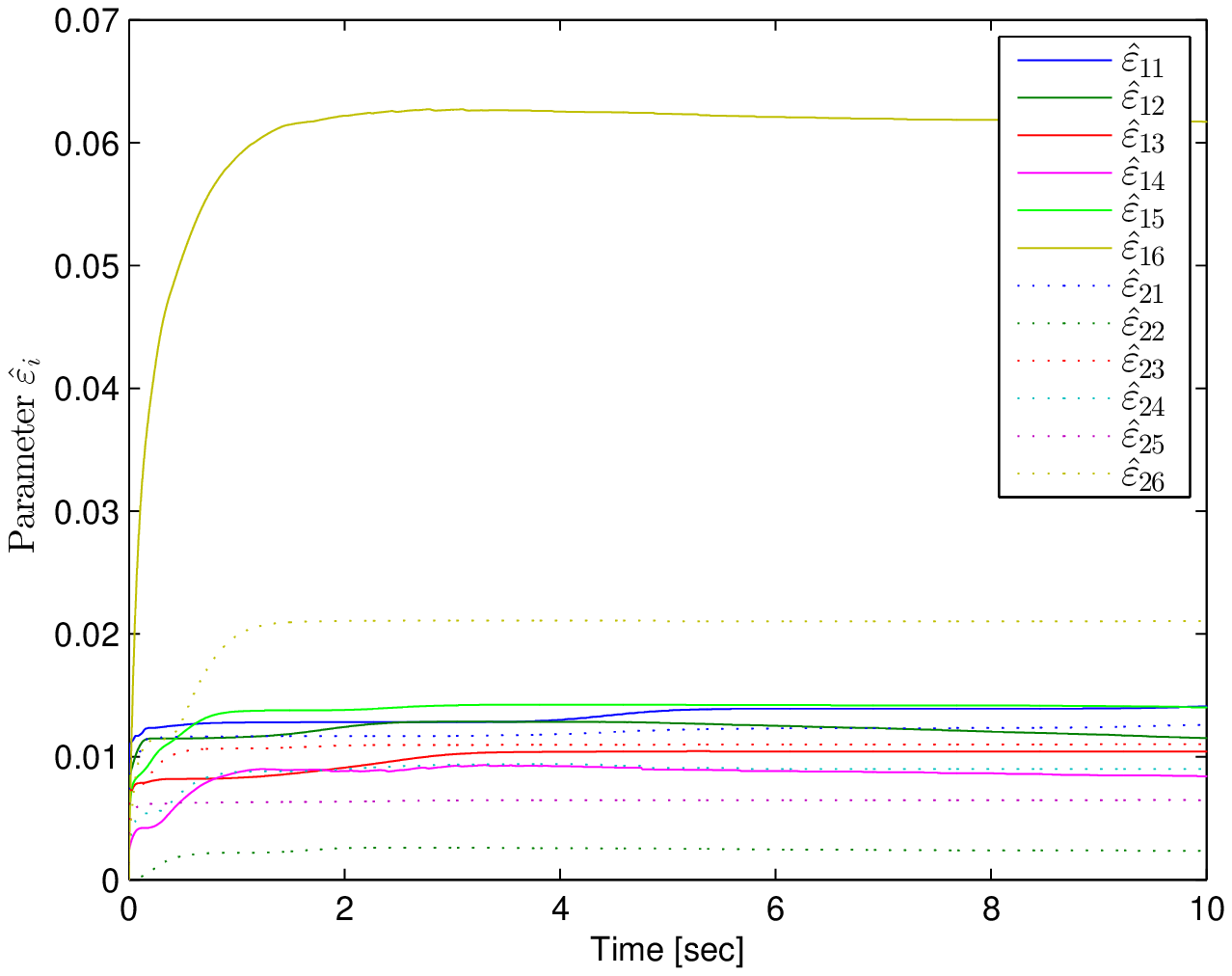} \label{E1}}
 		
		\hspace*{-1.0em}		
		\subfloat [$  \kappa_{i} $ ] {\includegraphics[width=4.8cm,height=2.5cm]{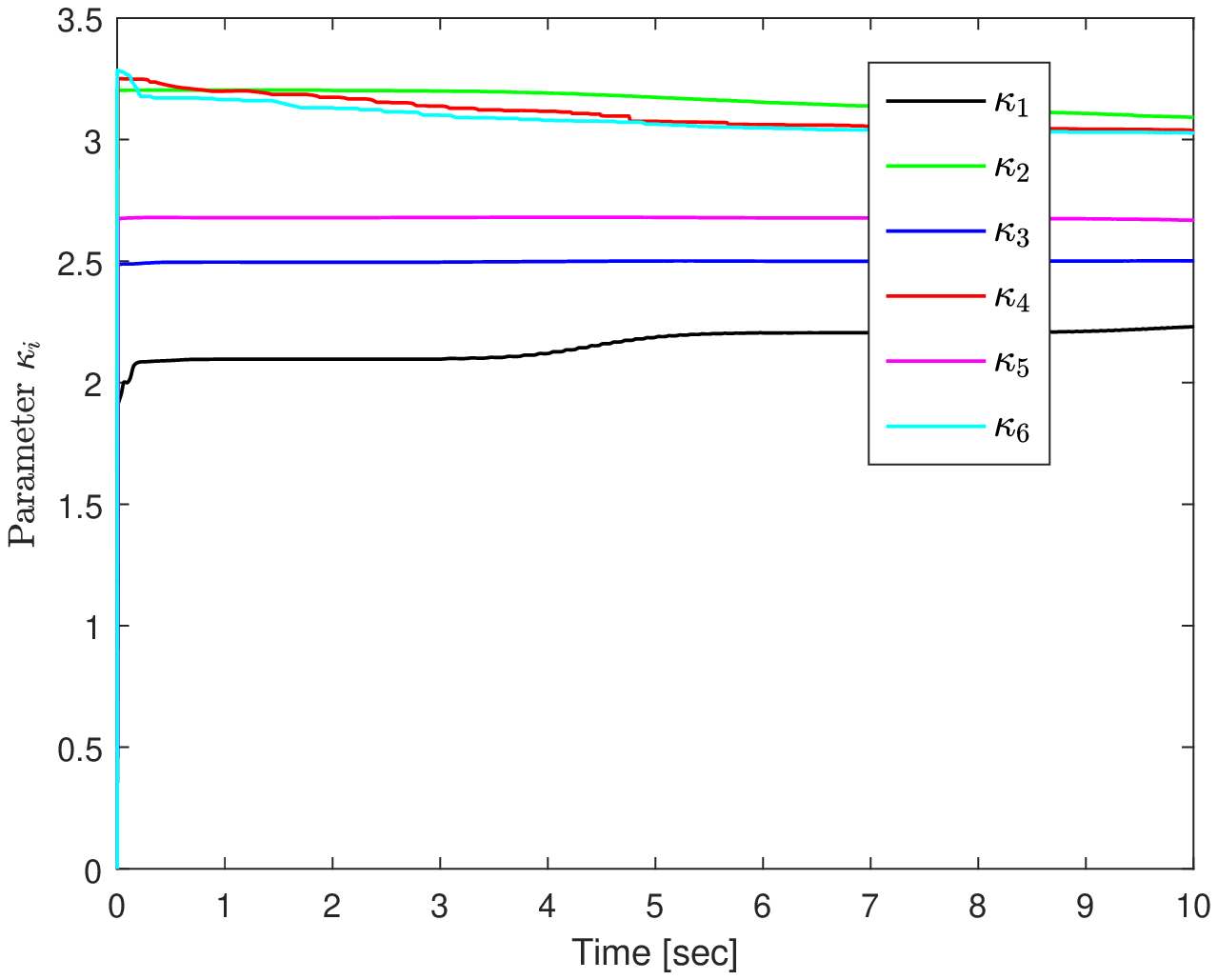} \label{E2}} 
		
	\end{tabular}
	\vspace*{-3pt}
	\caption{The trajectories of adaptive parameters under (\ref{Controller}).}
	\label{AgentKapa}
\end{figure}

%\vspace*{-5pt}
\subsection{Fault-Tolerant Task-Space Coordination of Manipulators}
\vspace*{-2pt}
In this simulation, six networked two-link planar manipulators are investigated. Since all robots are planar manipulators, the main objective is to synchronize the end-effectors, while ensuring that they follow the desired trajectory in the task space \cite{Zhang17Tcyber,WangAT,LiuTRO,WangTAC,LiangTcyber}. 
%In addition, we assume that the end-effector position can be obtained by using vision, electro-magnetic measurements, position sensitive detectors, or laser trackers \cite{Zhang17Tcyber,WangAT,LiuTRO,WangTAC,LiangTcyber}.  

The classic manipulator's dynamics are described by 
%the following Lagrange equations 
%\cite{Hu16TCNS1} 
\vspace*{-3pt} 
\begin{equation}
M_{i}(q_{i})\ddot{q}_{i}+C_{i}(q_{i},\dot{q}_{i})\dot{q}%
_{i}+G_{i}(q_{i})+F_{i}(q_{i},\dot{q}_{i})=g_{i}(t)\tau_{ai}+d_{i},  \notag
\end{equation}%
where $ q_{i}=\text{col}(q_{1i}, q_{2i}) $ denote the joint angles, and 
\vspace*{-3pt} 
\begin{equation}
M_{i}(q_{i}) =\left( 
\begin{array}{cc}
\theta_{1i}+\theta_{2i}+2\theta_{3i}\cos (q_{2i}) & \theta_{2i}+\theta_{3i}\cos (q_{2i}) \\
\theta_{2i}+\theta_{3i}\cos (q_{2i}) & \theta_{2i}%
\end{array}%
\right), \notag 
\end{equation}%
\vspace*{-5pt} 
\begin{equation}
C_{i}(q_{i},\dot{q}_{i}) =\left( 
\begin{array}{cc}
-\theta_{3i}\sin (q_{2i})\dot{q}_{2i} & -\theta_{3i}\sin (q_{2i})(\dot{q}_{1i}+\dot{q}_{2i}) \\
\theta_{3i}\sin (q_{2i})\dot{q}_{1i} & 0
\end{array}
\right),  \notag 
\end{equation}%
\vspace*{-5pt}  
\begin{equation}
G_{i}(q_{i}) =\left( 
\begin{array}{c}
\theta_{4i}\mathrm{g}\cos (q_{1i}) +\theta_{5i}\mathrm{g} \cos(q_{1i}+q_{2i}) \\
\theta_{5i}\mathrm{g} \cos(q_{1i}+q_{2i})
\end{array}
\right),  \notag 
\end{equation}% 
with $ \theta_{i} =\text{col} (\theta_{1i}, \theta_{2i}, \theta_{3i},\theta_{4i},\theta_{5i}) $, where $ \theta_{1i}=I_{1i}+m_{1i}l^{2}_{c1i}+m_{2i}l^{2}_{1i} $,  $ \theta_{2i}=I_{2i} +m_{2i}l^{2}_{c2i} $, $\theta_{3i}= m_{2i}l_{1i}l_{c2i} $,  $ \theta_{4i}=(m_{1i}+m_{2i})l_{1i} $, and $\theta_{5i}=m_{2i}l_{2i}$. The physical parameters of the six robotic manipulators are listed in Table I.
%, where $ m_{1i} $ and $ m_{2i} $ denote the masses of the links, $ I_{1i} $ and $ I_{2i} $ denote the moments of inertia about the center of masses, $ l_{1i} $ and $ l_{2i} $ are the link lengths, and $ l_{c1i} $ and $ l_{c2i} $ are the mass centers of the links. The gravitational acceleration constant is $ \mathrm{g}=9.8 \ (m/s^{2})$. 
The term $ F_{i}(q_{i},\dot{q}_{i})=F_{vi}\tanh (\dot{q}_{i}) + F_{ci}\text{sgn}(\dot{q}_{i})$ with $F_{vi}=[1 \ 0; 0 \ 1] $, $F_{ci}=[1 \ 1; 1 \ 1] $ are friction forces, and $ d_{i} $ are disturbances described by  
\vspace*{-5pt}
\begin{align}
d_{i}(t)&=\left[
\begin{array}{c}
0.2\text{sin}(\frac{\pi}{10i}t)+0.4\text{sin}(\frac{\pi}{20i}t) \\
0.4\text{cos}(\frac{\pi}{10i}t)+0.8\text{cos}(\frac{\pi}{20i}t)
\end{array}%
\right], i=1,2,\cdots,6.  
\end{align}

In addition, the control coefficients are $ g_{i}(t)=p_{i}(\cos(t)+1.2)$ with $ p_{i}=(-1)^{i}*0.1i $ and the actuator faults $ \tau_{ai} $ are given by
\vspace{-5pt}
\begin{equation} \label{ggfunction}
\tau_{ai}=\left\{ 
\begin{array}{l}
%\ \ \ \ \ \ \ \ \ u_{i}, \ \ \ \ \ \ \ \ \ \ \ \ \ \ \ \ \ \ \ \ \ \ \ \ \ \ \ \ \ 0 \leq t < 3 \\
\hspace{-0.5em} (0.2\sin(t)+0.4) \tau_{i}+[2,2\cos(t)]^{T},  \  3 \leq t < 6, \\
\hspace{-0.5em} (0.3\cos(t)+0.6) \tau_{i}+ [\sin(0.1t),3]^{T},  \ 6 \leq t.  
\end{array} 
\right. 
\end{equation} 

In this simulation, the desired end-effector reference trajectory for the 
manipulators is a circle in the task space \cite{LiuTRO}, 
\vspace*{-5pt}
\begin{equation} 
x_{d}(t)= \left[
\begin{array}{c}
1.2+0.5\text{sin}(0.6t) \\
1.0+0.5\text{cos}(0.6t)
\end{array}
\right].  \label{Desiredtrajectory}   
\end{equation}
%which is only provided to manipulators $ 3 $ and $ 6 $ as depicted in Fig. \ref{elsix_agent_undirected} for a directed graph among six manipulators. 

%\begin{figure}[!t]
%	\centering
%	\includegraphics[width=5.0cm,height=2.0cm]{E-L_six_agents.eps}
%	\caption{A directed communication topology for six robotic manipulators. } 
%	\label{elsix_agent_undirected}
%\end{figure}

%Based on the established coordinates, the Cartesian coordinates of the $ i $th end-effector, i.e., $ x_{ci} $, $ y_{ci} $ can be described in the robot base frame as follows. 
%\vspace*{-3pt} 
%\begin{equation}
%\left[
%\begin{array}{c}
%x_{ci} \\
%y_{ci}
%\end{array}
%\right] = \left[
%\begin{array}{c}
%\text{cos}(q_{1i}+q_{2i})l_{2i}+\text{cos}(q_{1i})l_{1i} \\
%\text{sin}(q_{1i}+q_{2i})l_{2i}+ \text{sin}(q_{1i})l_{1i}
%\end{array}
%\right]. \label{cartesian}
%\end{equation}
%
%Then, the Jacobian matrix from the joint space to Cartesian space for the robot is derived as $ J_{mi}(q_{i})= $
%\vspace*{-3pt}
%\begin{equation}
%\left[
%\begin{array}{cc}
%-\text{sin}(q_{1i}+q_{2i})l_{2i}-\text{sin}(q_{1i})l_{1i} & -\text{sin}(q_{1i}+q_{2i})l_{2i} \\
%\text{cos}(q_{1i}+q_{2i})l_{2i}+\text{cos}(q_{1i})l_{1i} & \text{cos}(q_{1i}+q_{2i})l_{2i}
%\end{array}
%\right]. \label{Jacobiancartesian}
%\end{equation}

\begin{table}[t!] 
	\hspace*{-1.0em}	
	\centering
	\caption{The physical dynamic parameters of robotic manipulators \cite{WangAT}.}
\vspace*{-5pt}
	\begin{tabular}{c|c|c|c|c}
		\hline	\hline
		\small Robot &  $ m_{1i}, m_{2i} \ \text{(kg)} $  &  $ I_{1i}, I_{2i} \ (\text{kg} \text{m}^{2})  $   &  $ l_{1i}, l_{2i} \ \text{(m)} $   & $ l_{c1i}, l_{c2i} \ \text{(m)} $ \\
		\hline
		\small $ 1$   &  \hspace{-2em}\small 1.5, 1.3  & \hspace{-2em}  \small 0.50, 0.43  & \hspace{-1em}  \small 2.0, 2.0  &\small 1.00, 1.00 \\
		\hline
		\small $ 2$   & \hspace{-2em} \small 1.2, 1.5  & \hspace{-2em}  \small 0.53, 0.36 & \hspace{-1em}  \small 2.3, 1.7  &\small 1.15, 0.85 \\
		\hline
		\small $ 3 $  & \hspace{-2em} \small 1.2, 1.3  & \hspace{-2em}  \small 0.32, 0.52 & \hspace{-1em}  \small 1.8, 2.2  &\small 0.90, 1.10 \\
		\hline	
		\small $ 4 $  &\hspace{-2em} \small 1.8, 1.5  & \hspace{-2em}  \small 0.66, 0.45 & \hspace{-1em}  \small 2.1, 1.9  &\small 1.05, 0.95 \\
		\hline 	
		\small $ 5 $  &\hspace{-2em} \small 1.7, 1.6  & \hspace{-2em}  \small 0.56, 0.43  & \hspace{-1em}  \small 2.0, 1.8  &\small 1.00, 0.90 \\
		\hline	
		\small $ 6 $  &\hspace{-2em} \small 1.9, 1.3  & \hspace{-2em}  \small 0.46, 0.48  & \hspace{-1em}  \small 1.7, 2.1   &\small 0.85, 1.05 \\
		\hline
	\end{tabular}
\end{table}

%Based on the established coordinates, cameras can be placed parallel to the robot-motion plane to detect the positions of the end-effectors. 
Then, the forward kinematics for the $ i $th robot are  
\vspace*{-5pt}
\begin{equation}
%\hspace{-2.5em}
x_{i} = S_{i}(q_{i})= \left[
\begin{array}{c}
c_{12i}l_{2i} v_{1i}+c_{1i}l_{1i}v_{1i} \\
s_{12i}l_{2i}v_{2i}+ s_{1i}l_{1i}v_{2i}
\end{array}
\right], \ \dot{x}_{i} = J_{i}(q_{i}) \dot{q}_{i},  \label{cartesian}   
\end{equation}
where $ c_{12i}=\text{cos}(q_{1i}+q_{2i}) $, $ c_{1i}=\text{cos}(q_{1i}) $, $s_{12i} =\text{sin}(q_{1i}+q_{2i}) $, $ s_{1i}=\text{sin}(q_{1i}) $, and $ v_{1i}$, $ v_{2i} $ are scaling factors \cite{LiangTcyber}. The overall Jacobian matrix from joint space to task space is expressed as
\vspace*{-5pt}  
\begin{equation}
J_{i}(q_{i})=\left[
\begin{array}{cc}
-s_{12i}l_{2i}v_{1i}-s_{1i}l_{1i}v_{1i} & -s_{12i}l_{2i} v_{1i} \\
c_{12i}l_{2i}v_{2i}+c_{1i}l_{1i}v_{2i} &  c_{12i}l_{2i} v_{2i} 
\end{array}
\right]. \label{Kinematicscartesian}
\end{equation}

Thus, $ \dot{x}_{i} = J_{i}(q_{i}) \dot{q}_{i} $ in (\ref{cartesian}) is written as the product of a known regressor matrix $ Z_{i}(q_{i},\dot{q}_{i}) $ and an unknown constant vector $ a_{i} $ 
%in the sense that 
\vspace*{-5pt}
\begin{align}
\dot{x}_{i} &=  \left[
\begin{array}{cccc}
-s_{1i}\dot{q}_{1i} & -s_{12i}(\dot{q}_{1i}+\dot{q}_{2i}) & 0  & 0 \\
0 &  0 &  c_{1i}\dot{q}_{1i} &  c_{12i} (\dot{q}_{1i}+\dot{q}_{2i}) 
\end{array} 
\right] \notag \\
& \ \ \ \times \left[
\begin{array}{c}
l_{1i}v_{1i}   \\
l_{1i}v_{2i} \\
l_{2i}v_{1i} \\
l_{2i}v_{2i}
\end{array} 
\right] =Z_{i}(q_{i},\dot{q}_{i})a_{i},  \label{cartesianadaptive}   
\end{align}
where $ a_{i1}=l_{1i}v_{1i} $, $ a_{i2}=l_{1i}v_{2i} $, $ a_{i3}=l_{2i}v_{1i} $, and $ a_{i4}=l_{2i}v_{2i} $.
%According to Property 4, these unknown kinematic parameters are described by $ a_{i}=\text{col}(a_{i1},a_{i2},a_{i3},a_{i4},) $ where $ a_{i1}=l_{1i}v_{1i} $, $ a_{i2}=l_{1i}v_{2i} $, $ a_{i3}=l_{2i}v_{1i} $, and $ a_{i4}=l_{2i}v_{2i} $. The true values of physical kinematic parameters for the six robotic manipulators are assumed to be $ v_{1i}=1$ and $ v_{2i}=2$, $ i=1,\cdots,6 $. The initial joint configurations and parameters of manipulators are listed in Table II, and other adaptive variables are set as zero initial values with suitable dimensions.  

The simulation results are presented in Figs. \ref{TaskspaceLeader_Estimation_Position}-\ref{Parameters}. Under the proposed finite-time distributed estimator in (\ref{EEstimator}), Fig. \ref{TaskspaceLeader_Estimation_Position} 
%and  \ref{TaskspaceLeader_Estimation_Velocity} 
depicts the trajectories of the estimated position $ \chi_{i} $ and their tracking errors $ \tilde{\chi}_{i} $. Fig. \ref{Positionerror} shows the position trajectories of the end-effectors $ x_{i} $ and its tracking errors $ \tilde{x}_{i} $ under the proposed distributed controller (\ref{CController}) with (\ref{EEstimator}). The path of the six end-effectors is depicted in Fig. \ref{CircleNresult}, where all the robots can track the circle path in the task space. 
%Fig. \ref{TNresult} shows that the velocity tracking errors are convergent to zero. 
Fig. \ref{Parameters} shows that all adaptive parameters $ \hat{a}_{i}, \hat{\theta}_{i}, \hat{\varepsilon}_{i}, \hat{\kappa}_{i} $ are bounded. It can be concluded from Figs. \ref{TaskspaceLeader_Estimation_Position}-\ref{Parameters} that the task-space cooperative tracking can be achieved under the proposed distributed algorithm for networked manipulators irrespective of uncertain kinematics, dynamics, and time-varying actuator faults.

\vspace{-20pt}
\begin{figure}[h!] 
	\centering
	\hspace*{-0.5em}
	\begin{tabular}{cc}	
		
		%		\hspace*{-1.5em}		
		\subfloat [Estimated position $ \chi_{i} $] 
		{\includegraphics[width=4.8cm,height=3.8cm]{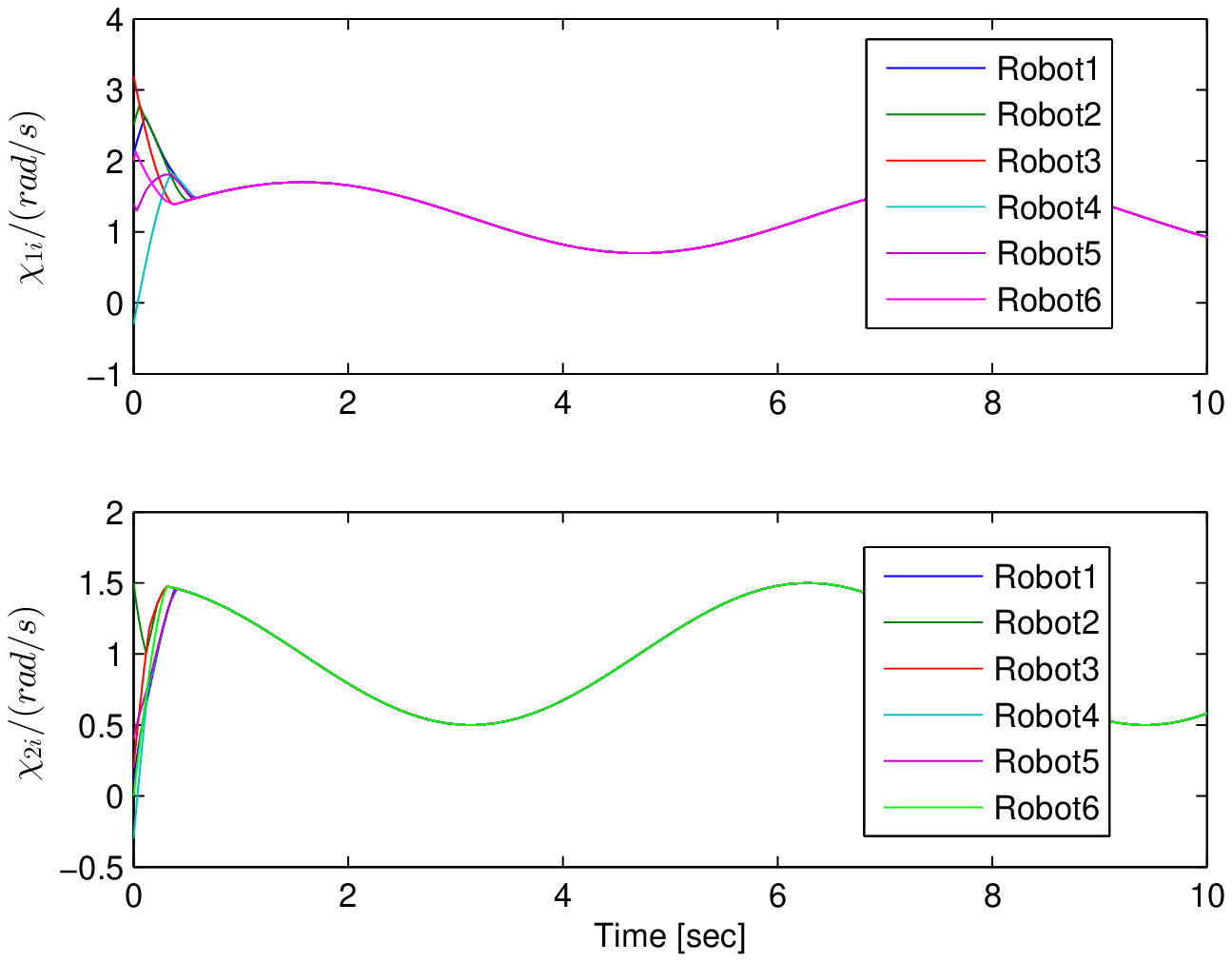}
			\label{L1}}
		
		\hspace*{-1.2em}		
		\subfloat [Estimated position error $ \tilde{\chi}_{i} $] 
		{\includegraphics[width=4.8cm,height=3.8cm]{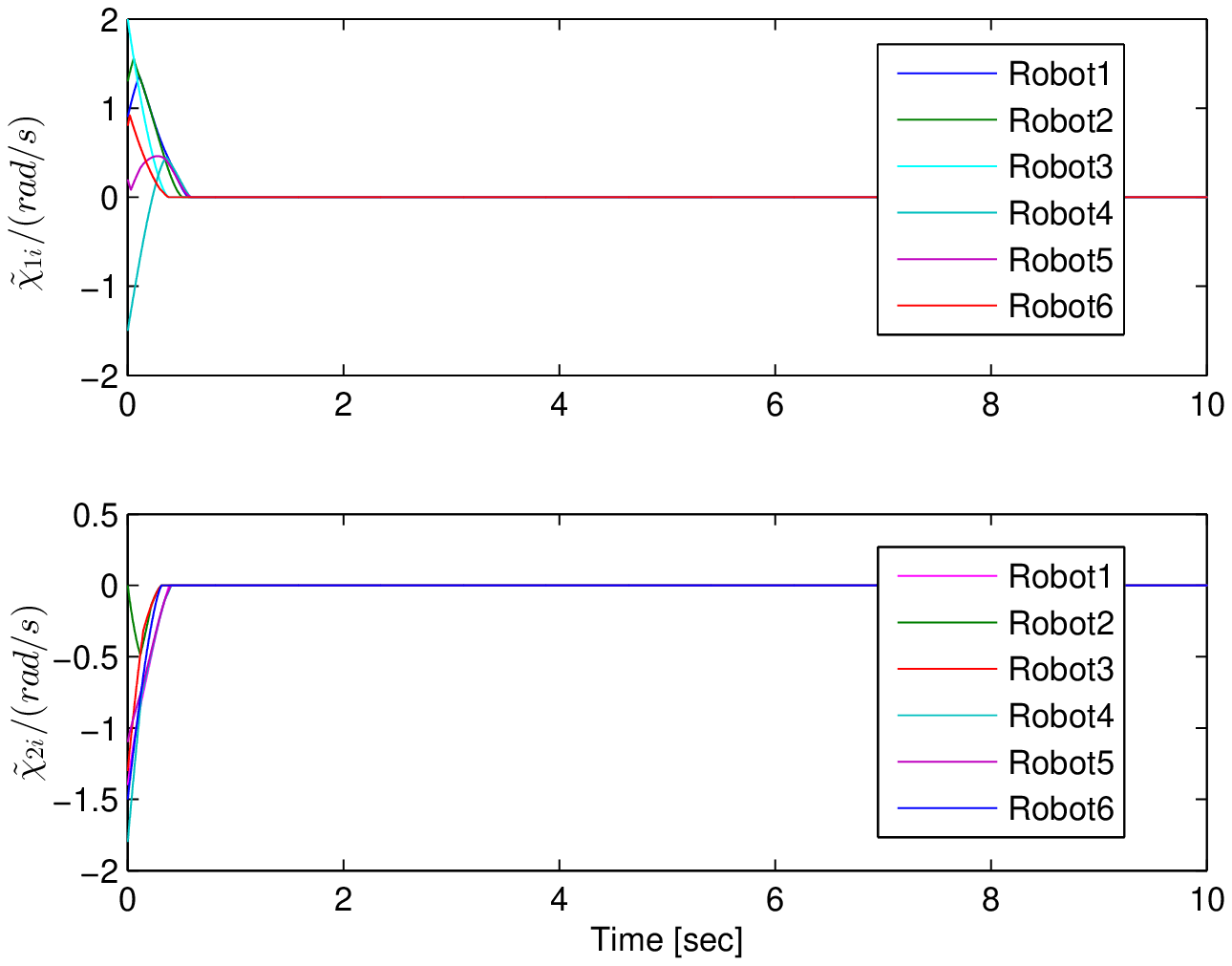}
			\label{L2}}
	\end{tabular}
	\vspace*{-5pt}
	\caption{The estimated position trajectories of the global task reference under the proposed distributed estimator (\ref{EEstimator}).}
%		: (a) estimated position $ \chi_{i} $; (b) estimated position error $ \tilde{\chi}_{i} $.}
	\label{TaskspaceLeader_Estimation_Position}
\end{figure}

%\begin{figure}[!t]
%%	\centering
%	\hspace{-1.3em}
%	\includegraphics[width=10.1cm,height=4.0cm]{TaskspaceLeader_Estimation_Velocity.eps}
%	\caption{The trajectories of estimated velocity under distributed estimator (\ref{EEstimator}): (a) estimated velocity $ \vartheta_{i} $; (b) estimated velocity error $ \tilde{\vartheta}_{i} $. } 
%	\label{TaskspaceLeader_Estimation_Velocity}
%\end{figure}

\vspace{-28pt}
\begin{figure}[h!] 
	\centering
	\hspace*{-0.5em}
	\begin{tabular}{cc}	
		
%		\hspace*{-1.5em}		
		\subfloat [Position $ x_{i} $] 
		{\includegraphics[width=4.8cm,height=3.8cm]{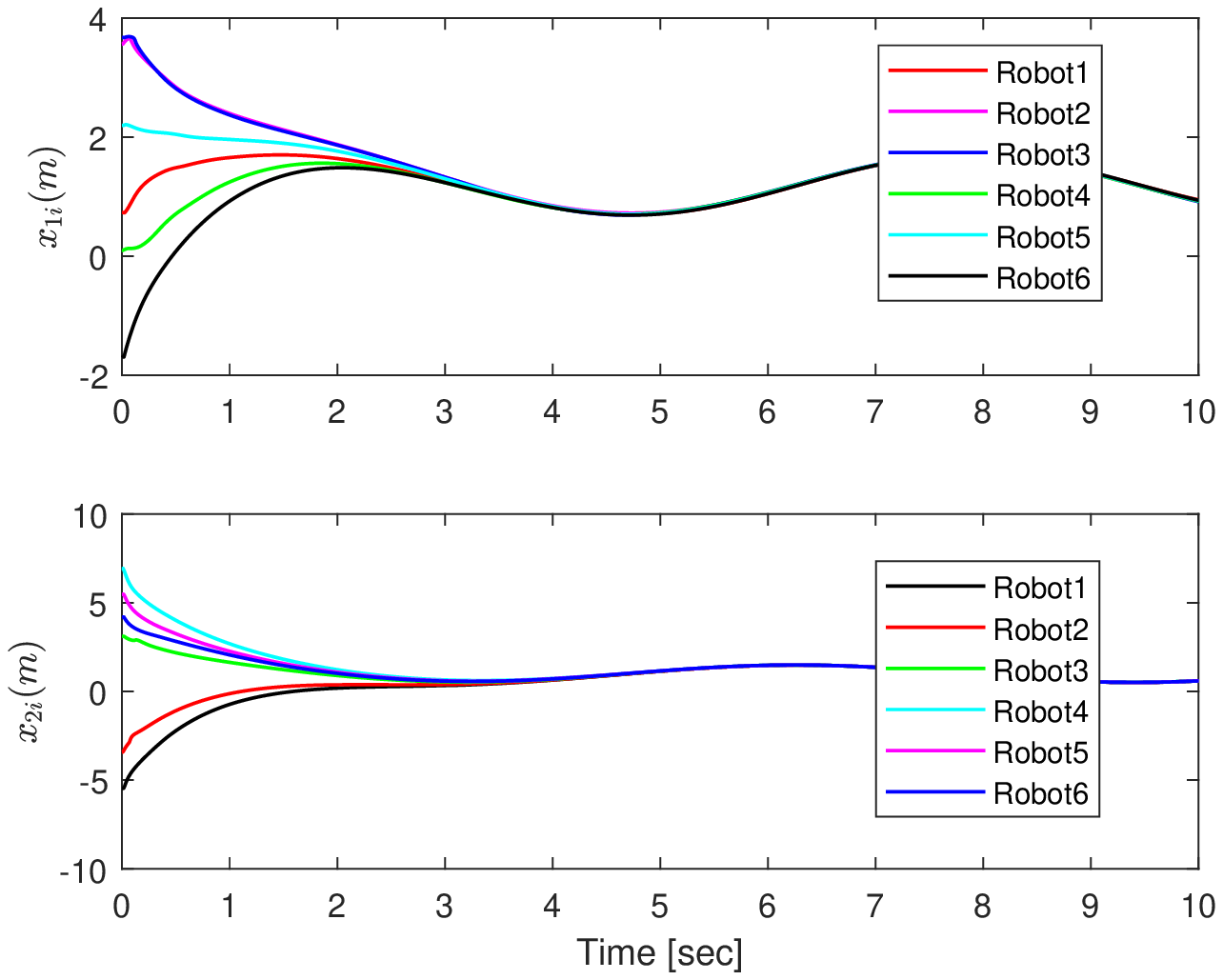}
			\label{L1}}
		
		\hspace*{-1.2em}		
		\subfloat [Position tracking error $ \tilde{x}_{i} $] 
		{\includegraphics[width=4.8cm,height=3.8cm]{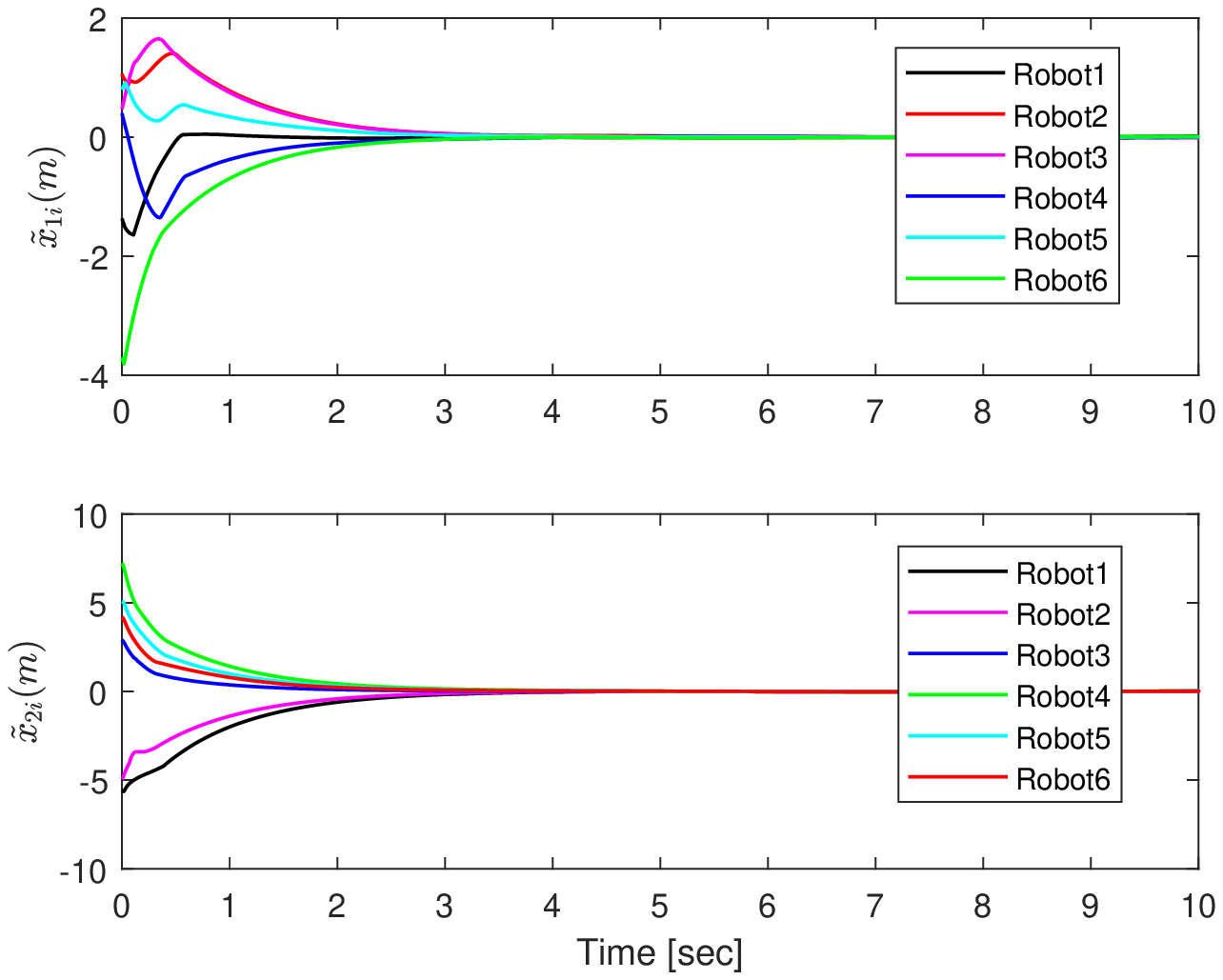}
			\label{L2}}
	\end{tabular}
	\vspace*{-5pt}
	\caption{The position trajectories of end-effectors under the proposed distributed control algorithm (\ref{CController}) with the distributed estimator (\ref{EEstimator}).}
	\label{Positionerror}
\end{figure}

\vspace{-20pt}
\begin{figure}[h!] 
%	\centering
\hspace{1.5em}
	{\includegraphics[width=7.8cm,height=5.5cm]{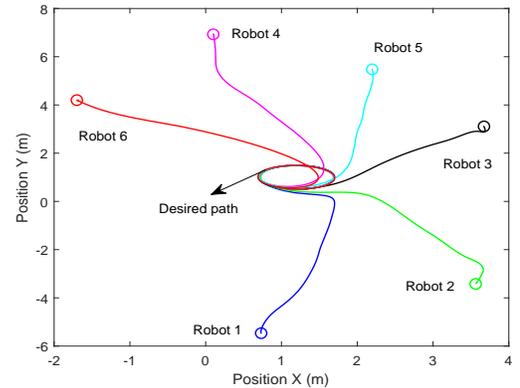}
		\label{ss2}}
		\vspace*{-3pt}
	\caption{The path of end-effectors $ x_{i} $ under (\ref{CController}) with (\ref{EEstimator}).	}
	\label{CircleNresult}
\end{figure}

%\begin{figure}[htbp!] 
%	\centering
%	{\includegraphics[width=10cm,height=5.2cm]{TVError.eps}
%	\label{ss2}}
%	\caption{The trajectories of velocity tracking error $ \dot{\tilde{x}}_{i} $ under (\ref{CController}) with (\ref{EEstimator}).}
%	\label{TNresult}
%\end{figure}

%\vspace*{-5pt} 
%\begin{figure}[htbp!] 
%	\centering
%	{\includegraphics[width=8cm,height=5.4cm]{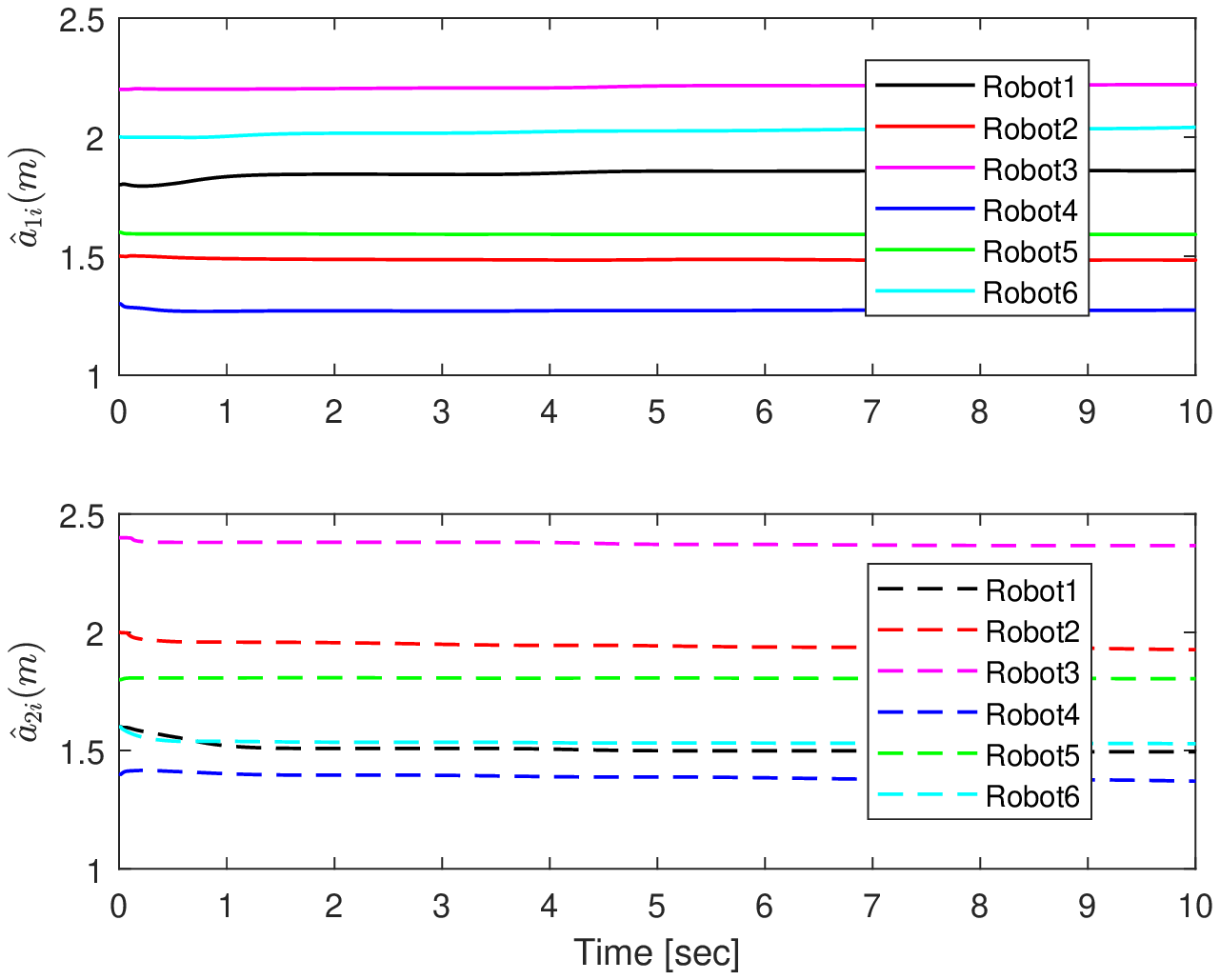}
%	\label{ah2}}
%	\caption{Kinematic parameter estimates $ \hat{a}_{i} $ under (\ref{CController}) with (\ref{EEstimator}).	}
%	\label{AhNresult}
%\end{figure}
%
%
%\vspace*{-5pt} 
%\begin{figure}[htbp!] 
%	\centering
%	{\includegraphics[width=8cm,height=5.4cm]{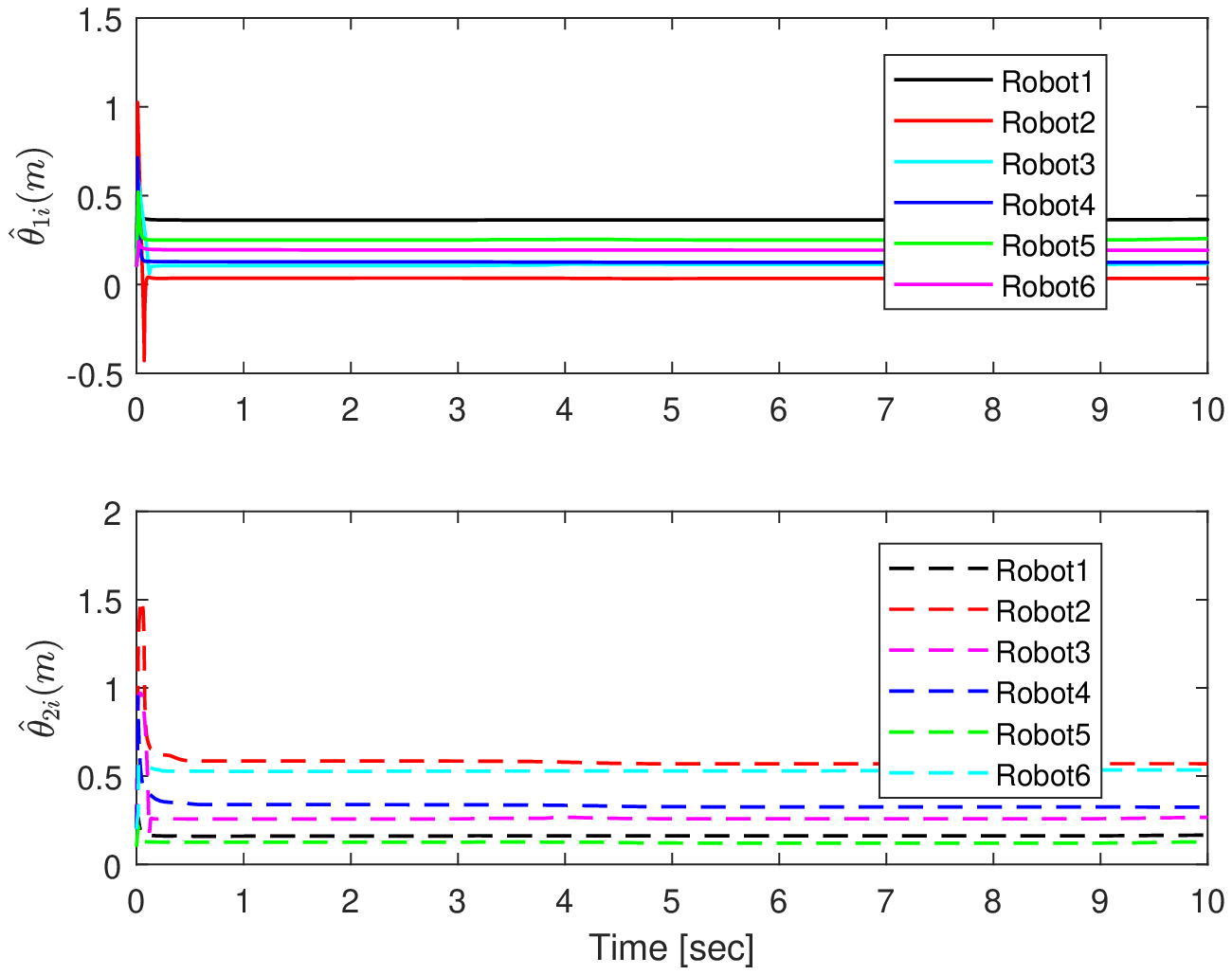}
%		\label{theta2}}
%	\caption{Dynamic parameter estimates $ \hat{\theta}_{i} $ under (\ref{CController}) with (\ref{EEstimator}).	}
%	\label{ThetaNresult}
%\end{figure}
%
%
%
%\vspace*{-5pt} 
%\begin{figure}[htbp!] 
%	\centering
%	{\includegraphics[width=8cm,height=5.4cm]{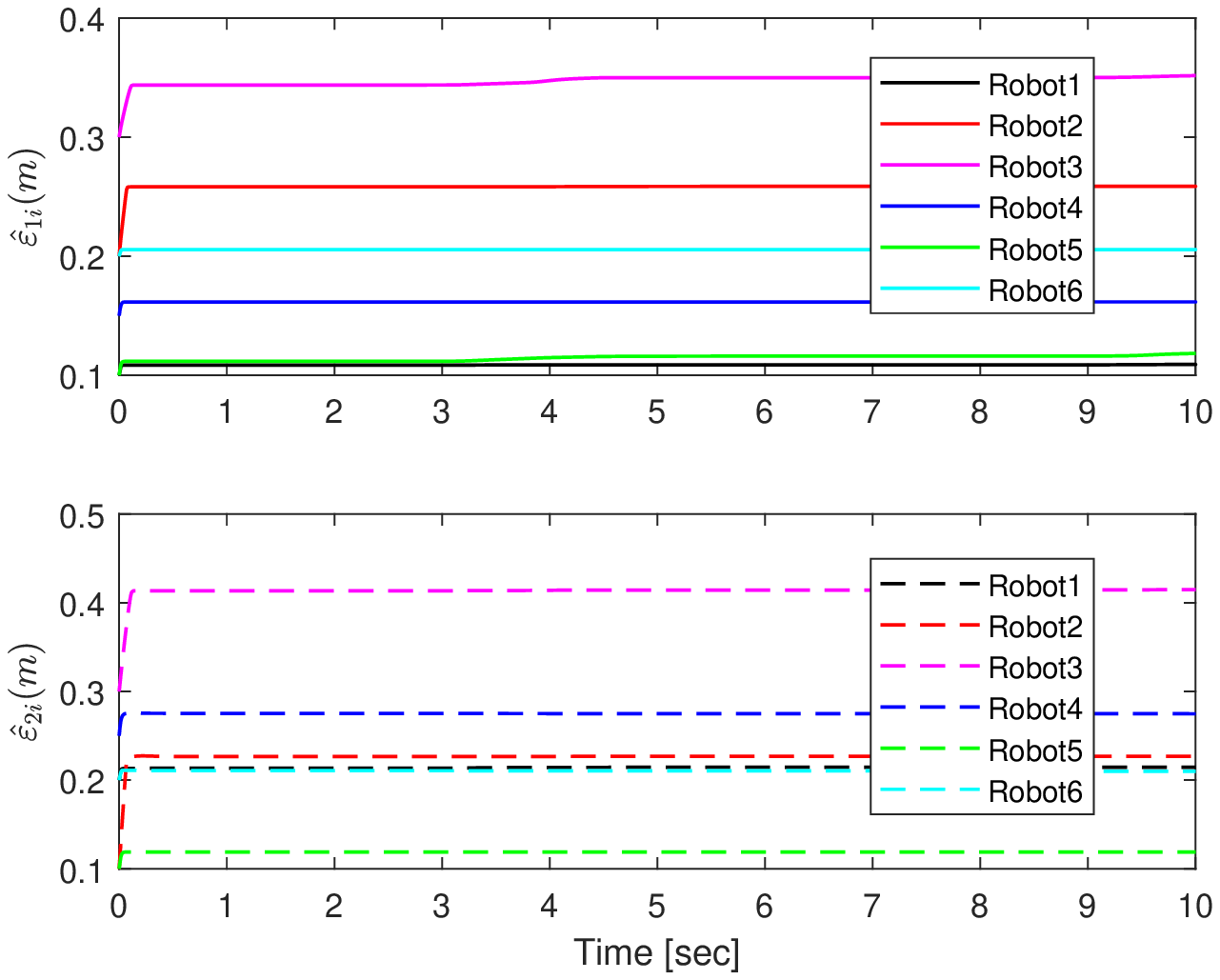}
%		\label{ksei}}
%	\caption{Parameter estimates $ \hat{\varepsilon}_{i} $ under (\ref{CController}) with (\ref{EEstimator}).	}
%	\label{KseiNresult}
%\end{figure}
%
%
%\vspace*{-5pt} 
%\begin{figure}[htbp!] 
%	\centering
%	{\includegraphics[width=8cm,height=5.4cm]{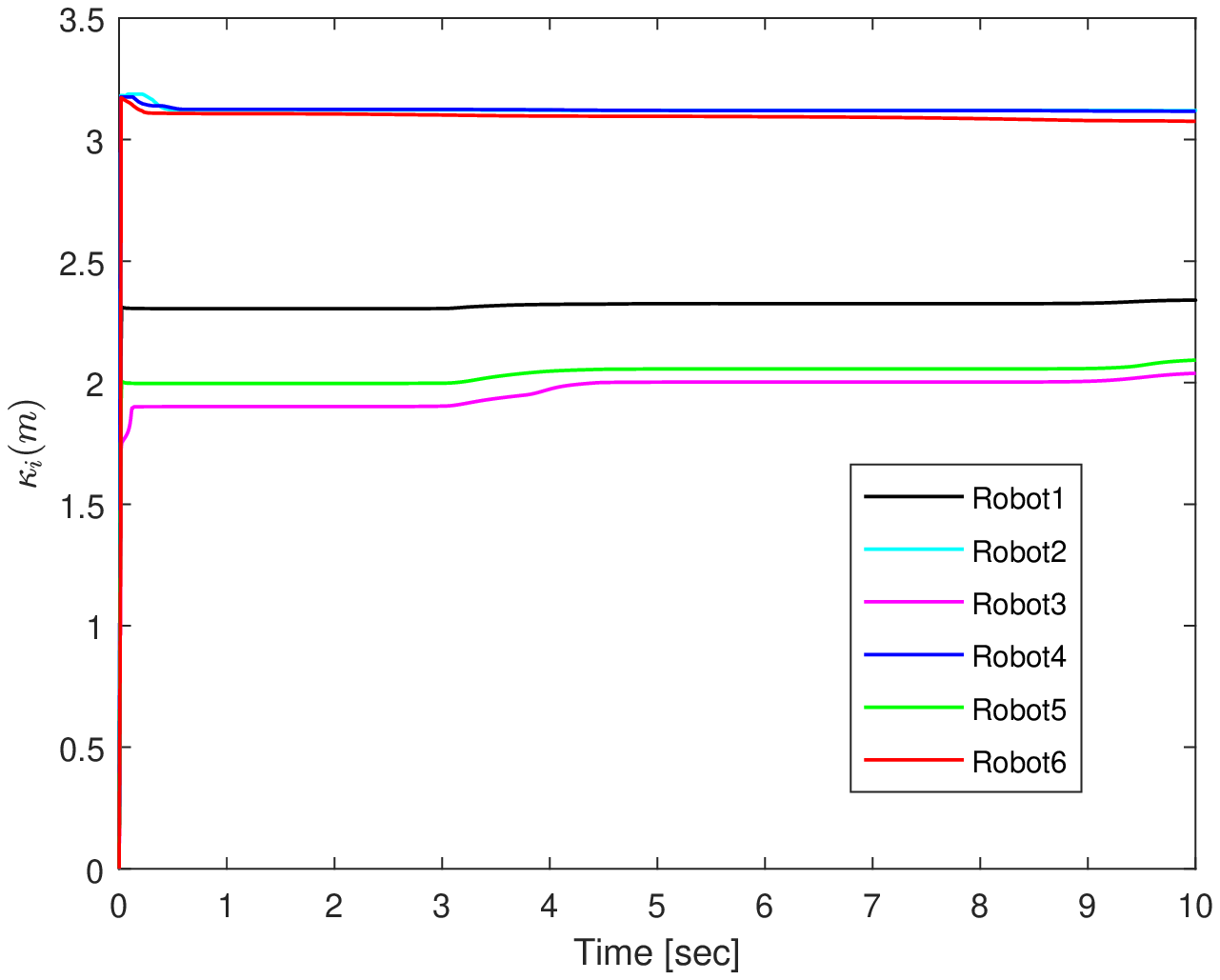}
%		\label{kapa}}
%	\caption{Parameter estimates $ \hat{\kappa}_{i} $ under (\ref{CController}) with (\ref{EEstimator}).	}
%	\label{KapaNresult}
%\end{figure}

\begin{figure*}[t!]

 
	\centering
	\hspace*{-0.5em}
	\begin{tabular}{cccc}	
		\hspace*{-1.5em}		
		\subfloat [$ \hat{a}_{i} $]  {\includegraphics[width=5.0cm,height=3.3cm]{ah.eps} \label{E1}}
		
		\hspace*{-1.5em}		
		\subfloat [$ \hat{\theta}_{i} $ ] {\includegraphics[width=5.0cm,height=3.3cm]{thetahat.eps} \label{E2}} 
		
		\hspace*{-1.5em}		
		\subfloat [$ \hat{\varepsilon}_{i} $] {\includegraphics[width=5.0cm,height=3.3cm]{ksei.eps} \label{E3}} 
		
		\hspace*{-1.5em}		
		\subfloat [$ \hat{\kappa}_{i} $] {\includegraphics[width=5.0cm,height=3.3cm]{kapa.eps} \label{E4}}
		
	\end{tabular}
	\vspace*{-3pt}
	\caption{The trajectories of parameters under the proposed distributed algorithm (\ref{CController}) with (\ref{EEstimator}).}
	\label{Parameters}
\end{figure*}

\vspace*{-15pt} 
\section{Conclusion}
\vspace*{-3pt}
In this paper, we considered the fault-tolerant formation tracking problem for nonlinear multi-agent systems with time-varying actuator faults over the directed graph. We proposed a distributed estimation and control framework by incorporating a distributed nonlinear estimator and a Nussbaum gain technique. Based on the proposed algorithm, the problem was solvable under completely unknown control coefficients. The proposed strategy was applied to task-space cooperative tracking of networked manipulators with unknown kinematics, dynamics, and actuator faults.

\vspace*{-5pt}

\end{document}